\documentclass[format=acmsmall, review=false]{acmart}
\usepackage{acm-ec-26}
\usepackage{placeins}
\usepackage{booktabs} 
\usepackage[ruled]{algorithm2e} 

\SetAlFnt{\small}
\SetAlCapFnt{\small}
\SetAlCapNameFnt{\small}
\SetAlCapHSkip{0pt}
\IncMargin{-\parindent}

\setcitestyle{authoryear}

\title{A Dynamic Equilibrium Model for Automated Market Makers}

\author{Chengqi Zang, Zhenghui Wang, Weitong Zhang}

\begin{abstract}
Automated Market Makers (AMMs) are a central component of decentralized exchanges, yet their equilibrium foundations and microeconomic mechanisms remain incompletely understood. This paper develops a dynamic equilibrium framework for Constant Function Market Makers (CFMMs) that formalizes the strategic interaction between arbitrageurs and liquidity providers (LPs) over time. 
We make three main contributions. First, we derive and empirically validate an intrinsic buy–sell asymmetry in CFMM price impact. Even in the absence of directional price movements, the geometric structure of constant-product AMMs implies systematically different execution costs for buying and selling, a prediction that we confirm using on-chain transaction data. Second, we characterize the optimization problems of arbitrageurs and LPs in closed form, incorporating slippage and fees. In a baseline environment with only informed arbitrageurs, we show that providing liquidity is strictly dominated for LPs: arbitrage-driven price corrections generate negative jump returns that cannot be offset by fees, yielding a degenerate equilibrium with minimal liquidity provision. Third, motivated by empirical evidence, we extend the model to include agent heterogeneity, endogenous gas fees, and time-varying volatility. 
In this extended environment, noise trading, arbitrage races, and execution costs jointly determine LP returns, giving rise to an interior equilibrium in which optimal liquidity provision is non-monotonic in volatility and exhibits a hump-shaped relationship.
Overall, this paper builds a dynamic equilibrium model calibrated on extensive data that characterize the complex interaction between informed arbitrageurs, noise traders, and liquidity providers.
\end{abstract}

\begin{document}


\begin{titlepage}

\maketitle


\end{titlepage}

\vspace{1cm}
\setcounter{tocdepth}{2} 

\section{Introduction}

Automated market makers (AMMs) have become core trading infrastructure in decentralized finance (DeFi). 
On AMM-based decentralized exchanges (DEXs), liquidity is supplied endogenously by liquidity providers (LPs) who deposit two assets into a smart contract and earn a fee from each swap. 
In 2023, Uniswap---the largest AMM DEX---processed average daily volume of \$1.3B and cumulative volume of nearly \$1.65T~\cite{adams2024dontletmevslip}, and by 2024 the AMM-dominated DEX sector reached roughly \$14.4B in total value locked~\cite{defillama2024tvl}. 
Despite this scale, the economic foundations of liquidity provision in AMMs remain poorly understood.

The central economic puzzle is straightforward to state.
In constant-product AMMs, LPs passively absorb all incoming order flow while prices are continuously pulled toward an external reference price—typically formed on centralized exchanges—by arbitrage.
This exposes LPs to adverse selection and impermanent loss  from informed trading.
Nevertheless, liquidity is provided at scale and persists across market regimes.
Understanding why LPs provide liquidity, and how this can be sustained in equilibrium, remains an open question.

A growing theoretical literature has made important progress toward answering this question.
Recent work studies AMM price dynamics, arbitrage behavior, and LP wealth evolution under various informational and institutional assumptions, providing tractable equilibrium characterizations and useful benchmarks.
For example, \cite{milionis2023automated}~characterize arbitrage profits and stationary price dispersion in the presence of LP fees; \cite{routledge2024automated}~analyze static equilibria with informed and uninformed traders; and \cite{aoyagi2021coexisting}~study liquidity provision in environments where AMMs coexist with limit order books.
These papers establish a rigorous starting point for equilibrium analysis of AMMs.

At the same time, most existing models rely on exogenous trading primitives—in particular, a stable flow of uninformed or “noise” trades that compensates LPs for adverse selection.
As a result, while these frameworks generate internally consistent equilibria, they leave open whether their key assumptions align with observed on-chain behavior.

We first develop a dynamic equilibrium model with a constant-product AMM, proportional LP fees, and only profit-maximizing informed arbitrageurs trading against an exogenous reference price~(\S4).
This baseline model delivers two implications.
First, the AMM pricing rule generates a structural asymmetry between buy-side and sell-side arbitrage.
Second, fee revenue alone is insufficient to offset adverse selection and price impact caused by informed arbitrageurs.

We then take these predictions to the data~(\S5).
Using transaction-level data from multiple Uniswap pools, we empirically verify the model’s predicted buy–sell asymmetry in arbitrage behavior.
By classifying transactions into profitable and unprofitable by comparing the executing price to an outside reference price, we found that across all pools we study, a majority of swaps are ex-post unprofitable relative to fee-adjusted centralized exchange prices.
In some pools, over 60\% of transactions fall into this category, and most pools exhibit rates near 80\%.
While many of these unprofitable trades are small in size, a nontrivial fraction are large, making it implausible to interpret them as a stable background flow noise trading.
Moreover, these ex-post unprofitable trades are not randomly distributed.
We show that centralized exchange price volatility predicts the volume of ex-post unprofitable AMM trades and gas fees.
Together, these patterns point to a volatility-driven feedback mechanism linking (failed)arbitrage activity, congestion, and transaction costs.

Motivated by these findings, we extend the baseline model along three dimensions~(\S6).
(i) variable CEX price volatility, (ii) gas fee driven by the CEX price volatility, and (iii) a micro-founded arbitrage race to explain the unevenly distributed unprofitable transactions: when arbitrage opportunity arrives, multiple arbitrageurs attempt to exploit the same price discrepancies, but only the earliest transactions capture the spread and lost arbitrager trade against an already-adjusted pool thus resulted in a expost unprofitable transaction.
One important result from the extended model is that it restores an interior solution for liquidity provision: optimal LP supply is \textbf{hump-shaped} in volatility--an intermediate volatility balances the cost from informed price correction and return from real noise traders and overrun arbitrages.

\section{Literature Review}
Our study contributes to the general understanding of AMM mechanisms in decentralized finance. Blockchain's general economic implications and analysis have been comprehensively reviewed by literature like~\cite{abadi2018blockchain} ~\cite{chen2021brief} ~\cite{harvey2021defi}, especially~\cite{lehar2025decentralized}, which collected all Uniswap's data since its launch and conduct a well-rounded empirical analysis of the market stability of the protocol. 

For more specific components of AMM, from the perspective of LPs~\cite{milionis2022automated} studies the cost incurred by adverse selection of informed traders to LPs and~\cite{aoyagi2020liquidity} studies the optimal liquidity provision between competing LPs.

Both LP and arbitrageur's problems were studied separately under various assumptions in the past literature. From the LP's perspective, ~\cite{goyal2023finding} assumed that LPs have a static belief about future asset prices, and established a model of optimal trading functions conditioning on the beliefs; whereas ~\cite{fan2021strategic} utilized neural networks to represent LPs' behavior across different price slots and analyzed the optimal strategies under the assumption transaction costs (e.g., slippage and transaction fees), ~\cite{park2021conceptual} argue that the AMM pricing rule facilitates inefficient trading, such as sandwich attacks, and may not be aligned with the optimality for LPs. On the other hand, arbitrageurs' optimal behaviors were also extensively studied; with the assumption of fixed transaction costs, ~\cite{angeris2022optimal} derives arbitrageur's optimal routing scheme when facing multi-asset and multi-pool market conditions; ~\cite{angeris2021note} proposed a CFMM privacy attack model assuming that arbitrageurs can access marginal prices, trade verification criteria, and reserve adjustments without incurring transaction fees and LP fees. 

Several studies are closely related to our work. ~\cite{aoyagi2021coexisting} analyzes a theoretical model of coexisting market structures of a limit-order book (LOB) and an AMM and investigate their interactions in terms of liquidity. They show that fluctuations in liquidity in the AMM can lead to a positive spillover effect on liquidity in the LOB market and the coexisting environment leads to a hump-shaped reaction of liquidity supply in the AMM to asset volatility; ~\cite{routledge2024automated} formulated and solved a static equilibrium model with LPs and informed and uninformed traders, pointed out that the optimal wealth allocation depends on the share of informed and uninformed traders in the market; where ~\cite{milionis2023automated} studies a stationary distribution of price ratios between the AMM price and an CEX price taking into account LP fees, and the LP's wealth dynamics are characterized with the same fashion as Loss-versus-rebalancing strategy in ~\cite{milionis2022automated}, but including LP fees and non-linear gas fees, which induces much richer and realistic dynamics.

Unlike prior works that analyze LPs and arbitrageurs in isolation or together but with simplified assumptions, this paper advances the existing literature by introducing a comprehensive dynamic equilibrium framework that simultaneously accounts for both LPs' strategies, LP-fees, and gas fee dynamics. 
this paper provides a  theoretical model of market equilibrium between heterogeneous participants in constant function market maker~(CFMM) automated market makers. 
Moreover, this framework is distinctively micro-founded through extensive real on-chain data from AMMs operating on CFMM mechanisms, allowing us to capture practical market dynamics that previous theoretical models have overlooked. This integration of empirical grounding with theoretical modeling offers novel insights into price efficiency, liquidity provision incentives, and the impact of transaction costs that significantly extend the current understanding of AMMs.

\section{Institutional Background and Data}

\subsection{Centralized Exchanges and AMM-DEX.}

\paragraph{Centralized Exchange (CEX).}
Assume that there is an observable centralized exchange with market price $P_t^A, P_t^B$, denominated in stablecoin (USDT, USDC, etc.) for assets $A$ and $B$ that follow independent GBM where the arbitrageur can trade assets $A$ and $B$ freely at the price $P^A_t, P^B_t$ at time $t$ without a transaction fee, also we assume that arbitrageurs anchor their valuations of cryptocurrencies to the CEX prices, i.e., economically, the supply and demand clears at CEX and AMMs operate on a CFMM constrant; according to a report by Bank for International Settlements ~\cite{bic2023crypto}, CEXs like Binance and Coinbase aggregate ~75\% of trading volume for major assets (BTC, ETH), making their prices the default reference for most investors; our choice of centralized exchange (CEX) prices as the anchor for price discovery is supported by empirical evidence from market microstructure analysis. Using high-frequency data from Ethereum mainnet and major CEXs, we conduct Granger causality tests between Binance ETH/USDT prices and on-chain prices. 
The results show strong evidence that CEX prices Granger-cause on-chain prices , while the reverse causality is not statistically significant.  These findings align with the market's microstructure where arbitrageurs, constrained by transaction confirmation times, typically observe price discrepancies from CEX feeds before executing on-chain trades. This empirical evidence supports our modeling choice of using CEX prices as the primary reference for arbitrageur behavior and market equilibrium dynamics. 

\paragraph{Automated Market Maker (AMM)}
The decentralized exchange where arbitrageurs can trade AMMs with Liquidity Pools such that the reserve for assets A and B are $R^A_t$ and $R^A_t$, which satisfies the constant function market-making rule, such that the feasible set of reserves $\mathcal{C}$ is defined as
$$
\mathcal{C} \left\{(R^A_t, R^B_t) \in \mathbb{R}_{+}^2 : \quad f(R^A_t, R^B_t)=k\right\}
$$
where $f: \mathbb{R}_{+}^2 \rightarrow \mathbb{R}$ is referred to as the bonding function or invariant, and $k \in \mathbb{R}$ is a constant. In other words, the feasible set is a level set of the bonding function. The pool is defined by a smart contract that allows an agent to transition the pool reserves from the current state $\left(R^A_t, R^B_t\right) \in \mathcal{C}$ to any other point $\left(R_t^{A'}, R^{B'}_t\right) \in \mathcal{C}$ in the feasible set, so long as the agent contributes the difference $\left(R^{A'}_t-R^A_t, R^{B'}_t-R^B_t\right) = \left(\Delta A, \Delta B\right)$ to the pool. 
This paper specifically focuses on Uniswap V2 pool, therefore the bonding function is given by $f(x, y) = x y$. Denote the pool price as $Q_t = \frac{R^B_t}{R^A_t}$.
Additionally, trading in AMM incurs a fixed proportional fee $\gamma$ that is paid to LP, the rule follows the same setting as ~\cite{milionis2022automated} ~\cite{adams2020uniswap} ~\cite{adams2021uniswapv3},

\subsection{Data}
\label{subsec:data}

We combine on-chain transaction data from major AMM-based decentralized exchanges with off-chain price data from centralized exchanges (CEXs) to study arbitrage behavior, noise trading, and liquidity provision incentives.

\paragraph{On-chain AMM data.}
We collect transaction-level data from six large constant-product AMM pools across multiple token pairs on Uniswap and Binance-AMM, including ETH/USDT and BNB/USDT. 
For each swap, we observe traded quantities of both assets, pre- and post-trade pool reserves, LP fee parameters, transaction timestamps, and gas usage and gas prices.

\paragraph{Centralized Exchange prices.}
For each AMM pool, we obtain high-frequency price data from the most liquid corresponding CEX trading pair.
CEX prices are from major exchanges including Binance, Bybit, Bitget, and OKX, which serve as the external reference price for arbitrage and are synchronized to on-chain timestamps.

\section{Baseline Model}
\label{sec:baseline}
In this section, we will introduce our baseline model formulations and solve for the equilibrium with liquidity provider (LP) and representative arbitrageurs. 

\paragraph{Market and Asset Price Dynamics}
Fix a filtered probability space that satisfies the usual conditions $\left(\Omega, \mathcal{F},\left\{\mathcal{F}_t\right\}_{t \geq 0}\right)$.  
Consider two risky assets A and B in the continuous-time setting such that $t \in \mathbb{R}_{+}$, and we consider two kinds of market where arbitrageurs can trade assets, the AMM and CEX.

We assume the price changes follows geometric Brownian motion
$$
\begin{aligned}
& d P^A_t=\mu_A P^A_t d t+\sigma_A P^A_t d W_t \\
& d P^B_t=\mu_B P^B_t d t+\sigma_B P^B_t d W_t
\end{aligned}
$$ 
To simplify the calculation, since we have assumed that both market price processes follow independent GBMs, we can apply the \textit{change of numéraire} trick to the two assets, and use Ito's quotient rule to treat, w.l.o.g., asset $B$ as numéraire, and we define the relative price of asset A to asset B as $P_t = \frac{P^A_t}{P^B_t}$, such that we denote the relative price dynamic $\frac{dP_t}{P_t}$ by
$$
\begin{aligned} \frac{d P_t}{P_t} & =\frac{d \left(P^A_t / P^B_t\right)}{\left(P^A_t / P^B_t\right)}=\left(\mu_A-\mu_B+\sigma_B\left(\sigma_B-\sigma_B\right)\right) d t+\left(\sigma_A-\sigma_B\right) d B_t \\ 
& =\mu d t+\sigma d B_t
\end{aligned}
$$

\paragraph{LP Fee Structure with Numeraire.} We denote the fee in units of log price by $\gamma$ as in~\cite{milionis2023automated}. In particular, when the agent purchases the asset $A$ from the pool amd sell assets to the pool, the total fee charged is
$$
\left(e^{+\gamma}-1\right)|\Delta B| \qquad \left(1-e^{-\gamma}\right)|\Delta B|
$$
respectively.
Throughout the paper, we will assume that the fee pays out to LPs and do not enter the pool, thus do not change the invariant $K_t = R^A_t R^B_t$ before and after the trade. 
This treatment is different from Uniswap V2 implementation where fees are compounded to the pool, for an explanation of why those two mechanism are the same in economic terms, see Remark~\ref{remark:why_paid_out}.

\subsection{Arbitrageur's Optimization Problem}

\paragraph{Arbitrageur's Problem.}
With the market setting above, we are ready to state the first core problem of this paper, the arbitrageur's optimization problem. 
Suppose that the trade is completed in infinitesimal time, and denote the amount of $A_t, B_t$ after trade as $A'_t, B'_t$. Firstly, we state the static objective function for the arbitrageur
\begin{equation}
\begin{aligned}
\max W^{A}_t &\quad \text{s.t. }  \\
 & A'_t=A_t- \Delta A \geq 0 \\ 
 & B'_t=B_t-\Delta B(\Delta A) \geq 0 \\
 & (R^A_t + \Delta A)(R^B_t + \Delta B) = R^A_t R^B_t 
 \end{aligned}
\end{equation}
The first two constraints are solvency constraint for the arbitrageur, and the third is the constraint for AMM pool.

The AMM relative price of asset $A$ to $B$ is determined by the reserve of the pool, which is given by $Q_t = \frac{R^B_t}{R^A_t}$, and since the LP fee is included we can characterize the action region for the arbitrageurs to trade.  (1) If $P_t>Q^{t-} e^{\gamma}$, there is an opportunity to buy asset $A$ (swap $B$ for $A$) from the pool and sell in the outside centralized exchange, and (2) If $P_t < Q^{t-} e^{-\gamma}$, there is an opportunity to sell asset $A$ (swap $A$ for $B$) in the AMM and buy from the centralized exchange. 

Though similar results of action and inaction regions for arbitrageurs have been explored in ~\cite{milionis2023automated}, slippage is not taken into account due to their goal of computing a stationary equilibrium of the CEX-AMM relative price process, and they implicitly assume that traders have infinite wealth and the pool is infinitely deep; when the assumption is dropped, we have the following proposition.

\begin{proposition}
\label{prop:optimal_swap}
Assume that the reserve pool is always enough for any transaction and neither asset can be traded until the reserve is zero ($R^A_t > 0 \bigcap R^B_t > 0$). Suppose that the arbitrageurs only aims to maximize the instantaneous return, then the optimal swapping of asset $A$ and $B$ under case 1 and case 2 is given below
For \textbf{case 1}, the optimal decision is given by 
$$
\Delta A_{b}^*=\max \left\{\Delta A_{\max }, \Delta A_{\text{aff}}\right\}
$$
where $\Delta A_{\max }=\frac{R^B_{t-} e^{\gamma}}{P_t} - R^A_{t-}$ and $\Delta A_{\text{aff}}=-\frac{B_tR^A_{t-}}{R^B_{t-} e^{\gamma}+B_t}$. \\
For \textbf{case 2}, the optimal decision is given by 
$$
\Delta A_{s}^*= \min\{\sqrt{\frac{k e^{-\gamma}}{P_t}}-R^A_{t-}, A_t\}
$$ 
\end{proposition}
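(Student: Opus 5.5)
I would treat each case as a one-dimensional concave maximization of the arbitrageur's instantaneous profit, valued at the CEX price $P_t$ in units of the numéraire $B$. The steps are: write down the profit function, find the interior critical point, and then project onto the feasible set given by the solvency constraints.

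\emph{Setting up the objective.} The pool constraint $(R^A_{t-}+\Delta A)(R^B_{t-}+\Delta B)=k$ with $k=R^A_{t-}R^B_{t-}$ gives $\Delta B(\Delta A)=\frac{k}{R^A_{t-}+\Delta A}-R^B_{t-}$. In case 1 the arbitrageur removes $A$ from the pool, so $\Delta A<0$ in pool terms, and pays $|\Delta B|$ plus the fee $(e^{\gamma}-1)|\Delta B|$. The instantaneous profit is therefore
\[
\pi_b(\Delta A)= -P_t\,\Delta A - e^{\gamma}\,\Delta B(\Delta A).
\]
In case 2 the arbitrageur deposits $A$ and receives $|\Delta B|$ net of the fee factor $e^{-\gamma}$, so
\[
\pi_s(\Delta A)= -P_t\,\Delta A - e^{-\gamma}\,\Delta B(\Delta A),\qquad \Delta A>0.
\]
Since $\Delta B$ is convex in $\Delta A$ (it is $k/(R^A+\Delta A)$ up to a constant), both objectives are strictly concave on the domain $R^A_{t-}+\Delta A>0$. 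The standing assumption that reserves stay strictly positive guarantees we remain in this domain. Consequently, the constrained optimum is the unconstrained critical point clipped to the feasible interval.

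\emph{First-order conditions.} Setting $\pi'=0$ gives $P_t = e^{\pm\gamma}\,k/(R^A_{t-}+\Delta A)^2$. That is, the arbitrageur trades until the post-trade pool price, adjusted by the fee, equals $P_t$. In case 2 this gives $R^A_{t-}+\Delta A=\sqrt{k e^{-\gamma}/P_t}$, which is exactly the first term of $\Delta A_s^*$. The solvency constraint $A'_t=A_t-\Delta A\ge 0$ caps the trade at $A_t$, and concavity then yields the $\min$.

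In case 1 the same computation should give the stated $\Delta A_{\max}$. I would check this against the paper's parametrization: the stated form $\frac{R^B_{t-}e^{\gamma}}{P_t}-R^A_{t-}$ looks like the linearized or price-matching version, so I would reconcile it by expressing the condition through $Q$ and $P_t$. The budget constraint $B'_t=B_t-e^{\gamma}\Delta B\ge 0$ gives $\frac{k}{R^A+\Delta A}-R^B\le B_t e^{-\gamma}$. Solving this for $\Delta A$ gives the affordability bound $\Delta A_{\text{aff}}=-\frac{B_tR^A_{t-}}{R^B_{t-}e^{\gamma}+B_t}$, up to the placement of the fee factor. Because $\Delta A$ is negative here, the least negative admissible value is the binding one, which explains why the case 1 result is a $\max$ rather than a $\min$.

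\emph{Main obstacle.} I expect the hard part to be the sign and fee bookkeeping in case 1. Three things must be reconciled: the direction convention for $\Delta A$ (pool-side versus trader-side), where the fee multiplier enters the budget constraint, and the precise form of $\Delta A_{\max}$ as stated. I would first fix the convention $\Delta A<0$ meaning $A$ leaves the pool, then rederive both bounds. If the stated $\Delta A_{\max}$ differs from the square-root FOC solution $\sqrt{ke^{\gamma}/P_t}-R^A_{t-}$, I would verify whether the paper intends the marginal-price-matching condition $Q'=P_te^{-\gamma}$ written in reserve form, and adopt that interpretation consistently.
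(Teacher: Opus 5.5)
The gap is in case~1, and it cannot be closed the way you propose. Your first-order condition gives $R^A_{t-}+\Delta A=\sqrt{ke^{\gamma}/P_t}$. Your fallback, matching the post-trade price to $P_te^{-\gamma}$, gives the same thing once you use the actual post-trade price $Q'=k/(R^A_{t-}+\Delta A)^2$. No consistent marginal-price reading yields $\frac{R^B_{t-}e^{\gamma}}{P_t}-R^A_{t-}$.

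The paper gets that expression from a ``price matching condition'' $P_t=e^{\gamma}R^B_{t-}/(R^A_{t-}+\Delta A)$, which holds the $B$-reserve at its pre-trade level. That ratio is the fee-inclusive \emph{average} execution price $e^{\gamma}\Delta B/|\Delta A|$, not the marginal price. Since $\pi_b(\Delta A)=-\Delta A\,\big(P_t-e^{\gamma}R^B_{t-}/(R^A_{t-}+\Delta A)\big)$, the stated $\Delta A_{\max}$ is the nonzero root of $\pi_b$, i.e.\ the breakeven trade, not the maximizer.

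Concretely, set $x:=e^{\gamma}R^B_{t-}/(P_tR^A_{t-})\in(0,1)$ in case~1. The stated value has $R^A_{t-}+\Delta A=xR^A_{t-}$, whereas the maximizer has $R^A_{t-}+\Delta A=\sqrt{x}\,R^A_{t-}$. Because $x<\sqrt{x}$, the stated trade overshoots and earns zero profit whenever the budget does not bind.

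So the case-1 claim, read as profit maximization, is not provable. What your argument actually establishes is $\Delta A_b^*=\max\{\sqrt{ke^{\gamma}/P_t}-R^A_{t-},\,\Delta A_{\text{aff}}\}$. This is the exact mirror of case~2, and it is the version consistent with the buy-side boundary $Q'=P_te^{-\gamma}$ used in Lemma~\ref{lem:Jpm}. You should commit to that conclusion and flag the discrepancy, rather than plan to adopt the stated formula.

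The rest of your plan is correct and in places cleaner than the paper's. Your concavity argument is right: $\Delta B(\Delta A)=k/(R^A_{t-}+\Delta A)-R^B_{t-}$ is convex, so $\pi''=-2e^{\pm\gamma}k/(R^A_{t-}+\Delta A)^3<0$ on the whole domain. By contrast, the paper's case-2 proof puts $\Delta A$ where $R^A_{t-}$ belongs in $\pi''$ and wrongly concludes the profit is convex for $\Delta A<0$, though its case-2 formula survives.

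Your budget computation also gives $\Delta A_{\text{aff}}$ exactly, with no adjustment of the fee factor needed. The inequality $k/(R^A_{t-}+\Delta A)\le R^B_{t-}+e^{-\gamma}B_t$ rearranges to $\Delta A\ge -B_tR^A_{t-}/(R^B_{t-}e^{\gamma}+B_t)$. This is a lower bound on the (negative) trade, so the constrained maximizer of the concave objective is the larger of the critical point and $\Delta A_{\text{aff}}$.
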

In both cases, the arbitrageur trades in all their depreciating assets until the market price converges, which partly coincides with the mispricing-determined action-inaction region given in~\cite{milionis2023automated}, When buying, the optimal behaviour depends on (1) the holding of numeraire and (2) the reserve of the pool; whereas when selling, the optimal behaviour depends on the (1) reserve for the risky asset and (2) the current holding of the risky asset.  
The full proof of this proposition can be found in Appendix~\ref{app:optimal_swap_baseline}.

Moreover, when trading with CFMM AMMs, there is an intrinsic asymmetry comes from the price impact of buying and selling resulting from the reserve proportion of the pool, and the proposition below gives the conditions of where asymmetry arises from buying and selling at the CFMM when $f(x,y) = xy$. 

\begin{proposition}
\label{prop:b_s_asymmetry}
Given the reserve for two assets in the pool are $R_B, R_A$, and absolute price impact as $|P - P'|$, where $P'$ is the price after the trade occurs, then from trader's perspective, buying induces more slippage than selling, e.g. for same quantity of buying or selling from the traders, buying always cause higher $|P - P'|$.
\end{proposition}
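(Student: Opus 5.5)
The plan is a direct computation on the constant-product invariant. I will fix the pre-trade reserves $R_A, R_B$ with $k = R_A R_B$ and pool price $P = Q = R_B/R_A$. Following the convention of Proposition~\ref{prop:optimal_swap}, a trader who buys (resp.\ sells) a quantity $x>0$ of asset $A$ moves the reserve of $A$ to $R_A - x$ (resp.\ $R_A + x$), with $0<x<R_A$ in the buying case. Since fees are paid out and do not enter the pool, the post-trade reserve of $B$ is $k/(R_A \mp x)$. The post-trade price is then
\[
P'_{b} = \frac{k}{(R_A - x)^2}, \qquad P'_{s} = \frac{k}{(R_A + x)^2}.
\]
The fee $\gamma$ only rescales the numéraire paid or received. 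It does not affect the reserve path, so it can be dropped from the price-impact comparison.

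The key step is to compare $|P'_b - P|$ with $|P'_s - P|$ for the same $x$. Writing $u = x/R_A \in (0,1)$ and factoring out $P = k/R_A^2$, the claim reduces to the elementary inequality
\[
\frac{1}{(1-u)^2} - 1 \;>\; 1 - \frac{1}{(1+u)^2}.
\]
I will prove this by setting $g(u) = (1-u)^{-2} + (1+u)^{-2} - 2$. Then $g(0)=0$, and $g$ is strictly convex with $g'(0)=0$, so $g(u)>0$ for $u\neq 0$. Alternatively, one can expand in series: $g(u) = \sum_{n\ge 1} 2(2n+1)u^{2n} > 0$. Either route is routine.

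The main obstacle is interpretive rather than computational: the meaning of ``same quantity'' must be fixed. If it means the same quantity of $A$, the argument above applies. If it means the same amount of numéraire $B$, say $y$, the comparison is instead
\[
P'_b = \frac{(R_B+y)^2}{k}, \qquad P'_s = \frac{(R_B-y)^2}{k}.
\]
With $v = y/R_B$ this becomes $(1+v)^2 - 1 = 2v + v^2 > 2v - v^2 = 1-(1-v)^2$, and the same conclusion follows. I would state both versions, so the proposition holds under either reading. I would also note that the asymmetry can fail if slippage is measured in log-price or relative to $P'$. This is why the statement specifies absolute impact $|P-P'|$.
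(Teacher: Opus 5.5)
Your proof is correct. It rests on the same mechanism as the paper's proof, the curvature of the constant-product price $Q=k/R_A^2$ in the reserve $R_A$, but you carry it out differently.

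\textbf{What the paper does.} The paper's argument is purely local. It computes $dQ/dR_A=-2k/R_A^3$ and observes that the slope magnitude $2R_B/R_A^2$ grows as $R_A$ shrinks, i.e.\ as $A$ becomes more expensive. So a unit of $A$ removed moves the price more than a unit added. The paper never turns this marginal statement into a comparison of $|P-P'|$ for a trade of finite size.

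\textbf{What your route adds.} Your exact post-trade prices, combined with the convexity of $t\mapsto t^{-2}$, are the integrated version of that observation. Indeed, $P'_b-P=\int_{R_A-x}^{R_A}2ks^{-3}\,ds$ and $P-P'_s=\int_{R_A}^{R_A+x}2ks^{-3}\,ds$ integrate a strictly decreasing function over adjacent intervals of equal length. As a result, your argument:
\begin{itemize}
\item proves the claim for every trade size $0<x<R_A$;
\item gives the exact gap $P\sum_{n\ge1}2(2n+1)u^{2n}$;
\item covers the num\'eraire-quantity reading, which the paper does not consider.
\end{itemize}
The paper's derivative computation gives only the intuition, not the finite-trade inequality.

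\textbf{Two precisions on your side remarks.}

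First, fees drop out of your num\'eraire reading only because you take $y$ to be the net change in $R_B$. Suppose instead ``same quantity'' meant the trader's gross outlay on a buy versus gross receipt on a sell. Then the paper's paid-out fee rule puts $e^{-\gamma}y$ into the pool on a buy but removes $e^{\gamma}y$ on a sell. The first-order impacts become $2e^{-\gamma}v$ versus $2e^{\gamma}v$, which reverses the inequality for small $v$. So ``fees can be dropped'' is safe for equal $A$ quantities and for equal net $B$, but not in general.

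Second, your remark that the asymmetry can fail in log price holds only under the num\'eraire reading. For equal $A$ quantities the log impacts are $-2\log(1-u)$ for a buy and $2\log(1+u)$ for a sell. Since $-2\log(1-u)>2\log(1+u)$, buying still dominates. What reverses the $A$-quantity comparison is normalizing by $P'$, which gives $2u-u^2$ for the buy versus $2u+u^2$ for the sell.
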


For full proof of the proposition see Appendix~\ref{app:proof_for_b_s_asymmetry}, and see Figure~\ref{fig:ETH_asymmetry}--\ref{fig:BNB_asymmetry} for empirical confirmation on different CFMM pools.

\subsection{LP's Dynamic Optimization Problem}
The reason we consider a dynamic problem for LPs are two folds, firstly, the solution to an LP's static problem is trivial due to impermanent loss which has been discussed extensively in past literature and research reports~\cite{milionis2022automated, loesch2021impermanentlossuniswapv3} ; secondly, it is more realistic to consider LP's problem for a longer period with the presence of one-time deposit handling fee. 
Therefore, we formulate a dynamic optimization problem for LPs.
We assume that LPs allocate their wealth of risky asset $A$ between AMM liquidity provision and direct holding, which we will denote liquidyity provison by $R_A$ and direct holding by $A^{out}$.
Before delving into the main part of the proof, we first solve for LP's wealth dynamic and then discuss LP's optimal wealth allocation under different utility forms. Firstly, we define the log variables
\[
p_t:=\log P_t,\qquad q_t:=\log Q_t,\qquad k_t:=\log K_t,
\]
where $P_t$ is the outside (CEX) relative price (num\'eraire $B$) and $Q_t=R_{B,t}/R_{A,t}$ is the AMM marginal price.
The CEX price follows a GBM:
\[
\frac{dP_t}{P_t}=\mu\,dt+\sigma\,dB_t
\quad\Longleftrightarrow\quad
dp_t=\Big(\mu-\tfrac{1}{2}\sigma^2\Big)dt+\sigma\,dB_t.
\]

\paragraph{Constant product in logs.}
Under the CFMM invariant $R_{A,t}R_{B,t}=K_t$ and $Q_t=R_{B,t}/R_{A,t}$, the reserves are
\[
R_{A,t}=\exp\!\Big(\tfrac{1}{2}(k_t-q_t)\Big),\qquad
R_{B,t}=\exp\!\Big(\tfrac{1}{2}(k_t+q_t)\Big).
\]

\paragraph{No-arbitrage band and asymmetric correction arrivals.}
Let $X_t:=p_t-q_t=\log(P_t/Q_t)$. The $\gamma$-fee implies a no-arbitrage region $X_t\in[-\gamma,\gamma]$.
We model boundary corrections by two counting processes $(N_t^-,N_t^+)$ with intensities $(\lambda_t^-,\lambda_t^+)$, we will also refer to $\lambda^{\pm}$ as adverse selection intensity throughout the paper:
$dN_t^-=1$ represents a buy-side correction (arbitrageurs buy $A$ from the AMM) and $dN_t^+=1$ a sell-side correction.
After a correction,
\[
q_t = p_t-\gamma \ \ \text{if } dN_t^-=1,
\qquad
q_t = p_t+\gamma \ \ \text{if } dN_t^+=1.
\]

\paragraph{Borrowing account.}
Let $M_t=\exp(\int_0^t r_s^{*}ds)$ be the exogenous money-market account, $dM_t=r_t^{*}M_tdt$.
Let $D_t$ denote LP debt, evolving as
\[
dD_t=r_t^{*}D_tdt + dR^{LP}_{B,t},
\]
where $dR^{LP}_{B,t}$ is the num\'eraire flow injected/withdrawn at LP liquidity update times.

\paragraph{Wealth process.}
Define LP wealth in num\'eraire units:
\[
W_t := A^{out}_t P_t + R_{A,t}P_t + R_{B,t} - D_t,
\]

\begin{proposition}[LP Wealth Dynamics]
\label{prop:log_wealth_dynamics}
The wealth process satisfies the jump--diffusion decomposition
\begin{equation}
\label{eq:wealth_dynamics}
dW_t
=
(A^{out}_t+R_{A,t-})\,dP_t
-r_t^{*}D_t\,dt
+\Delta W_t^-\,dN_t^- +\Delta W_t^+\,dN_t^+,
\end{equation}
where the arbitrage-correction jump sizes are
\[
\Delta W_t^-=
2\exp\Big(\tfrac{k_{t-}+p_t+\gamma}{2}\Big)
-\Big(P_tR_{A,t-}+e^\gamma R_{B,t-}\Big),
\]
\[
\Delta W_t^+=
2\exp\Big(\tfrac{k_{t-}+p_t-\gamma}{2}\Big)
-\Big(P_tR_{A,t-}+e^{-\gamma} R_{B,t-}\Big).
\]
Equivalently, using compensated processes $d\widetilde N_t^\pm:=dN_t^\pm-\lambda_t^\pm dt$,
\begin{align*}
dW_t
&=
(A^{out}_t+R_{A,t-})\mu P_t\,dt
+(A^{out}_t+R_{A,t-})\sigma P_t\,dB_t
-r_t^{*}D_t\,dt \\
&\quad
+(\lambda_t^-\Delta W_t^-+\lambda_t^+\Delta W_t^+)\,dt
+\Delta W_t^-\,d\widetilde N_t^-+\Delta W_t^+\,d\widetilde N_t^+.
\end{align*}
\end{proposition}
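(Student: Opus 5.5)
The plan is to split $W_t$ into a continuous part, generated by the diffusion of $P_t$ and the accrual of debt between correction times, and a pure-jump part, generated by the arbitrage corrections $dN_t^\pm$. The compensated form then follows by adding and subtracting $\lambda_t^\pm\Delta W_t^\pm\,dt$.

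\textbf{Continuous part.} Between correction times $X_t$ lies strictly inside $(-\gamma,\gamma)$, so no swap occurs and $R_{A,t}, R_{B,t}$ (hence $k_t, q_t$) are constant. Assume also that $A^{out}_t$ is piecewise constant, changing only at LP update times. I would then apply It\^o's product rule to $W_t = (A^{out}_t + R_{A,t})P_t + R_{B,t} - D_t$. The holdings have finite variation, so there is no cross-variation term, and the $dP_t$ term is $(A^{out}_t+R_{A,t-})\,dP_t$. The debt equation contributes $-r_t^*D_t\,dt - dR^{LP}_{B,t}$. At LP update times the num\'eraire injected into the pool or to $A^{out}$ is exactly the increment of $D_t$, so these flows cancel in $W_t$: a deposit or withdrawal is wealth-neutral. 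This gives the first two terms of \eqref{eq:wealth_dynamics}. Substituting $dP_t=\mu P_t\,dt+\sigma P_t\,dB_t$ then yields the drift and Brownian terms of the second display.

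\textbf{Jump part.} At a buy-side correction ($dN_t^-=1$), $k$ is preserved, because fees are paid out and do not enter the pool (as stated after the fee structure). The marginal price resets to $q_t=p_t\mp\gamma$. Using $R_{A}=\exp(\tfrac12(k-q))$ and $R_B=\exp(\tfrac12(k+q))$, I will compute the post-trade reserves and the fee paid to the LP. The fee is $(e^{\gamma}-1)|\Delta B|$ for a buy and $(1-e^{-\gamma})|\Delta B|$ for a sell. I will then form $\Delta W^\pm_t = (\text{post-trade LP value incl.\ fee}) - (P_tR_{A,t-}+R_{B,t-})$. For $f(x,y)=xy$, the pool value at a marginal price $e^{q}$ is $e^{q}R_A+R_B = 2\exp(\tfrac12(k+q))$. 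This is the source of the $2\exp(\tfrac{k_{t-}+p_t\pm\gamma}{2})$ term. The fee term, which scales $R_{B,t-}$ by $e^{\pm\gamma}$ once the $|\Delta B|$ contributions are collected, supplies the remaining $-e^{\pm\gamma}R_{B,t-}$ piece. Since $A^{out}$ and $D$ do not jump at correction times, these pool changes are the whole jump of $W$.

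\textbf{Main obstacle.} I expect the hardest step to be the jump bookkeeping, specifically matching the stated closed form exactly. Three choices must be fixed consistently:
\begin{itemize}
\item the price at which the post-trade position is marked: $P_t$ versus the boundary price $Q_t=P_te^{\pm\gamma}$;
\item which sign of $\gamma$ corresponds to $dN^-$;
\item whether the fee $\propto|\Delta B| = |R_B'-R_{B,t-}|$ is combined with the pool value so that the $R_{B,t-}$ term picks up the factor $e^{\pm\gamma}$.
\end{itemize}
My first attempt will mark the pool at the post-correction marginal price, so that the value takes the $2\sqrt{K Q}$ form, and add the fee. I will then simplify and check the result against the formula. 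If it does not match, I will recheck the sign convention by verifying $\Delta W^\pm_t\le 0$ when $\gamma=0$, which should recover the usual impermanent-loss jump. The final step is routine: write $dN^\pm_t=d\widetilde N^\pm_t+\lambda^\pm_t\,dt$, which gives the compensated decomposition.
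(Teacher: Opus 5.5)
Your continuous part is fine. Reserves are constant between correction arrivals, the holdings have finite variation so no bracket term appears, and LP deposits financed through $D_t$ are wealth-neutral, which is why no $dR^{LP}_{B,t}$ term survives.

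\textbf{The gap is in the jump step.} It is the first of your three ``choices'': marking the post-correction pool at its new marginal price cannot give the stated formula. On a buy-side correction $Q'=P_te^{-\gamma}$, so $Q'R_A'+R_B'=2\exp(\tfrac{k_{t-}+p_t-\gamma}{2})$, which has the opposite sign of $\gamma$ from $\Delta W_t^-$. Adding the fee $(e^{\gamma}-1)(R_B'-R_{B,t-})$ and subtracting $P_tR_{A,t-}+R_{B,t-}$ gives $(1+e^{\gamma})\exp(\tfrac{k_{t-}+p_t-\gamma}{2})-(P_tR_{A,t-}+e^{\gamma}R_{B,t-})$. This falls short of $\Delta W_t^-$ by $(P_t-Q')R_A'=(e^{\gamma}-1)R_B'$.

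Your attribution of terms also cannot be right. The fee contributes an $R_B'$ term as well as the $R_{B,t-}$ term. So if the pool value alone supplied the $2\exp(\cdot)$ term, a stray $(e^{\gamma}-1)R_B'$ would be left over. In any case, mixing marks contradicts the definition of $W_t$, which values $R_A$ at $P_t$ both before and after the trade. Your fallback check at $\gamma=0$ cannot detect the error, because then $Q'=P_t$ and the two marks coincide.

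\textbf{The fix} is the computation in the paper's proof of Lemma~\ref{lem:Jpm}: mark at $P_t$ throughout. On the buy side $R_A'=\exp(\tfrac{k_{t-}-p_t+\gamma}{2})$ and $R_B'=\exp(\tfrac{k_{t-}+p_t-\gamma}{2})$, so at the boundary $P_tR_A'=e^{\gamma}R_B'=\exp(\tfrac{k_{t-}+p_t+\gamma}{2})$. The post-trade value including the fee is then $P_tR_A'+e^{\gamma}R_B'-(e^{\gamma}-1)R_{B,t-}=2\exp(\tfrac{k_{t-}+p_t+\gamma}{2})-(e^{\gamma}-1)R_{B,t-}$. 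Subtracting $P_tR_{A,t-}+R_{B,t-}$ gives $\Delta W_t^-$. The factor $2$ thus comes from the fee's $R_B'$ component, not from the identity $e^{q}R_A+R_B=2\exp(\tfrac{k+q}{2})$. The sell side is analogous, with $P_tR_A'=e^{-\gamma}R_B'=\exp(\tfrac{k_{t-}+p_t-\gamma}{2})$ and fee $(1-e^{-\gamma})(R_{B,t-}-R_B')$.

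\textbf{Where the fee goes.} The fee is paid \emph{out of} the pool. Your claim that $A^{out}$ and $D$ do not jump, so that pool changes are the whole jump, would therefore exclude the fee from $\Delta W_t^{\pm}$ under the literal definition $W_t=A^{out}_tP_t+R_{A,t}P_t+R_{B,t}-D_t$. You must say where the fee is credited, for example as a downward jump of $D_t$ at the correction time or in a fee account added to $W_t$. The paper leaves this implicit, but your argument needs it for the stated jump sizes to be jumps of $W$.
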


\subsection{Liquidity Provision with CFMM Correction jumps}
\label{subsec:crra_hjb_cfmm_jumps}

Let $Q_{t-}$ denote the AMM price immediately before a correction and define the pre-correction mispricing ratio
\[
m \;:=\; \frac{P}{Q_{-}}.
\]
Let $M^{-}$ (resp. $M^{+}$) denote the random variable $m$ conditional on a buy-side correction (resp. sell-side correction),
so that $M^{-}>e^{\gamma}$ and $M^{+}<e^{-\gamma}$ almost surely.

\begin{lemma}[Per-event proportional wealth return under fee-extracted CFMM correction]
\label{lem:Jpm}
Consider a constant-product pool with pre-correction reserves $(R_A,R_B)$ and pre-correction price $Q_-:=R_B/R_A$,
and suppose LP fees are \emph{paid out} and not compounded into reserves, so that the invariant $K:=R_AR_B$ is unchanged by swaps.
Fix CEX price $P$ and define $m:=P/Q_-$.

\begin{enumerate}
\item (Buy-side correction; $m>e^{\gamma}$ a.s. ) If arbitrage trading pushes the AMM to the fee-adjusted boundary
$Q'=P e^{-\gamma}$, then the total (mark-to-market plus paid-out fee) value of the LP's AMM position, measured at price $P$,
jumps by the multiplicative factor $1+J^{-}(m;\gamma)$ where
\[
J^{-}(m;\gamma)
:=\frac{2e^{\gamma/2}\sqrt{m}-m-e^{\gamma}}{m+1}
= -\frac{\big(\sqrt{m}-e^{\gamma/2}\big)^{2}}{m+1}
\;\;<\;0.
\]

\item (Sell-side correction; $m<e^{-\gamma}$ a.s.) If arbitrage trading pushes the AMM to the fee-adjusted boundary
$Q'=P e^{+\gamma}$, then the total (mark-to-market plus paid-out fee) value of the LP's AMM position, measured at price $P$,
jumps by the multiplicative factor $1+J^{+}(m;\gamma)$ where
\[
J^{+}(m;\gamma)
:=\frac{2e^{-\gamma/2}\sqrt{m}-m-e^{-\gamma}}{m+1}
= -\frac{\big(\sqrt{m}-e^{-\gamma/2}\big)^{2}}{m+1}
\;\;<\;0.
\]
\end{enumerate}
\end{lemma}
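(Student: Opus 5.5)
The plan is a direct accounting computation: evaluate the LP's AMM position at the fixed CEX price $P$ immediately before and after the correction, add the paid-out fee, and normalize by the pre-correction value. Everything is expressed in units of $R_AQ_-$ so that only $m=P/Q_-$ remains. Since $R_B=R_AQ_-$, the pre-correction value is
\[
V_-=PR_A+R_B=R_AQ_-(m+1).
\]
Because fees are paid out, the invariant $K=R_AR_B$ is unchanged. For any post-trade price $Q'$ the reserves are therefore $R_A'=\sqrt{K/Q'}$ and $R_B'=\sqrt{KQ'}$. I will use the identity $\sqrt{KP}=R_A\sqrt{Q_-P}=R_AQ_-\sqrt m$ throughout. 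This mirrors the reserve representation already used for Proposition~\ref{prop:log_wealth_dynamics}, and the final jump should agree with $\Delta W_t^{\mp}$ there.

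\textbf{Buy-side case.} Take $Q'=Pe^{-\gamma}$. The post-trade mark-to-market value is
\[
PR_A'+R_B'=\sqrt{KP}\,\bigl(e^{\gamma/2}+e^{-\gamma/2}\bigr).
\]
The arbitrageur deposits $\Delta B=R_B'-R_B=\sqrt{KP}e^{-\gamma/2}-R_B>0$ of num\'eraire. The LP additionally receives the fee $(e^{\gamma}-1)\Delta B$. Adding these, the $e^{-\gamma/2}$ terms combine, and the total post-correction value becomes
\[
2e^{\gamma/2}\sqrt{KP}-(e^{\gamma}-1)R_B=R_AQ_-\bigl(2e^{\gamma/2}\sqrt m-e^{\gamma}+1\bigr).
\]
Subtracting $V_-$ and dividing by $V_-$ yields exactly $J^-(m;\gamma)$. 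The numerator $2e^{\gamma/2}\sqrt m-m-e^{\gamma}$ equals $-(\sqrt m-e^{\gamma/2})^2$ by completing the square. This is strictly negative because $m>e^{\gamma}$ forces $\sqrt m\neq e^{\gamma/2}$.

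\textbf{Sell-side case.} The computation is symmetric with $Q'=Pe^{\gamma}$. The pool pays out $|\Delta B|=R_B-\sqrt{KP}e^{\gamma/2}>0$, and the fee is $(1-e^{-\gamma})|\Delta B|$. Adding the fee to the mark-to-market value gives
\[
2e^{-\gamma/2}\sqrt{KP}+(1-e^{-\gamma})R_B.
\]
Normalizing as before gives $J^+$. The analogous completed square with $m<e^{-\gamma}$ gives strict negativity.

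The only real obstacle is getting the fee bookkeeping right. One has to check that the fee base is the num\'eraire amount $|\Delta B|$ actually swapped, with the conventions $(e^{\gamma}-1)|\Delta B|$ for buys and $(1-e^{-\gamma})|\Delta B|$ for sells. One also has to check the signs of $\Delta B$ in each case, which follow from $m>e^{\gamma}\Rightarrow Q'>Q_-$ and $m<e^{-\gamma}\Rightarrow Q'<Q_-$. Once the fee term is written in terms of $R_B$ and $\sqrt{KP}$, the cancellation of the $e^{\mp\gamma/2}$ cross terms is what produces the clean closed form. I would double-check that cancellation against the jump sizes in Proposition~\ref{prop:log_wealth_dynamics}.
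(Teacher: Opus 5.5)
Your proposal is correct and follows essentially the same route as the paper: hold $K$ fixed, place the post-trade reserves at $Q'=Pe^{\mp\gamma}$, add the paid-out fee on the num\'eraire leg (using the same $(e^{\gamma}-1)$ and $(1-e^{-\gamma})$ conventions), and normalize by $V_-=R_B(m+1)$; your factor $R_AQ_-$ is just $R_B$. Your planned cross-check also works: the post-minus-pre values you obtain, $2e^{\pm\gamma/2}\sqrt{KP}-PR_A-e^{\pm\gamma}R_B$, match $\Delta W_t^{\mp}$ in Proposition~\ref{prop:log_wealth_dynamics} exactly, with $e^{+\gamma/2},e^{+\gamma}$ for the buy side ($\Delta W_t^-$) and $e^{-\gamma/2},e^{-\gamma}$ for the sell side ($\Delta W_t^+$).
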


\paragraph{Wealth dynamics with reduced-form jump representation.}
Let $N_t^{-}$ and $N_t^{+}$ be counting processes with intensities $\lambda^{-}$ and $\lambda^{+}$.
At buy-side (resp. sell-side) corrections, the LP's AMM-deployed wealth experiences proportional jump returns
$J^{-}(M^{-};\gamma)$ (resp. $J^{+}(M^{+};\gamma)$) from Lemma~\ref{lem:Jpm}.
Write the controlled wealth process as Equation~\ref{eq:wealth_dynamics}. We have the following theorem

\begin{theorem}[Liquidity Provision under Purely Informed Trading]
\label{thm:crra_hjb_cfmm}
Fix $\rho>0$ and CRRA utility $u(w)=\frac{w^{1-\eta}}{1-\eta}$ with $\eta>0$, $\eta\neq 1$.
Consider an LP who chooses the AMM allocation $\{\theta_t\}_{t\ge0}$ to solve
\[
V(w)=\sup_{\{\theta_t\}}\ \mathbb{E}\!\left[\int_0^\infty e^{-\rho t}\frac{W_t^{1-\eta}}{1-\eta}\,dt\ \Big|\ W_0=w\right],
\]
subject to the wealth dynamics Equation~\ref{eq:wealth_dynamics} and portfolio constraints
$\theta_t\in[\theta_{\min},\theta_{\max}]$.

Assume that all order flow against the AMM is generated by informed arbitrageurs,
so that LPs face only adverse-selection risk and no exogenous noise trading.
Let the admissible set be
\[
\Theta_{\mathrm{adm}}
:=
\Big\{\theta\in[\theta_{\min},\theta_{\max}]:
1+\theta J^{-}(M^{-};\gamma)>0\ \text{a.s.},\ 
1+\theta J^{+}(M^{+};\gamma)>0\ \text{a.s.}
\Big\},
\]
and define the reduced objective
\begin{align}
\label{eq:reduce_objective}
\Phi(\theta)
&:=
\mu-r^{*}\theta-\frac{\eta}{2}\sigma^2
+\frac{\lambda^{-}}{1-\eta}\,
\mathbb{E}\!\left[(1+\theta J^{-}(M^{-};\gamma))^{1-\eta}-1\right]
+\frac{\lambda^{+}}{1-\eta}\,
\mathbb{E}\!\left[(1+\theta J^{+}(M^{+};\gamma))^{1-\eta}-1\right].
\end{align}

Then the following results hold:
\begin{enumerate}
\item (\emph{Weak Dominance of Minimal Liquidity.})
The function $\Phi$ is strictly concave on $\Theta_{\mathrm{adm}}$.
Its unique maximizer satisfies
\[
\theta^{*}=\theta_{\min}.
\]
That is, when all traders are informed arbitrageurs, allocating additional
capital to the AMM is weakly dominated by direct holding of the risky asset.

\item (\emph{Knife-edge Interior Condition.})
An interior optimum $\theta^{*}\in\mathrm{int}(\Theta_{\mathrm{adm}})$
can occur only if $\theta^{*}$
satisfies the first-order condition
\begin{align}
\label{eq:FOC_baseline}
r^{*}
&=
\lambda^{-}\,
\mathbb{E}\!\left[J^{-}(M^{-};\gamma)\,(1+\theta^{*} J^{-}(M^{-};\gamma))^{-\eta}\right]
+\lambda^{+}\,
\mathbb{E}\!\left[J^{+}(M^{+};\gamma)\,(1+\theta^{*} J^{+}(M^{+};\gamma))^{-\eta}\right].
\end{align}
Such an interior solution is non-generic and disappears under arbitrarily
small increases in adverse selection intensity $\lambda^{\pm}$.

\item (\emph{Value Function.})
The value function takes the CRRA form
\[
V(w)=\frac{C}{1-\eta}\,w^{1-\eta},
\qquad
C=\frac{1}{\rho-(1-\eta)\,\Phi(\theta^{*})},
\]
provided the transversality condition $\rho>(1-\eta)\Phi(\theta^{*})$ holds.
\end{enumerate}
\end{theorem}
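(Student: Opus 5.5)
The plan is the standard Merton-type homothetic reduction. I will first rewrite the controlled wealth dynamics of Proposition~\ref{prop:log_wealth_dynamics} in proportional form. With $\theta_t$ the fraction of wealth deployed in the AMM, the diffusive part is the return on the risky exposure, the financing cost is $-r^{*}\theta_t W_t\,dt$, and at correction times Lemma~\ref{lem:Jpm} gives $W_t=W_{t-}(1+\theta_t J^{\pm}(M^{\pm};\gamma))$. The conditions defining $\Theta_{\mathrm{adm}}$ are exactly what keeps $W$ strictly positive, so that CRRA utility is well defined.

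Next I will write the HJB equation
\[
\rho V=\sup_{\theta}\Big\{(\mu-r^{*}\theta)wV'+\tfrac12\sigma^2w^2V''+\sum_{\pm}\lambda^{\pm}\mathbb{E}\big[V(w(1+\theta J^{\pm}))-V(w)\big]\Big\}
\]
and substitute the ansatz $V(w)=\frac{C}{1-\eta}w^{1-\eta}$. Every term is then proportional to $w^{1-\eta}$, and the braces reduce to $C\,\Phi(\theta)$ with $\Phi$ as in \eqref{eq:reduce_objective}. This gives $\rho=(1-\eta)\Phi(\theta^{*})+1/C$, i.e.\ item~3. I will then close with a standard verification argument: It\^o's formula for jump processes applied to $e^{-\rho t}V(W_t)$, a supermartingale bound for arbitrary admissible $\theta$ and a martingale for $\theta^{*}$, and the transversality condition $\rho>(1-\eta)\Phi(\theta^{*})$ to kill the terminal term.

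For items~1 and~2 I will analyze $\Phi$ directly. Differentiating under the expectation, which is justified by dominated convergence on compact subsets of $\Theta_{\mathrm{adm}}$, gives
\[
\Phi'(\theta)=-r^{*}+\sum_{\pm}\lambda^{\pm}\mathbb{E}\big[J^{\pm}(1+\theta J^{\pm})^{-\eta}\big],\qquad
\Phi''(\theta)=-\eta\sum_{\pm}\lambda^{\pm}\mathbb{E}\big[(J^{\pm})^2(1+\theta J^{\pm})^{-\eta-1}\big].
\]
Since $J^{\pm}\neq0$ a.s. (because $M^{-}>e^{\gamma}$ and $M^{+}<e^{-\gamma}$ strictly), $\Phi''<0$, which gives strict concavity and uniqueness of the maximizer. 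The key sign fact is that $J^{\pm}<0$ a.s. by Lemma~\ref{lem:Jpm}. Hence every jump term in $\Phi'$ is strictly negative, and with $r^{*}\ge0$ we get $\Phi'<0$ on all of $\Theta_{\mathrm{adm}}$. So $\Phi$ is strictly decreasing and $\theta^{*}=\theta_{\min}$.

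Item~2 then follows from the same computation. Any interior optimum must satisfy $\Phi'(\theta^{*})=0$, which is \eqref{eq:FOC_baseline}. That equation can hold only when $r^{*}<0$, with the negative financing rate exactly balancing the strictly negative jump term. Because the jump term is strictly decreasing in each $\lambda^{\pm}$, any small increase in $\lambda^{\pm}$ breaks the equality and pushes the optimum back to the boundary, which is the non-genericity claim.

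The main obstacle is reconciling the drift term: as written, $\Phi$ carries $\mu-\frac{\eta}{2}\sigma^2$ with no dependence on $\theta$. I will interpret this as saying that the LP's total risky exposure $A^{out}+R_A$ is held fixed, so that reallocating between AMM and direct holding changes only the financing cost and the jump exposure. I will state this assumption explicitly when passing from \eqref{eq:wealth_dynamics} to the proportional form. A secondary issue is the sign of $r^{*}$: the dominance claim is clean for $r^{*}\ge0$, and for $r^{*}<0$ the statement has to be read as the knife-edge case of item~2.
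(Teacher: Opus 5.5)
Your treatment of items 1 and 3 follows the paper's proof. You use the same generator (risky exposure $w\sigma$, financing $-r^{*}\theta w$, multiplicative jumps $1+\theta J^{\pm}$), the CRRA ansatz, and a supermartingale/martingale verification closed by transversality. You also get $\Phi''<0$ from $\eta>0$ and $J^{\pm}\neq 0$, and $\Phi'(\theta)<-r^{*}\le 0$ from $J^{\pm}<0$. The paper assumes $r^{*}>0$ and $\theta_{\min}\ge 0$; the latter is unnecessary, as your argument implicitly shows, since $(1+\theta J^{\pm})^{-\eta}>0$ on $\Theta_{\mathrm{adm}}$.

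Two small fixes are needed there. First, your displayed HJB is missing the running reward $\frac{w^{1-\eta}}{1-\eta}$ inside the supremum. Without it the ansatz gives only $\rho=(1-\eta)\Phi(\theta^{*})$ and leaves $C$ undetermined, so your relation $\rho=(1-\eta)\Phi(\theta^{*})+1/C$ requires that term. Second, to obtain exactly $\mu-\frac{\eta}{2}\sigma^{2}$ the risky exposure must equal wealth, $(A^{out}_t+R_{A,t})P_t=W_t$, as in the paper's generator. Merely holding it fixed at a fraction $\pi$ of wealth would give $\pi\mu-\frac{\eta}{2}\pi^{2}\sigma^{2}$ and a different $C$.

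The genuine gap is your argument for the non-genericity clause of item 2, and with it your reading of the case $r^{*}<0$ as a knife-edge. You correctly note that \eqref{eq:FOC_baseline} forces $r^{*}<0$, because its right-hand side is strictly negative. The next step fails, however. If $r^{*}<0$ and $\theta^{*}$ is interior, strict concavity gives $\Phi'(\theta_{\min})>0>\Phi'(\theta_{\max})$ with strict inequalities. Since $\Phi'$ is affine in $\lambda^{\pm}$, these inequalities survive any sufficiently small increase of $\lambda^{\pm}$. Breaking the equality at the old $\theta^{*}$ does not push the optimum to the boundary; by the implicit function theorem the root simply moves continuously,
\[
\frac{d\theta^{*}}{d\lambda^{\pm}}=-\frac{\mathbb{E}\big[J^{\pm}(1+\theta^{*}J^{\pm})^{-\eta}\big]}{\Phi''(\theta^{*})}<0 .
\]
So interior optima occur on an open set of parameters. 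For example, take a mark law with $J^{-}\equiv-\tfrac12$, $\lambda^{+}=0$, $\lambda^{-}=1$, $r^{*}=-1$, $\eta=2$ and $[\theta_{\min},\theta_{\max}]=[0,1.5]$. This gives the interior optimum $\theta^{*}=2-\sqrt{2}$, which persists under small perturbations of $\lambda^{-}$. Hence for $r^{*}<0$, item 1 simply fails and the clause ``disappears under arbitrarily small increases in $\lambda^{\pm}$'' is false.

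The paper's proof does not establish that clause either. It shows only that an interior optimum must satisfy the FOC, and that $\theta^{*}=\theta_{\min}$ when $r^{*}>0$. The sound version is to make $r^{*}\ge 0$ an explicit hypothesis, with at least one $\lambda^{\pm}>0$ for strictness when $r^{*}=0$. Under it your computation shows no interior optimum exists at all, so item 2 is vacuous. You should then drop the non-genericity claim rather than try to prove it.
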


The theorem formalizes a benchmark that when all trades against the AMM are generated by informed arbitrageurs, liquidity provision exposes the LP exclusively to adverse selection:
arbitrageurs trade precisely when the AMM price is stale relative to the external market, while fee income is insufficient to compensate for the resulting impermanent loss and financing cost.
As a consequence, the LP optimally minimizes exposure to the AMM,
allocating only the lowest admissible level of liquidity.
In this sense, an AMM populated solely by informed traders does not sustain economically meaningful liquidity provision in equilibrium.

This theoretical prediction stands in sharp contrast with observed AMM markets, which exhibit substantial and persistent liquidity even during periods of elevated volatility.
The discrepancy suggests that real-world order flow cannot be explained by informed arbitrage alone.
In the next section, we provide empirical evidence that a large fraction of AMM volume is unprofitable ex-post and co-moves strongly with external volatility, motivating a model with heterogeneous traders in which noise trading and failed arbitrage attempts play a central role in sustaining liquidity.

\section{Empirical Evidence: Noise Trading, Volatility, and Gas}
\label{sec:empirical}

\subsection{Data Preprocessing}
\paragraph{Time window and aggregation.}
The dataset spans multiple market regimes, including both low- and high-volatility periods.
To study state dependence, transactions are aggregated into fixed-length time intervals when computing volatility, trading volume, and gas-related metrics. 
For market with lower transaction frequency, e.g.~ETH/USDT pool, we aggregate data in 10-30mins windows; whereas for other more active market~POL/USDT pool, the aggregation window is 3mins.

\paragraph{Trade classification.}
For each on-chain swap, we compute the ex-post payoff relative to the contemporaneous CEX price, net of LP fees implied by the AMM pricing rule and gas costs.
The classification is outcome-based and does not rely on assumptions about traders’ ex-ante intentions.
We analyzed on-chain transactions from major AMMs and identified a significant amount of noise trading behaviors. For data pre-processing, we distinguish profitable and unprofitable transactions from the same set of real on-chain data by comparing the on-chain price and the transaction-fee-adjusted CEX prices. The on-chain price is calculated by reserve change in the pool, which incorporates LP fees already, e.g.~for ETH/USDT pool, the on-chain price is  
$$
Q_t = \left|\frac{\Delta USDT}{\Delta ETH}\right|
$$
and the transaction fee-adjusted CEX price is given by 
$$
P_t = \begin{cases}
\max(P_t^i) \times (1 - e^{-\gamma})  \text{ if } \Delta ETH < 0 \\
\min (P_t^i) \times (e^{\gamma}-1) \text{ if } \Delta ETH > 0
\end{cases}
$$
Here we use 0.1\% for Binance and Bitget.  Since we consider four major CEXs, including ByBit, Bitget, Binance, and OKX, the final $P_t$ is a combined, transaction-fee-adjusted price. 
When $\Delta ETH < 0$, we say the traders take a long position~(because they are buying from AMM) and a short position if $\Delta ETH > 0$. This $\Delta ETH$ is from the perspective of AMM, which is consistent with our previous formulation. 

Based on our definition of arbitrageurs, we filter out the profitable transactions, such that a transaction is profitable if and only if
$$
\begin{cases}
\Delta A(Q_t - P_t e^{\gamma}) - g_t > 0 \text{ if } \Delta ETH > 0 \\
\Delta A( P_t e^{-\gamma} - Q_t) - g_t > 0 \text{ if } \Delta ETH < 0
\end{cases}
$$
where $g_t=g(v_t)$ is the gas fee. The metrics we defined can be found in 

\paragraph{Descriptive variables.}
All descriptive measures used in the analysis are defined as follows.
The \emph{price gap} is the relative deviation between the AMM price and the CEX price immediately before a transaction.
The \emph{profitability rate} is the fraction of trades within a given time window or volatility bucket that are profitable.
\emph{Volatility} is measured as realized volatility of the CEX price over rolling windows and discretized into volatility buckets.
\emph{Noise trading volume} is defined as the total traded volume of ex-post unprofitable trades within a time interval.
\emph{Gas metrics} include the average gas price and total gas expenditure per aggregation window, capturing network congestion and priority costs.

\subsection{Buy and Sell Asymmetry}
Examining on-chain data across multiple asset pairs, we document a pronounced asymmetry between buying and selling activity on AMMs during periods in which the CEX price exhibits no directional movement—i.e., when prices are locally stable or approximately mean-reverting. In these drift-neutral regimes, \emph{sell-side trades} (e.g., ETH
→
→USDT) account for both a larger share of profitable transactions and a greater fraction of total profitable volume than buy-side trades.

This pattern is consistent with the price-impact asymmetry implied by the constant-product geometry and formalized in Proposition~\ref{prop:b_s_asymmetry}. When prices fluctuate within a band, profitable opportunities are scarce and small. Because buying from the AMM moves the pool toward the scarce reserve and induces steeper marginal price impact, buy-side trades require larger mispricing excursions to remain profitable net of slippage, fees, and execution costs. As a result, buy-side trades are disproportionately filtered out when conditioning on realized profitability. In contrast, sell-side trades face milder effective price impact for comparable notional adjustments, allowing a larger fraction of such trades to remain profitable in low-volatility environments.

Figure~\ref{fig:ETH_asymmetry} illustrates this mechanism for the ETH/USDT pool. Across volatility regimes with negligible price drift, ETH->USDT trades exhibit a consistently higher profitability rate than USDT->ETH trades. Because directional price movements are absent by construction, this asymmetry cannot be attributed to trends or inventory rebalancing motives. Instead, it reflects a selection effect induced by the asymmetric curvature of the AMM pricing function: conditional on small mispricing shocks, sell-side trades are more likely to survive the fee- and slippage-adjusted profitability threshold. 
Similar results obtain across other markets. Figure~\ref{fig:BNB_asymmetry} reports analogous patterns for the BNB/USDT pool on the BNB chain, indicating that the buy–sell asymmetry is a robust feature of Uniswap V2–style constant-product AMMs rather than an artifact of asset-specific dynamics or transient price movements.

\subsection{Profitable v.s. Unprofitable Trades}
\label{subsec:profitable_unprofit}

\paragraph{Trader Heterogeneity}
The second key empirical finding is the strong evidence of heterogeneous traders.
Figure~\ref{fig:ETH_Noise} reveals a striking disparity between profitable and unprofitable trades that strongly supports the hypothesis of heterogeneous traders. 
Profitable trades likely representing arbitrageur transactions constitute only 10\% of all transactions (1,913 out of 19,069 total trades) but account for approximately 55.68\% of the total transaction value (\$33.17M out of \$59.57M). 
From Figure~\ref{fig:ETH_asymmetry}, we can see that as the market volatility increases, the frequency of profitable arbitrage rises substantially but still still makes up for a small portion of all transactions.

The volume distribution shows clear separation between the two groups: profitable trades have substantially higher average transaction volumes, clustering predominantly in the \$10,000-\$100,000 range, where unprofitable trades are much more numerous (17,156 trades) but have dramatically smaller average transaction sizes (\$1,539 mean, \$249 median), with most concentrated between \$100-\$10,000.

This pattern closely mirrors traditional financial markets where informed investors~(often with superior information and larger capital bases) generate most returns through larger volume trades, while retail/uninformed traders execute more frequent but smaller and less profitable transactions, as recorded by~\citet{duong2009order} and~\citet{barardehi2022institutional}. 
The log-scale density plot visually confirms this bifurcation, showing the profitable trade distribution shifted significantly rightward toward higher volumes; Besides the ETH/USDT pool, other similar AMM V2 pools all exhibit this disparity between profitable and unprofitable transactions. More phenomenon can be detected in other pools with major asset/USDT pair, see Figure~\ref{fig:BNB_Noise} and~\ref{fig:POL_Noise}. 

\paragraph{Comovement between CEX Price Volatility and Unprofitable Trading Volume}
Another important empirical finding is that unprofitable trading volume in AMMs systematically co-moves with CEX price volatility.
Table~\ref{tab:unprofitable_vol_comove} reports Granger causality tests showing that volatility in centralized exchanges significantly predicts subsequent unprofitable trading volume on AMMs across all three asset pairs, with particularly strong and robust effects for ETH/USDT and POL/USDT. The predictive relationship persists across multiple lags, indicating that periods of elevated market volatility are followed by an increase in unprofitable on-chain trades.

This pattern is difficult to reconcile with a brute-force classification that interprets all unprofitable transactions as originating from exogenous noise traders, since genuine noise trading—driven by liquidity needs, operational frictions, or non-informational motives—should not systematically intensify in response to external price volatility. Instead, the evidence suggests that a substantial fraction of unprofitable trades are endogenously generated by the trading mechanism itself.

Motivated by this observation, in Section~\ref{subsec:race_game} we microfound this phenomenon through an informed arbitrageur race game. In the model, multiple arbitrageurs compete to exploit transient price discrepancies between the AMM and centralized exchanges. While early movers earn positive profits, late-arriving arbitrageurs may have their orders executed after the price has already been corrected, leading to negative realized profits. This mechanism provides a structural explanation for why unprofitable trading volume increases precisely during high-volatility periods, without attributing it solely to noise traders.

\subsection{Gas Fee Dynamics}
The third key focus of our empirical analysis is modelling the dynamics of gas fees. Gas fees serve as a proxy for transaction congestion on the blockchain, significantly influencing the behavior of both traders and LPs. Recognizing the critical role of gas fees in determining optimal trader behavior, we explicitly endogenize these fees into our current AMM model. To achieve this, we perform causal and regression analyses using gas prices and readily accessible on-chain transaction data. 

A natural choice to model gas prices is as a function of the price gap between AMM and one of the major CEXs. Under the rationality assumption, when the price gap exceeds the LP-fee-adjusted price, arbitrageurs will trade with AMM until the prices converge. This activity, in theory, should drive on-chain traffic congestion, thereby increasing gas fees. However, our empirical analysis of actual data challenges this assumption: 
both the price gap and the price gap ratio\footnote{Calculated as $\frac{\text{price gap}}{\text{CEX price}}$.} exhibit minimal explanatory power concerning gas price variations. 
Furthermore, the Granger causality test examining the relationship from price gap and price gap ratio to gas prices yields statistically insignificant p-values.

Table \ref{tab:regression_and_causality} provides a comprehensive analysis of the determinants of gas prices, revealing notable insights into trader behavior. 
Our results indicate that, compared to price gaps alone, volatility—calculated using historical rolling windows of CEX prices—exhibits significant statistical causality across all considered window length. Additionally, regression analysis confirms that CEX price volatility possesses robust explanatory power regarding variations in gas prices.

Together, the three facts motivate an extended equilibrium framework in which liquidity provision and trading outcomes are shaped not only by informed arbitrage, but also by noise trading, arbitrage races, endogenous gas fees, and time-varying volatility.

\section{Extended Model: Gas Fees, Stochastic Volatility, Persistent Noise Fees, and Arbitrage Races}
\label{sec:extended}

Motivated by the empirical evidence from Section~\ref{sec:empirical}, this section extends the baseline by incorporating four empirically motivated components:
\begin{enumerate}
\item \textbf{Time-varying volatility} governed by a mean-reverting process.
\item \textbf{Gas fees} that respond to congestion and shape an arbitrage \emph{no-arbitrage band};
\item \textbf{Real noise trading} driven by idiosyncratic shock rather than instantaneous profit maximization;
\item \textbf{Overrun arbitrage order flow} driven by arbitrageurs who aim to maximize instantaneous profit but failed to executed the order on time.
\end{enumerate}

The key modeling discipline is to preserve the baseline \emph{equilibrium mapping} that LP chooses $\theta$, arbitrageurs exploit price gap whenever profitable, while allowing (i) non-arbitrage order flow to generate fee revenue, (ii) congestion-sensitive gas costs to create a state-dependent inaction region, and (iii) stochastic volatility to scale both the arrival of profitable opportunities and the intensity of failed overrun arbitrage attempts. 

\paragraph{Gas Fee}
We model gas fee as a function of outside liquidity, we do not impose specific functional form but $g_t = g(v_t)$ such that function $g$ is increasing, continuous, and convex on $[0, \infty]$.

\paragraph{CEX price with stochastic variance.}
We keep the num\'eraire $B$ and model the outside (CEX) relative price $P_t$ under stochastic volatility:
\begin{equation}
\label{eq:sv_price}
\frac{dP_t}{P_t}=\mu\,dt+\sqrt{v_t}\,dB_t
\end{equation}
The variance follows a CIR/Heston-type mean-reverting process
\begin{equation}
\label{eq:sv_var}
dv_t=\kappa(\bar v-v_t)\,dt+\xi\sqrt{v_t}\,dW_t,
\qquad
d\langle B,W\rangle_t=0,
\end{equation}
where $\kappa>0$ is the mean-reversion speed, $\bar v>0$ is the long-run variance, and $\xi>0$ is the volatility-of-variance.

\subsection{Real Noise Traders}
\label{subsec:noise_symmetric_fee}

We assume there exists a continuum of \emph{real} noise traders who demand token $A$ for exogenous, idiosyncratic reasons
(inventory/payment needs, privacy, composability). Their trades are \emph{not} driven by the CEX--AMM mispricing; instead, they depend only on
pool depth and an idiosyncratic preference shock.

\paragraph{Depth proxy.}
Let
\begin{equation}
\label{eq:depth_proxy_sym}
(R_{A,t}R_{B,t})^{\beta}=K_t^{\beta},\qquad \beta>0,
\end{equation}
denote the pool-depth proxy, increasing in AMM liquidity, $\beta$ here captures the effective scaling of overrun arbitrageur and noise-trader behavior with pool size, which is set to be calibrated in future empirical works.

\paragraph{Idiosyncratic shock.}
At each time $t$, a representative noise trader draws an i.i.d.\ preference shock
\begin{equation}
\label{eq:xi_normal}
\xi_{N,t}\sim \mathcal N(0,\sigma_N^2),
\end{equation}
independent of the CEX Brownian motions and of the arbitrage correction processes.

\begin{lemma}[Real Noise Trader Fee Generation]
\label{lem:noise_trader_fee}
A representative noise trader chooses a signed trade size $q\in\mathbb R$ in asset $A$ (positive means net buy of $A$ from the AMM)
by solving the local quadratic problem
\begin{equation}
\label{eq:noise_problem_sym}
\max _{q \in \mathbb{R}} \xi_{N, t} q-\frac{1}{2} \frac{q^2}{K_t^\beta}-g\left(v_t\right) .
\end{equation}
Then the expected instantaneous paid out fee is given by
\begin{equation}
\label{eq:noise_trading_fee}
\mathbb{E}\left[\dot{F}_t^N \mid \mathcal{F}_{t-}\right]=Q_t \cdot\left(e^\gamma-e^{-\gamma}\right) \cdot K_t^\beta \sigma_N \varphi\left(\frac{1}{\sigma_N} \sqrt{\frac{2 g\left(v_t\right)}{K_t^\beta}}\right) . 
\end{equation}
where $\varphi$ is the pdf for standard normal distribution. Moreover, for any fixed $K>0$, if $g'(v)>0$ then
\begin{equation}
\label{eq:noise_fee_v_derivative}
\frac{\partial}{\partial v}\,
\mathbb E[\dot F_t^{N}\mid K_t=K, Q_t]
=
-\,Q_t\cdot (e^\gamma-e^{-\gamma})\cdot \frac{g'(v)}{\sigma_N}\,\varphi\!\big(z(v,K)\big)
\;<\;0.
\end{equation}
If in addition $g(v)\to\infty$ as $v\to\infty$, then $\varphi(z(v,K))\to 0$ and hence
$\mathbb E[\dot F_t^{N}\mid K,Q]\to 0$ as $v\to\infty$.
\end{lemma}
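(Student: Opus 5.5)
The plan is to solve the noise trader's problem \eqref{eq:noise_problem_sym} in closed form, turn it into a participation threshold on the shock $\xi_{N,t}$, and then integrate the fee schedule of the LP Fee Structure paragraph against the normal law in \eqref{eq:xi_normal}.

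\textbf{Step 1: the optimal trade.} Conditional on $\mathcal F_{t-}$, the objective $\xi_{N,t}q-\tfrac12 q^2/K_t^\beta$ is strictly concave in $q$. Its interior optimum is $q^*=K_t^\beta\xi_{N,t}$, which gives gross surplus $\tfrac12K_t^\beta\xi_{N,t}^2$. The gas fee $g(v_t)$ is a fixed cost incurred only if a trade is submitted. I will therefore treat the decision as trade at $q^*$ versus abstain with $q=0$. The trader participates if and only if $\tfrac12K_t^\beta\xi_{N,t}^2>g(v_t)$, that is, $|\xi_{N,t}|>c_t:=\sqrt{2g(v_t)/K_t^\beta}$. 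I set $z(v,K):=c_t/\sigma_N$.

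\textbf{Step 2: the fee per trade.} For a marginal noise trade I value the numéraire leg at the pool price, $|\Delta B|\approx Q_t|q|$, so price-impact corrections inside a single noise trade are neglected. This is the step I regard as the main modelling obstacle, and I will state it explicitly as the local, small-trade approximation underlying the quadratic problem. Under this approximation, a buy ($q>0$) pays $(e^\gamma-1)Q_tq$ and a sell ($q<0$) pays $(1-e^{-\gamma})Q_t|q|$.

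\textbf{Step 3: the expected fee.} Taking expectations over $\xi_{N,t}\sim\mathcal N(0,\sigma_N^2)$ and using its symmetry,
\[
\mathbb E[\dot F^N_t\mid\mathcal F_{t-}]
= Q_tK_t^\beta\big[(e^\gamma-1)+(1-e^{-\gamma})\big]\,
\mathbb E\big[\xi_{N,t}\mathbf 1\{\xi_{N,t}>c_t\}\big].
\]
The bracket equals $e^\gamma-e^{-\gamma}$. The standard truncated-normal identity gives $\mathbb E[\xi\mathbf 1\{\xi>c\}]=\sigma_N\varphi(c/\sigma_N)$. Together these yield \eqref{eq:noise_trading_fee}.

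\textbf{Step 4: monotonicity in $v$.} Fix $K$ and $Q_t$, and differentiate $\sigma_NK^\beta\varphi(z(v,K))$ with respect to $v$. Using $\varphi'(z)=-z\varphi(z)$ and $z^2=2g(v)/(\sigma_N^2K^\beta)$, we get $z\,\partial_v z=g'(v)/(\sigma_N^2K^\beta)$. The chain rule then gives
\[
\partial_v\big[\sigma_NK^\beta\varphi(z)\big]=-\frac{g'(v)}{\sigma_N}\varphi(z),
\]
which is \eqref{eq:noise_fee_v_derivative}. This expression is strictly negative whenever $g'(v)>0$, since $\varphi>0$. At points where $g(v)=0$, so that $z=0$, I will compute the derivative directly from $z^2$, or note that the formula extends by continuity.

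\textbf{Step 5: the limit.} If $g(v)\to\infty$, then $z(v,K)\to\infty$ for fixed $K$. Hence $\varphi(z)\to0$, and the expected fee vanishes.
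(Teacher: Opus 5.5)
Your proposal is correct and follows the paper's proof essentially step for step. Both arguments take the optimizer $q^*=K_t^\beta\xi_{N,t}$, treat gas as a fixed entry cost giving the threshold $|\xi_{N,t}|\ge\sqrt{2g(v_t)/K_t^\beta}$, value the num\'eraire leg of the fee at $Q_t$, and apply the symmetric truncated-normal identity. The differences are that you state explicitly the small-trade approximation $|\Delta B|\approx Q_t|q|$, which the paper uses silently, and that you actually derive the $v$-derivative (via $z^2$, which also covers $g(v)=0$) and the $v\to\infty$ limit, which the paper's proof only restates.
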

The intuition is that real noise traders trade because of idiosyncratic needs rather than expected profit, so their desired trade size is pinned down by a private preference shock rather than by market prices. 
However, market-wide gas fees and pool depth act as participation and execution frictions: higher gas fees truncate small preference-driven trades, while deeper pools attenuate slippage and allow those trades to go through. 
As volatility raises gas fees, fewer noise trades are executed, so fee revenue from genuine noise traders declines and eventually vanishes at high volatility despite unchanged underlying preferences. See~\ref{app:proof_for_noise_trader} for proof.

\subsection{Arbitrageur's Optimization Problem}

Similar to noise traders, with gas fee, arbitrage entry problem becomes: conditional on trading, the optimal swap size coincides with the baseline optimizer since gas is a fixed entry cost, but the arbitrageur \emph{enters} only when the expected realized profit exceeds gas. Thus the optimal policy takes the form.
\[
\Delta A_{t,s}^{*}=
\begin{cases}
\Delta A_{t,s,\text{baseline}}^{*}, & \text{if }\pi_s\!\left(\Delta A_{t,s,\text{baseline}}^{*}\right)>0,\\
0,&\text{otherwise},
\end{cases}
\]
and for selling, the optimal policy is symmetric.
Thus, the intensity of jump event $\lambda^{\pm}$ induced by arbitrageurs now depends on gas fee and thus on the volatility $v_t$, which we write as $\lambda^{\pm}(g_t) =  \lambda^{\pm}(g(v_t))$. 
Moreover, we have that $\lambda^{\pm, \prime}(g(v_t)) < 0$, such that the

\subsection{Arbitrage Race and Overrun Order Flow}
\label{subsec:race_game}
Measured “noise” order flow in AMMs is a composite object. Our profitability-based classification labels as “noise” any trade whose realized PnL is negative relative to fee-adjusted CEX prices. 
However, this set mixes two economically distinct behaviors. Real noise traders transact for venue-specific reasons~(custody, composability, payment/inventory needs, or operational constraints) that are largely unrelated to short-horizon CEX volatility, implying a comparatively stable baseline flow. 
By contrast, periods of elevated volatility generate more frequent and larger transient CEX–AMM price dislocations, attracting a surge of arbitrage entry. 
Because on-chain execution is organized as a priority race, only the first transaction captures the spread; many profit-seeking arbitrage orders execute at a stale pool state or bid away the spread via priority fees, yielding negative ex-post PnL and thus being recorded as “unprofitable”. 
This mechanism explains why the unprofitable component of volume co-moves strongly with volatility as given in the empirical section~\ref{subsec:profitable_unprofit}. Therefore, we introduce the formulation below.

First, \textbf{real} noise traders trade for idiosyncratic reasons unrelated to mispricing (Section~\ref{subsec:noise_symmetric_fee}) and generate fee income that compensates LPs.
Second, we define \textbf{ex-post} unprofitable transactions made by \textbf{arbitrageurs} who submit profit-seeking transactions but are \textbf{overrun} in the priority race: their transactions execute after the AMM state has already been updated by a faster competitor, producing \emph{negative} realized PnL and appearing as ``unprofitable'' in ex-post profitability metrics, we will refer to these traders as overrun arbitrageurs throughout the rest of the paper.

We micro-found this channel via an \textbf{arbitrage race game} and we approximate their price impact by the local quadratic CFMM expansion, which is approximately symmetric across directions at this scale. 
For clarity we present the buy-side event; the sell-side is symmetric, moreover, we assume that overrun flow never push the pool price outside the no-arbitrage band.

Let $\Delta_t$ denote the pre-trade opportunity ``spread'' in price units observed by potential arbitrageurs at time $t$. We model
\[
\Delta_t \mid v_t \sim \mathcal N(0,v_t),
\qquad
|\Delta_t| \text{ is half-normal with scale } \sqrt{v_t}.
\]
Therefore we have 
\[
\mathbb{E}\left[\Delta_t \mid \Delta_t>0, v_t\right]=\sqrt{\frac{2}{\pi}} \sqrt{v_t}
\]
A single opportunity at time $t$ unfolds as:
(i) $\Delta_t$ is realized and observed up to negligible latency;
(ii) each arbitrageur chooses entry and, conditional on entry, a quantity and a priority bid (embedded as an effective per-transaction cost);
(iii) a builder orders transactions by priority, so the first execution updates the AMM and subsequent transactions fill at a stale state.

\begin{proposition}[Race-game Quantity Choice, Entry, and Overrun Volume scaling]
\label{prop:race_overrun}
Consider a buy-side opportunity at time $t$ with spread $\Delta_t>0$ and pool-depth proxy $K_t^{\beta}$, $\beta>0$.
A buy-side attempt of size $q\ge 0$ yields payoffs
\[
\Pi^{\mathrm{win}}(q)=q\Delta_t-\frac{q^2}{2K_t^{\beta}}-g_t,\qquad
\Pi^{\mathrm{lose}}(q)=-\frac{q^2}{2K_t^{\beta}}-g_t.
\]
Arbitrageur $i$ holds a subjective belief $\hat N_i:=\mathbb E[N\mid \mathrm{info}_i]$ about the number of entrants $N$, and we use the
reduced-form win probability $\pi_i\approx 1/\hat N_i$.
Conditional on entry, $i$ submits quantity $q_i\ge 0$ solving
\[
\max_{q_i\ge 0}\ \pi_i q_i\Delta_t-\frac{q_i^2}{2K_t^{\beta}}-g_t,
\]
the total overrun quantity is the sum of overrun arbitrageurs' submitted quantities and the expected overrun volume is scaled by:
\[
L_A=\sum_{i\neq w} q_i^*
=K_t^{\beta}\Delta_t \sum_{i\neq w}\frac{1}{\hat N_i}, \qquad
\mathbb{E}\left[L_A \mid K_t, v_t\right] \propto K_t^\beta \sqrt{v_t}
\]

\end{proposition}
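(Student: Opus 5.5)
The argument has three steps: solve each entrant's quantity problem, aggregate over the losers, and then take conditional expectations using the half-normal law of $\Delta_t$.

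\emph{Step 1: quantity choice.} Fix an entrant $i$ with belief $\hat N_i$ and win probability $\pi_i=1/\hat N_i$. The objective $\pi_i q\Delta_t-\frac{q^2}{2K_t^{\beta}}-g_t$ is strictly concave in $q$, and $g_t$ enters only as a constant. So on a buy-side event with $\Delta_t>0$, the first-order condition $\pi_i\Delta_t-q/K_t^{\beta}=0$ gives the interior maximizer
\[
q_i^*=\frac{K_t^{\beta}\Delta_t}{\hat N_i}\ge 0 ,
\]
and the constraint $q\ge 0$ does not bind. Plugging this back in gives the optimal expected payoff $\frac{K_t^{\beta}\Delta_t^2}{2\hat N_i^2}-g_t$.

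This yields the entry rule. Arbitrageur $i$ enters iff $\Delta_t>\hat N_i\sqrt{2g_t/K_t^{\beta}}$. This is the same truncation structure as in Lemma~\ref{lem:noise_trader_fee}, and it matches the gas-entry rule stated for arbitrageurs above.

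\emph{Step 2: aggregation.} The builder orders transactions by priority. Exactly one entrant $w$ executes against the fresh state. Every other entrant executes its full submitted order against the updated, stale pool. By the standing assumption, overrun flow never pushes the price outside the no-arbitrage band, so these orders are filled in full. Hence
\[
L_A=\sum_{i\neq w}q_i^*=K_t^{\beta}\Delta_t\sum_{i\neq w}\frac{1}{\hat N_i},
\]
which is the first claim.

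\emph{Step 3: expected scaling.} Condition on $(K_t,v_t)$ and take expectations. The scaling $\mathbb E[L_A\mid K_t,v_t]\propto K_t^{\beta}\sqrt{v_t}$ requires two things:
\begin{enumerate}
\item the factor $S:=\sum_{i\neq w}1/\hat N_i$ does not depend on $\Delta_t$ in a way that breaks proportionality;
\item $\mathbb E[\Delta_t\mid\Delta_t>0,v_t]=\sqrt{2/\pi}\,\sqrt{v_t}$, which is already recorded before the proposition.
\end{enumerate}
For the first point, in the reduced form the beliefs are treated as independent of the realized spread, with $\mathbb E[S\mid K_t,v_t]=c$ a constant. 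With consistent beliefs $\hat N_i=N$, we get $S=(N-1)/N$, which is bounded in $[0,1)$. Under that assumption,
\[
\mathbb E[L_A\mid K_t,v_t]=c\sqrt{2/\pi}\,K_t^{\beta}\sqrt{v_t}.
\]

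\emph{Main obstacle.} The hard step is justifying the independence of $S$ and $\Delta_t$. Entry is endogenous: the number of entrants rises with $\Delta_t$ through the threshold in Step 1, and it falls with $g(v_t)$. So $N$, and therefore $S$, is correlated with both $\Delta_t$ and $v_t$.

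I would try two routes, in this order:
\begin{enumerate}
\item State the proportionality under the reduced-form convention that beliefs and entrant counts are exogenous to the realized spread. This is the regime the statement's "$\propto$" implicitly uses.
\item Alternatively, note that $S=(N-1)/N\to 1$ once $N\ge 2$ is moderately large. Then $S$ is approximately constant on the event of a race, and the only dependence left is the half-normal mean of $\Delta_t$, which gives the $\sqrt{v_t}$ scaling.
\end{enumerate}
Gas truncation only rescales the constant in the idealized regime where $g_t$ is small relative to $K_t^{\beta}v_t$. Tracking that correction exactly would replace $\sqrt{v_t}$ by a truncated-normal mean, so I would treat the result as a scaling statement rather than an exact identity.
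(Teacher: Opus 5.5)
Your proposal is correct and follows essentially the same route as the paper: the first-order condition $q_i^*=K_t^{\beta}\Delta_t/\hat N_i$, the entry threshold $\Delta_t\ge \hat N_i\sqrt{2g_t/K_t^{\beta}}$, the sum over losers, and the half-normal mean $\sqrt{2/\pi}\sqrt{v_t}$. The paper handles your ``main obstacle'' as in your second route, absorbing the crowding factor $\sum_{i\neq w}1/\hat N_i$ ($=1-1/N$ under unbiased beliefs) into a reduced-form constant $\kappa_{\mathrm{bel}}\in[1/2,1]$ set to $1$; it does not address the entry--spread correlation or the truncated-normal correction you flag, so your caveats are more careful than the original.
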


\paragraph{Overrun ratios.}
To assist the later derivation of LP's optimal liquidity provision, we use a dimensionless transform to map the overrun volume to ratio.
Let $L^-$ (resp.\ $L^+$) denote the realized overrun order flow (in token $A$ units) executed \emph{after} the winner in a buy-side
(resp.\ sell-side) correction block. Define the dimensionless overrun ratios $U^{\pm}:=\frac{L^{\pm}}{R_A^{b,{\pm}}}\in[0,1)$. 
For buy-side, overrun arbitrageurs remove $A$ so $U^-<1$ ensures solvency of reserves; for sell-side, overrun arbitrageurs add $A$ so $U^+\ge 0$.
The expected overrun ratio scales like
\[
\mathbb{E}\left[U \mid K_t, v_t, P_t\right]=\mathbb{E}\left[\frac{L_A}{R_A^b}\right] \approx \frac{K_t^\beta \sqrt{v_t} }{\sqrt{K_t e^\gamma / P_t}} \propto K_t^{\beta-\frac{1}{2}} \sqrt{v_t} \sqrt{P_t} .
\]

Then in the similar fashion of Lemma~\ref{lem:Jpm}, we provide the proportional wealth return under overrun arbitrageur jumps.

\begin{lemma}[Per-event proportional wealth return under overrun arbitrageur overrun]
\label{lem:Jext}
Consider a constant-product pool with pre-correction reserves $(R_A,R_B)$ and pre-correction price $Q_-=R_B/R_A$.
Assume fees are \emph{paid out} (not compounded), so $K=R_AR_B$ is unchanged by swaps. Fix CEX price $P$ and define $m=P/Q_-$.

\begin{enumerate}
\item \textbf{Buy-side correction ($m>e^\gamma$).}
The winner clears to the fee-adjusted boundary $Q'=P e^{-\gamma}$, and overrun arbitrageurs additionally remove a fraction $u\in[0,1)$ of the boundary
$A$-reserve (i.e., $u=U^-$). Then the total value of the LP's AMM position, measured at price $P$ and including paid-out fees from the
winner \emph{and} overrun arbitrageurs, jumps by the multiplicative factor $1+J_{\mathrm{ext}}^{-}(m,u;\gamma)$ where
\begin{equation}
\label{eq:Jext_minus}
J_{\mathrm{ext}}^{-}(m,u;\gamma)
=
\frac{
e^{\gamma/2}\sqrt{m}\Big((1-u)+\frac{1}{1-u}\Big)\;-\;m\;-\;e^{\gamma}
}{m+1} = \frac{e^{\gamma / 2} \sqrt{m} \cdot \frac{u^2}{1-u}-\left(e^{\gamma / 2}-\sqrt{m}\right)^2}{m+1}
\end{equation}

\item \textbf{Sell-side correction ($m<e^{-\gamma}$).}
The winner clears to $Q'=P e^{+\gamma}$, and overrun arbitrageurs additionally add a fraction $u\in[0,\infty)$ of the boundary $A$-reserve
(i.e., $u=U^+$). Then the total value of the LP's AMM position, measured at price $P$ and including paid-out fees from the winner and overrun arbitrageurs,
jumps by the multiplicative factor $1+J_{\mathrm{ext}}^{+}(m,u;\gamma)$ where
\begin{equation}
\label{eq:Jext_plus}
J_{\mathrm{ext}}^{+}(m,u;\gamma)
=
\frac{
e^{-\gamma/2}\sqrt{m}\Big((1+u)+\frac{1}{1+u}\Big)\;-\;m\;-\;e^{-\gamma}
}{m+1} 
=
\frac{e^{-\gamma / 2} \sqrt{m} \cdot \frac{u^2}{1+u}-\left(e^{-\gamma / 2}-\sqrt{m}\right)^2}{m+1}
\end{equation}
\end{enumerate}
Moreover, $J_{\mathrm{ext}}^{-}(m,0;\gamma)=J^{-}(m;\gamma)$ and $J_{\mathrm{ext}}^{+}(m,0;\gamma)=J^{+}(m;\gamma)$ recover
Lemma~\ref{lem:Jpm}.
Differentiating with respect to $u$ yields
\[
\partial_u J_{\mathrm{ext}}^{-}(m,u;\gamma)
=
\frac{e^{\gamma/2}\sqrt{m}}{m+1}\cdot\frac{u(2-u)}{(1-u)^2}>0,
\qquad
\partial_{uu}J_{\mathrm{ext}}^{-}(m,u;\gamma)
=
\frac{2e^{\gamma/2}\sqrt{m}}{m+1}\cdot\frac{1}{(1-u)^3}>0,
\]
\end{lemma}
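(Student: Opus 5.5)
We prove the lemma by direct accounting: track the LP's pool reserves through the winner's trade and then the overrun trades, add the paid-out fees, and normalize by the pre-correction value. The reduction to Lemma~\ref{lem:Jpm} and the derivatives in $u$ then follow by elementary algebra.

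\textbf{Normalization.} Throughout, $K=R_AR_B$ is fixed and $Q_-=R_B/R_A$. Express every quantity in units of $R_B$, using
\[
PR_A=mR_B,\qquad \sqrt{KP}=R_B\sqrt m .
\]
The second identity holds because $K=R_B^2/Q_-$. The pre-correction value of the AMM position at price $P$ is then $PR_A+R_B=R_B(m+1)$, which is the denominator in both formulas.

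\textbf{Buy side.}
\begin{enumerate}
\item The winner clears the pool to $Q'=Pe^{-\gamma}$. The boundary reserves are $R_A^b=\sqrt{K/Q'}$ and $R_B^b=\sqrt{KQ'}$, so
\[
PR_A^b=R_B\sqrt m\,e^{\gamma/2},\qquad R_B^b=R_B\sqrt m\,e^{-\gamma/2}.
\]
\item Overrun arbitrageurs remove $uR_A^b$ of asset $A$. Since $K$ is preserved, the reserves become $\big((1-u)R_A^b,\;R_B^b/(1-u)\big)$.
\item The winner and the overrun traders all trade in the same direction, paying in $B$. Their fees are therefore additive, at rate $(e^\gamma-1)$ on the cumulative $|\Delta B|$:
\[
R_B^b/(1-u)-R_B=R_B\sqrt m\,e^{-\gamma/2}/(1-u)-R_B .
\]
\item Adding the post-trade mark-to-market value and the fees gives
\[
R_B\sqrt m\big[e^{\gamma/2}(1-u)+e^{-\gamma/2}/(1-u)\big]+(e^\gamma-1)\big(R_B\sqrt m\,e^{-\gamma/2}/(1-u)-R_B\big).
\]
The $1/(1-u)$ terms combine to $R_B\sqrt m\,e^{\gamma/2}/(1-u)$. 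This leaves
\[
R_B\Big[e^{\gamma/2}\sqrt m\big((1-u)+\tfrac{1}{1-u}\big)-(e^\gamma-1)\Big].
\]
\item Subtracting $R_B(m+1)$ and dividing by $R_B(m+1)$ yields the first form of \eqref{eq:Jext_minus}.
\end{enumerate}

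\textbf{Sell side.} The computation is the mirror image.
\begin{enumerate}
\item The winner clears to $Q'=Pe^{\gamma}$, so $PR_A^b=R_B\sqrt m\,e^{-\gamma/2}$ and $R_B^b=R_B\sqrt m\,e^{\gamma/2}$.
\item Overrun traders add $uR_A^b$ of asset $A$, giving reserves $\big((1+u)R_A^b,\;R_B^b/(1+u)\big)$.
\item The fees are $(1-e^{-\gamma})$ times the cumulative $B$ withdrawn, which is $R_B-R_B^b/(1+u)$.
\item The same cancellation of the $1/(1+u)$ terms gives the first form of \eqref{eq:Jext_plus}.
\end{enumerate}

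\textbf{Second forms.} These follow from the identities
\[
(1-u)+\tfrac1{1-u}=2+\tfrac{u^2}{1-u},\qquad (1+u)+\tfrac1{1+u}=2+\tfrac{u^2}{1+u}.
\]
Combined with $2e^{\pm\gamma/2}\sqrt m-m-e^{\pm\gamma}=-(e^{\pm\gamma/2}-\sqrt m)^2$, they produce the squared-gap representation.

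\textbf{Consistency and derivatives.} Setting $u=0$ kills the $u^2$ term and recovers Lemma~\ref{lem:Jpm}. For the derivatives, write
\[
\frac{u^2}{1-u}=-u-1+\frac{1}{1-u}.
\]
Its first derivative is $-1+(1-u)^{-2}=u(2-u)/(1-u)^2$ and its second derivative is $2/(1-u)^3$. Multiplying by the $u$-independent prefactor $e^{\gamma/2}\sqrt m/(m+1)$ gives the stated expressions. Both are positive for $u\in(0,1)$, and the second is also positive at $u=0$.

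\textbf{Main obstacle.} The only delicate point is the fee accounting in the third step of each side. We must charge fees on the \emph{cumulative} $B$-flow of the winner plus the overrun traders, rather than on each trade from a reset state. We must also apply the direction-appropriate rate, $(e^\gamma-1)$ for buys and $(1-e^{-\gamma})$ for sells. Both are justified because the overrun trades have the same sign as the winner's trade and fees are paid out without changing $K$. With this convention, the $1/(1\mp u)$ terms combine exactly as the stated formulas require.
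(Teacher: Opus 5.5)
Your proof is correct and follows the same route as the paper's appendix argument: push the reserves to the fee-adjusted boundary, apply the $K$-preserving overrun $\big((1\mp u)R_A^b,\,R_B^b/(1\mp u)\big)$, charge the direction-appropriate fee on the cumulative $B$-flow, and normalize by $R_B(m+1)$. Your write-up is cleaner than the paper's in four places: the paper proves only the buy side; it writes $(1-u)+\frac{1}{1-u}=1+\frac{u^2}{1-u}$, where the correct constant is $2$, as you have; it slips to $e^{-\gamma/2}$ in its final buy-side expression; and your remark that $\partial_u J_{\mathrm{ext}}^-$ vanishes at $u=0$ correctly restricts the strict inequality to $u\in(0,1)$.
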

From the above lemma, we can see that $-\left(e^{\gamma / 2}-\sqrt{m}\right)^2$ is the baseline impermanent loss from informed correction.
$e^{\gamma / 2} \sqrt{m} \cdot \frac{u^2}{1-u}$ is the extra benefit created by arbitrageur overrun which grows convexly in $u$, blowing up as $u \uparrow 1$, for full proof see Appendix~\ref{app:proof_for_Jext}

\subsection{LP Wealth Dynamics and Optimal Liquidity Provision in the Extended Model}
\label{subsec:wealth_extended}
Follow the same formualtion as Section~\ref{sec:baseline}. The LP wealth in num\'eraire units
\begin{equation}
\label{eq:wealth_def_ext_final}
W_t := A^{out}_t P_t + R_{A,t}P_t + R_{B,t} + \dot F_t^{N} - D_t,
\end{equation}
where $F_t^N$ is the cumulative \emph{real-noise} fee income satisfying Lemma~\ref{lem:noise_trader_fee}.

\begin{proposition}[Extended LP wealth dynamics]
\label{prop:wealth_extended_final}
Assume (i) $(P_t,v_t)$ follow \ref{eq:sv_price}--\ref{eq:sv_var},
(ii) real-noise fees satisfy Lemma~\ref{lem:noise_trader_fee}, and
(iii) correction blocks arrive via $(N_t^-,N_t^+)$ with intensities $(\lambda_t^-(g),\lambda_t^+(g))$ and have overrun ratios
$(U^-,U^+)$.
Then the LP wealth admits the jump--diffusion decomposition
\begin{equation}
\label{eq:wealth_dyn_ext_final}
dW_t
=
(A^{out}_t+R_{A,t-})\,dP_t
-r_t^{*}D_t\,dt
+\dot F_t^{N}\,dt
+\Delta W_t^-\,dN_t^- +\Delta W_t^+\,dN_t^+,
\end{equation}
where the race-overrun correction jump sizes can be written equivalently as:
$$
\Delta W_t^{\pm} = V_{t-}^{\mathrm{AMM}}(P_t)\,J_{\mathrm{ext}}^{\pm}(M^{\pm},U^{\pm};\gamma),
$$
with $J_{\mathrm{ext}}^{\pm}$ given by Lemma~\ref{lem:Jext}. Equivalently, using compensated processes $d\widetilde N_t^\pm:=dN_t^\pm-\lambda_t^\pm dt$ and Equation~\ref{eq:sv_price},
\begin{align}
\label{eq:wealth_dyn_ext_comp_final}
dW_t
&=
(A^{out}_t+R_{A,t-})\mu P_t\,dt
+(A^{out}_t+R_{A,t-})\sqrt{v_t}\,P_t\,dB_t
-r_t^{*}D_t\,dt
+\dot F_t^{N}\,dt \nonumber\\
&\quad
+\big(\lambda_t^-\Delta W_t^-+\lambda_t^+\Delta W_t^+\big)\,dt
+\Delta W_t^-\,d\widetilde N_t^-+\Delta W_t^+\,d\widetilde N_t^+.
\end{align}
\end{proposition}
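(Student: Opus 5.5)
We prove Proposition~\ref{prop:wealth_extended_final} by splitting $dW_t$ into a continuous part and a jump part, reusing the baseline argument of Proposition~\ref{prop:log_wealth_dynamics} for the continuous terms and Lemma~\ref{lem:Jext} for the jumps.

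\textbf{Continuous part.} On intervals with no correction event, neither $N_t^-$ nor $N_t^+$ jumps. The invariant $K_t$ is unchanged there because fees are paid out. Real-noise trades stay inside the no-arbitrage band and are treated as order flow against the AMM whose fees accrue separately. Hence, between events, the state variables $(A^{out}_t, R_{A,t}, R_{B,t}, D_t)$ change only through the exogenous dynamics and the fee accrual.

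We then apply It\^o's product rule to $A^{out}_tP_t + R_{A,t}P_t + R_{B,t}$, exactly as in the baseline proof.
\begin{itemize}
\item The self-financing convention gives $dA^{out}_t\,P_t + dR_{A,t}\,P_t + dR_{B,t} = dR^{LP}_{B,t}$ at LP update times.
\item This inflow cancels against the corresponding term in $dD_t = r^*_tD_t\,dt + dR^{LP}_{B,t}$.
\item What remains is $(A^{out}_t + R_{A,t-})\,dP_t - r^*_tD_t\,dt$.
\end{itemize}
The only changes from the baseline are two. First, $dP_t$ now carries $\sqrt{v_t}$ in place of $\sigma$, by \eqref{eq:sv_price}. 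Since $d\langle B,W\rangle = 0$ and $v_t$ enters wealth only through $P_t$, no cross-variation term appears. Second, the noise-fee account contributes $\dot F^N_t\,dt$. Here we read the fee term in $W_t$ as the cumulative fee $F^N_t$, which is absolutely continuous with rate given by Lemma~\ref{lem:noise_trader_fee}. This yields the drift and diffusion terms of \eqref{eq:wealth_dyn_ext_final}.

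\textbf{Jump part.} At a correction time with $dN_t^\pm = 1$, the outside holding $A^{out}$, the debt $D$ and the price $P$ do not move. The jump $\Delta W_t^\pm$ is therefore the change in the total value of the AMM position measured at $P_t$, including paid-out fees from both the winner and the overrun arbitrageurs.

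By Lemma~\ref{lem:Jext}, this value is multiplied by $1 + J_{\mathrm{ext}}^\pm(M^\pm, U^\pm;\gamma)$. Multiplying by the pre-jump AMM value $V^{\mathrm{AMM}}_{t-}(P_t) = P_tR_{A,t-} + R_{B,t-}$ gives $\Delta W^\pm_t = V^{\mathrm{AMM}}_{t-}J^\pm_{\mathrm{ext}}$.

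As a consistency check, at $u = 0$ this should reduce to the baseline jump sizes $\Delta W_t^\pm$. Using $R_{A,t-} = \sqrt{K/Q_-}$ and $R_{B,t-} = \sqrt{KQ_-}$, we get $V^{\mathrm{AMM}}_{t-} = \sqrt{KQ_-}\,(m+1)$. Multiplying by $J^\pm$ then recovers $2\exp\!\big(\tfrac{k+p\pm\gamma}{2}\big) - (PR_A + e^{\pm\gamma}R_B)$.

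\textbf{Compensated form.} Substitute $dP_t = \mu P_t\,dt + \sqrt{v_t}\,P_t\,dB_t$ and write $dN^\pm_t = d\widetilde N^\pm_t + \lambda^\pm_t\,dt$. The intensities $\lambda^\pm_t = \lambda^\pm(g(v_t))$ are predictable, being functions of $v_{t-}$, so $\widetilde N^\pm$ are martingales. Rearranging gives \eqref{eq:wealth_dyn_ext_comp_final}.

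\textbf{Main obstacle.} The delicate step is justifying that the noise-fee term enters only as a drift and that overrun flow does not alter $K$ or create additional jumps outside $dN^\pm$. For this I would invoke three assumptions stated in the paper:
\begin{itemize}
\item fees are paid out;
\item overrun flow never leaves the no-arbitrage band and is bundled within the correction block;
\item noise trades are infinitesimal, so their aggregate effect is a rate.
\end{itemize}
With these, the only remaining work is a careful accounting of which quantities are evaluated at $t-$ versus $t$.
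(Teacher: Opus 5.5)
Your architecture is the route the paper implicitly relies on; it states this proposition without a separate proof. You apply the It\^o product rule to the non-jump part exactly as in the baseline, use Lemma~\ref{lem:Jext} for the correction jumps, and then compensate $dN_t^\pm$. Your $u=0$ reduction to the baseline $\Delta W_t^\pm$ is also computed correctly.

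\textbf{Main gap: noise trades move the reserves.} The problem is the sentence ``between events, the state variables $(A^{out}_t,R_{A,t},R_{B,t},D_t)$ change only through the exogenous dynamics and the fee accrual.'' In the extended model, real noise traders trade \emph{against the pool} (Lemma~\ref{lem:noise_trader_fee}). So between corrections the reserves move along $xy=K$. A trade with reserve change $\delta R_A$ changes $P_tR_A+R_B$ by $(P_t-Q_t)\,\delta R_A+Q_t(\delta R_A)^2/R_A+\dots$.

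Inside the band, $P_t\neq Q_t$ generically. Your self-financing identity absorbs only the LP-update flow $dR^{LP}_{B,t}$. These terms therefore survive the product rule and would add roughly $(P_t-Q_t)\,dR^{N}_{A,t}$, plus a curvature drift, to $dW_t$. Neither ``noise trades stay inside the band'' nor ``noise trades are infinitesimal'' removes them, because infinitesimal trades arriving at a positive rate still accumulate into a nontrivial reserve process.

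\textbf{What fixes it.} You need an exact-aggregation reading of the paper's continuum of noise traders. Use the buy/sell symmetry from the proof of Lemma~\ref{lem:noise_trader_fee} together with an exact law of large numbers, so that aggregate net noise flow in $A$ is zero at every instant. Because fees are paid out, the pool is path-independent on $xy=K$: a buy of $q$ followed by a sell of $q$ restores $(R_A,R_B)$ exactly. Reserves are then unchanged while fees accrue on gross volume.

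The same device is what lets you say $F^N$ is absolutely continuous with rate given by the lemma's formula. The lemma only computes $\mathbb E[\dot F^N_t\mid\mathcal F_{t-}]$, not a realized rate. Make this an explicit hypothesis; the decomposition rests on it, not on infinitesimality.

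\textbf{Secondary issue: fee accounting in the jump step.} You note that $A^{out}$, $D$ and $P$ do not move at a correction. You then conclude that $\Delta W_t^\pm$ is the change in AMM value \emph{including} the winner's and overrun arbitrageurs' paid-out fees. With $W_t$ as in \eqref{eq:wealth_def_ext_final}, those fees sit in no account.

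Your premises therefore give $\Delta W_t^\pm=\Delta(P_tR_A+R_B)=V^{\mathrm{AMM}}_{t-}J^\pm_{\mathrm{ext}}-y^\pm_{\mathrm{fee}}$, where $y^\pm_{\mathrm{fee}}>0$ is the arbitrage fee collected in the block. To obtain the stated jump, add a cumulative arbitrage-fee account to $W_t$, just as you already reinterpret $\dot F^N_t$ as $F^N_t$. The baseline definition of $W_t$ has the same omission, which is why your $u=0$ check against the baseline jump sizes does not catch it.
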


With LP's wealth process, we can now characterize the LP's dynamic allocation with CRRA utility, keeping the baseline infinite-horizon objective:
\[
u(w)=\frac{w^{1-\eta}}{1-\eta},\qquad \eta>0,\ \eta\neq 1,\qquad \rho>0.
\]

\paragraph{Reduced-form control.}
To parallel the baseline CRRA analysis, we represent the LP's control as a scalar $\theta_t\in[\theta_{\min},\theta_{\max}]$ which
\emph{scales the pool depth} through $K_t=K(\theta_t)$ and therefore changes:
(i) the persistent noise-fee rate via Lemma~\ref{lem:noise_trader_fee},
and (ii) the arbitrageur overrun ratios $U^\pm$ through Proposition~\ref{lem:Jext}.

To keep the CRRA homogeneity structure, we summarize the environment at time $t$ by the conditional marks
$(M^-,U^-)$ and $(M^+,U^+)$ and intensities $(\lambda_t^-,\lambda_t^+)$, and we define proportional noise-fee yield on AMM inventory:
\begin{equation}
\label{eq:psi_noise_ext_final}
\psi_t^{N}
:=
\frac{\mathbb E[\dot F_t^{N}\mid \mathcal F_{t-}]}{V_{t-}^{\mathrm{AMM}}(P_t)},
\qquad
\text{and it scales like }\psi^N(\theta,\cdot)\ \propto\ K^{\beta-\frac12}\ \propto\ \theta^{2\beta-1}.
\end{equation}

\begin{theorem}[CRRA HJB reduction and \emph{correct} first/second derivatives when $U$ depends on $\theta$ via $K(\theta$)]
\label{thm:crra_extended_endogK}
The LP solves
\begin{equation}
\label{eq:crra_problem_ext_final}
V(w)=\sup_{\{\theta_t\}}\ \mathbb{E}\!\left[\int_0^\infty e^{-\rho t}\frac{W_t^{1-\eta}}{1-\eta}\,dt\ \Big|\ W_0=w\right],
\qquad
\theta_t\in[\theta_{\min},\theta_{\max}],
\end{equation}
subject to the wealth dynamics in Proposition~\ref{prop:wealth_extended_final}.

Assume (as in the baseline CRRA reduction) that the controlled wealth can be written in proportional form
\begin{equation}
\label{eq:wealth_prop_form_ext_final}
\frac{dW_t}{W_{t-}}
=
\Big(\mu-r^*\theta_t+\theta_t\psi_t^N\Big)dt
+\sqrt{v_t}\,dB_t
+\theta_t J_{\mathrm{ext}}^{-}(M^-,U^-(\theta_t, v_t);\gamma)\,dN_t^-
+\theta_t J_{\mathrm{ext}}^{+}(M^+,U^+(\theta_t, v_t);\gamma)\,dN_t^+,
\end{equation}
where $U^\pm$ captures that arbitrageur overrun scales with pool depth and outside volatility because $\theta$ determines $K$. Define the admissible set $\Theta_{\mathrm{adm}}$ as in Theorem~\ref{thm:crra_hjb_cfmm}.

Then, conditional on $\mathcal F_{t-}$, the CRRA HJB maximization reduces pointwise in $\theta$ to maximizing:
\begin{align}
\label{eq:Phi_ext_endogK}
\Phi_t(\theta)
&:=
\mu-r^*\theta-\frac{\eta}{2}v_t + \theta\psi_t^N \nonumber\\
&\quad
+\sum_{\pm}\frac{\lambda_t^{\pm}(\theta, v)}{1-\eta}\,
\mathbb{E}\!\left[(1+\theta J_{\mathrm{ext}}^{\pm}(M^{\pm},U^{\pm}(\theta, v);\gamma))^{1-\eta}-1\ \big|\ \mathcal F_{t-}\right] 
\end{align}

Then we have that the existence, uniquenes, and interior FOC results similar to the baseline model in Theorem~\ref{thm:crra_hjb_cfmm} with more complicated expressions and parameter restrictions which we will leave to the full theorem statement and proof in Appendix~\ref{app:full_thm_and_proof_extended}, denote the interior optimal liquidity provision by $\theta^*$
\end{theorem}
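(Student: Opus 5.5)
The plan is to follow the baseline CRRA reduction of Theorem~\ref{thm:crra_hjb_cfmm}, treating $v_t$ (and hence $g(v_t)$, $\lambda^\pm$, $\psi^N$) as frozen at its $\mathcal F_{t-}$ value in the pointwise maximization. First, write the HJB for $V(w,v)$ with generator including the diffusion of $W$, the CIR generator in $v$, and the jump terms from $N^\pm$. Then try the homothetic ansatz $V(w,v)=\frac{w^{1-\eta}}{1-\eta}h(v)$. Under the proportional form \eqref{eq:wealth_prop_form_ext_final}, two facts make the $\theta$-dependent part of the Hamiltonian factor as $w^{1-\eta}h(v)$ times a function of $\theta$ alone. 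First, $d\langle B,W\rangle=0$, so there is no cross term $\partial_{wv}V$. Second, the jumps are multiplicative, $W\mapsto W(1+\theta J^\pm_{\mathrm{ext}})$, so $V(W(1+\theta J),v)-V(W,v)=\frac{w^{1-\eta}h(v)}{1-\eta}\big[(1+\theta J)^{1-\eta}-1\big]$. Collecting the drift term $w V_w(\mu-r^*\theta+\theta\psi^N)$, the diffusion term $\frac12 w^2 V_{ww} v=-\frac{\eta}{2}v\,w^{1-\eta}h$, and the jump terms then gives precisely $\frac{w^{1-\eta}h(v)}{1\!}\,\Phi_t(\theta)$ with $\Phi_t$ as in \eqref{eq:Phi_ext_endogK}. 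Dividing by the positive factor $w^{1-\eta}h(v)$ and accounting for the sign of $1-\eta$ shows that maximizing the Hamiltonian over $\theta$ is equivalent to maximizing $\Phi_t$ pointwise. The remaining ODE for $h$, which combines $\rho$, $\Phi_t(\theta^*)$ and the CIR generator, does not affect the argmax.

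Next, for the derivative statements I would differentiate $\Phi_t$ in $\theta$ using the chain rule through $U^\pm(\theta,v)$, $\lambda^\pm(\theta,v)$ and $\psi^N(\theta)\propto\theta^{2\beta-1}$. The first derivative is
\[
\Phi_t'(\theta)=-r^*+\tfrac{d}{d\theta}(\theta\psi^N)+\sum_\pm\Big[\tfrac{\partial_\theta\lambda^\pm}{1-\eta}\mathbb E[(1+\theta J^\pm)^{1-\eta}-1]+\lambda^\pm\mathbb E\big[(1+\theta J^\pm)^{-\eta}(J^\pm+\theta\,\partial_uJ^\pm_{\mathrm{ext}}\,\partial_\theta U^\pm)\big]\Big].
\]
The second derivative adds the $-\eta(1+\theta J)^{-\eta-1}(\cdot)^2$ term, plus terms involving $\partial_{uu}J_{\mathrm{ext}}$ and $\partial_{\theta\theta}U$. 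For these I use the explicit $\partial_uJ^-_{\mathrm{ext}}>0$ and $\partial_{uu}J^-_{\mathrm{ext}}>0$ from Lemma~\ref{lem:Jext}, with the analogous sell-side computation. Interchanging $\mathbb E$ and $\partial_\theta$ is justified by dominated convergence on compact subsets of $\mathrm{int}(\Theta_{\mathrm{adm}})$, where $1+\theta J$ is bounded away from $0$. Integrability of the marks is needed for this.

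The main obstacle is the analogue of strict concavity. Unlike the baseline, the term $\theta\psi^N$ and the overrun benefit $\theta\,\partial_uJ\,\partial_\theta U$ are convex contributions, since $\partial_{uu}J>0$ and $U\propto\theta^{2\beta-1}$. This means $\Phi_t$ is not automatically concave. I would first impose parameter restrictions such as $2\beta-1\le1$, together with a bound on $\theta\,\partial_\theta U^\pm$ so that the $-\eta(1+\theta J)^{-\eta-1}(\cdot)^2$ term dominates. That gives $\Phi_t''<0$ on $\Theta_{\mathrm{adm}}$, and hence existence by compactness and uniqueness by strict concavity, with an interior optimum characterized by $\Phi_t'(\theta^*)=0$. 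Verification that $V$ solves the control problem would then follow the baseline transversality condition, now required uniformly in $v$, i.e.\ $\rho>(1-\eta)\sup_v\Phi_t(\theta^*)$.
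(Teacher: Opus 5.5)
Your HJB reduction takes a different and more careful route than the paper. The paper does not redo the HJB computation for the extended model. It reads the reduction off the baseline proof, with the state frozen ``conditional on $\mathcal F_{t-}$'' (constant $C$, and $\sigma^2$ replaced by $v_t$). Your ansatz $V(w,v)=\frac{w^{1-\eta}}{1-\eta}h(v)$ shows why stochastic volatility does not contaminate the argmax. The CIR generator and all $\theta$-free terms go into a linear ODE for $h$. The jumps factor because they are multiplicative and leave $v$ unchanged. Since $w^{1-\eta}h(v)>0$ (equivalently $V_w>0$), the $\theta$-dependent part of the Hamiltonian is a positive multiple of $\Phi_t(\theta)$. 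Your first derivative, $-r^*+\psi^N+\theta\psi^N_\theta+\sum_\pm\bigl[\lambda^\pm H^\pm+\lambda^\pm_\theta G^\pm/(1-\eta)\bigr]$, coincides with Proposition~\ref{prop:existence_interior_clean}. So does your compactness-plus-strict-concavity argument for existence, uniqueness and the FOC.

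The gap is the concavity step, on which uniqueness rests. You correctly flag $\theta\psi^N$ and the overrun channel as convex, but your proposed restrictions do not control them.
\begin{itemize}
\item Since $\psi^N\propto\theta^{2\beta-1}$, the term $\theta\psi^N\propto\theta^{2\beta}$ has second derivative a positive multiple of $2\beta(2\beta-1)\theta^{2\beta-2}>0$ for every $\beta>\tfrac12$. That is the regime Theorem~\ref{thm:hump_theta_v} requires.
\item Imposing $2\beta-1\le 1$ does not change this sign. A bound on $\theta\,\partial_\theta U^\pm$ only tames the jump-sensitivity term $2J^\pm_\theta+\theta J^\pm_{\theta\theta}$.
\item Your second-derivative sketch also omits the terms generated by the endogenous intensities, $2\lambda^\pm_\theta H^\pm$ and $\lambda^\pm_{\theta\theta}G^\pm/(1-\eta)$. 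Under the gas-threshold specification, $\lambda^\pm_\theta>0$, $H^\pm$ is unsigned, and $\lambda^\pm_{\theta\theta}$ changes sign at $z^2=1+1/\beta$.
\end{itemize}
So $\Phi_t''<0$ does not follow from your hypotheses.

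What is needed, and what Proposition~\ref{prop:concavity_clean} does, is to write $\Phi_t''$ out in full. The CRRA curvature $\eta\sum_\pm\lambda^\pm\mathbb E\bigl[(J^\pm+\theta J^\pm_\theta)^2(1+\theta J^\pm)^{-\eta-1}\bigr]$ must then dominate the sum of the absolute values of every other contribution:
\begin{itemize}
\item $\sum_\pm\lambda^\pm\mathbb E\bigl[|2J^\pm_\theta+\theta J^\pm_{\theta\theta}|(1+\theta J^\pm)^{-\eta}\bigr]$;
\item $|2\psi^N_\theta+\theta\psi^N_{\theta\theta}|$;
\item $\sum_\pm 2|\lambda^\pm_\theta|\,\mathbb E\bigl[|J^\pm+\theta J^\pm_\theta|(1+\theta J^\pm)^{-\eta}\bigr]$;
\item $\sum_\pm|\lambda^\pm_{\theta\theta}/(1-\eta)|\,|G^\pm|$.
\end{itemize}
Alternatively, assume strict concavity outright, as the paper's (A3) does. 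This is a genuine cross-restriction tying the noise-fee scale to the adverse-selection curvature. It cannot be obtained from conditions on $\beta$ and $U^\pm$ alone.

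A smaller point concerns your closing transversality condition. It is not part of the statement, but as written it fails for $\eta>1$. There $(1-\eta)\Phi_t(\theta^*)$ contains $+\tfrac{\eta(\eta-1)}{2}v$, so $\sup_v(1-\eta)\Phi_t(\theta^*)=+\infty$ and no condition uniform in $v$ can hold. The quantity you wrote, $(1-\eta)\sup_v\Phi_t(\theta^*)$, equals $\inf_v(1-\eta)\Phi_t(\theta^*)$ when $\eta>1$, so it is the wrong bound. Verification would instead need an exponential-moment (Riccati-type) condition on the CIR process in the Feynman--Kac representation of $h$.
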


\subsection{Hump-Shaped Optimal Liquidity Provision $\theta(v)$: Sufficient Conditions}
\label{subsec:hump_theorem}

Let $F(\theta ; \zeta) = \Phi'(\theta) =0$, where $\zeta$ is some parameter or state variable (e.g., $r^*, v_t, \sigma_N, \beta, \gamma, \ldots$ ). In the region where the objective is locally concave in $\theta$, such that $F_\theta\left(\theta^*\right) < 0$, then we have 
$$
\frac{d \theta^*}{d \zeta}=-\frac{F_\zeta\left(\theta^*\right)}{F_\theta\left(\theta^*\right)} .
$$
then $F_\theta=\Phi^{\prime \prime}<0$, and the sign is simply:
- $\operatorname{sign}\left(d \theta^* / d \zeta\right)=\operatorname{sign}\left(F_\zeta\right)$.

Throughout this subsection we suppress inessential constants and focus on: 
(i) the persistent noise-fee term,
(ii) the correction-jump term where winner IL offset by overrun order flow, and (iii) the pool-size elasticity $\beta$.
and we want to pin down how the optimal $\theta^*$ depends on these three terms, assuming the representative LP supplies the entire pool, we have 
$K(\theta)=\bar K\,\theta^2,\qquad \bar K>0$.
With gas as a volatility driven function $g_t=g(v_t)$ that is continuous, increasing, and convex.

\begin{proposition}[Overrun fraction inherits $\sqrt{v}$ and $K^{\beta-\frac12}$ scaling]
\label{prop:u_scaling_cs}
Fix $(P_t,v_t)$ at time $t$. the conditional expected buy-side overrun fraction satisfies Lemma~\ref{lem:Jext}
Under $K(\theta)=\bar K\theta^2$, the expected value becomes
\begin{equation}
\label{eq:u_theta_power}
\mathbb E[u_t^-\mid \mathcal F_{t-}]
=
\widetilde C^-_t\;\theta^{2\beta-1},
\qquad
\widetilde C^-_t:=C^-\bar K^{\beta-\frac12}\sqrt{v_t}\sqrt{P_t}.
\end{equation}
Hence $\partial_\theta \mathbb E[u_t^-]$ has the sign of $(2\beta-1)$.
Analogous statements hold on the sell-side with a constant $C^+=\sqrt{2/\pi}\,e^{+\gamma/2}$.
\end{proposition}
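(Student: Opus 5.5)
The argument is a direct chaining of Proposition~\ref{prop:race_overrun} with the definition of the overrun ratio, followed by substitution of $K(\theta)=\bar K\theta^2$. Start from the realized overrun quantity $L_A=K_t^{\beta}\Delta_t\sum_{i\neq w}1/\hat N_i$ on a buy-side event. Condition on $(K_t,v_t,P_t)$ and on the entrant beliefs, which are taken independent of the spread draw. Then compute
\[
\mathbb E[\Delta_t\mid \Delta_t>0,v_t]=\sqrt{2/\pi}\,\sqrt{v_t}.
\]
Absorb the belief factor $\mathbb E[\sum_{i\neq w}1/\hat N_i]$ into a constant. Under the reduced-form $\pi_i\approx 1/\hat N_i$ with roughly homogeneous beliefs, this factor is approximately $(N-1)/N\approx 1$, which is why the stated constant is $C^-=\sqrt{2/\pi}\,e^{-\gamma/2}$. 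This gives
\[
\mathbb E[L_A\mid K_t,v_t]=c\,\sqrt{2/\pi}\,K_t^{\beta}\sqrt{v_t}.
\]

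Next, divide by the boundary $A$-reserve. After the winner clears to $Q'=Pe^{-\gamma}$, the constant-product identity $R_AR_B=K$ with $R_B/R_A=Pe^{-\gamma}$ gives
\[
R_A^{b,-}=\sqrt{K_t e^{\gamma}/P_t}.
\]
This is deterministic given $\mathcal F_{t-}$ and $P_t$, so it passes through the conditional expectation. We obtain
\[
\mathbb E[u_t^-\mid\mathcal F_{t-}]=C^- K_t^{\beta-\frac12}\sqrt{v_t}\sqrt{P_t}.
\]
The factor $e^{-\gamma/2}$ comes from $1/\sqrt{e^{\gamma}}$. This reproduces the ``Overrun ratios'' scaling stated after Proposition~\ref{prop:race_overrun}, now with an explicit constant.

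Then substitute $K(\theta)=\bar K\theta^2$. This gives $K^{\beta-\frac12}=\bar K^{\beta-\frac12}\theta^{2\beta-1}$, which yields \eqref{eq:u_theta_power} with $\widetilde C^-_t=C^-\bar K^{\beta-\frac12}\sqrt{v_t}\sqrt{P_t}$. Since $\widetilde C^-_t>0$ and $\theta>0$,
\[
\partial_\theta\,\mathbb E[u_t^-]=(2\beta-1)\,\widetilde C^-_t\,\theta^{2\beta-2},
\]
whose sign is that of $2\beta-1$. The sell-side case is identical. There the winner clears to $Q'=Pe^{\gamma}$, so $R_A^{b,+}=\sqrt{K e^{-\gamma}/P}$, and dividing by it produces $C^+=\sqrt{2/\pi}\,e^{+\gamma/2}$.

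The main obstacle is justifying that the ratio's expectation factors as $\mathbb E[L_A]/R_A^{b}$ and that the belief sum contributes only a constant. I would handle this by stating explicitly that $\hat N_i$ and the number of entrants are independent of $\Delta_t$ given $v_t$, and are invariant in $\theta$; otherwise entry would depend on $K$ through $g_t$ and the win probability. I would also note that the truncation $U^-<1$ and the half-normal conditioning are treated at the approximation level used in Proposition~\ref{prop:race_overrun}, so the statement holds as the paper's ``$\approx$'' scaling rather than as an exact identity.
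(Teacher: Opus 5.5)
Your proposal is correct and follows the same route as the paper. It takes $\mathbb{E}[L_A\mid K_t,v_t]\approx\sqrt{2/\pi}\,K_t^{\beta}\sqrt{v_t}$ from the proof of Proposition~\ref{prop:race_overrun} (with $\kappa_{\mathrm{bel}}=1$), divides by the boundary reserve $R_A^{b,-}=\sqrt{K_te^{\gamma}/P_t}$ (and its sell-side analogue) exactly as in the ``Overrun ratios'' paragraph, and then substitutes $K(\theta)=\bar K\theta^2$; your explicit assumption that entry and beliefs are independent of $\Delta_t$ and invariant in $\theta$ is simply a cleaner statement of what the paper absorbs into the choice $\kappa_{\mathrm{bel}}=1$.
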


We now formalize the hump-shape logic using the implicit function theorem on the LP's interior FOC.
Suppressing terms not central to this subsection, define the reduced first-order condition at time $t$:
\begin{equation}
\begin{aligned}
\label{eq:FOC_reduced_cs}
F(\theta,v)
:=
\Big(\psi^N(\theta,v)+\theta\,\partial_\theta\psi^N(\theta,v)\Big)
\;+\;
\sum_{\pm}\lambda^\pm(\theta,v)\,
\mathbb E\!\Big[H^\pm(\theta,v)\mid\mathcal F_{t-}\Big] \\
+ \frac{\partial \lambda^{ \pm}}{\partial \theta} \frac{1}{1-\eta} \mathbb{E}\left[\left(1+\theta J^{ \pm}\right)^{1-\eta}-1\right]
\;-\;r^* 
\;=\;0,
\end{aligned}
\end{equation}
where $H^\pm(\theta,v):=(J_{\mathrm{ext}}^\pm+\theta\partial_\theta J_{\mathrm{ext}}^\pm)\,(1+\theta J_{\mathrm{ext}}^\pm)^{-\eta}$,
and where $\psi^N$ is defined by $\psi^N(\theta,v):=\frac{\mathbb E[\dot F^N\mid \mathcal F_{t-}]}{V^{\mathrm{AMM}}_{t-}(P_t)}$,
We assume gas affects both (i) noise participation (Lemma~\ref{lem:noise_trader_fee}) and
(ii) profitable arbitrage/correction frequency via a profitability threshold, so that
\begin{equation}
\label{eq:lambda_gas_monotone}
\lambda^\pm(\theta,v)=\bar\lambda^\pm\cdot \mathbb P\!\left(|\Delta|\ge \sqrt{\frac{2g(v)}{K(\theta)^\beta}}\ \Big|\ v\right)
\end{equation}
which is decreasing in $v$ whenever $g(v)/v$ is increasing (``gas grows faster than volatility'').

\begin{theorem}[Hump-shaped $\theta^*(v)$ under convex gas and $\beta>\tfrac12$]
\label{thm:hump_theta_v}
Fix time $t$ and treat $(P_t,Q_t)$ as given state variables. Assume:
\begin{enumerate}
\item \textbf{(Concavity / stability)} There is an interior optimum $\theta^*(v)\in(\theta_{\min},\theta_{\max})$ solving $F(\theta,v)=0$ and
$F_\theta(\theta^*(v),v)<0$, the explicit condition for this to be true can be found in Appendix~\ref{app:full_thm_and_proof_extended}.
\item \textbf{(Gas grows faster than volatility)} $g$ is increasing, convex, and $g(v)/v$ is increasing with $g(v)\to\infty$ as $v\to\infty$.
\item \textbf{(Overrun elasticity)} $\beta>\tfrac12$, so the overrun fraction $u(\theta,v)$ increases in both $\theta$ and $\sqrt v$~ (Proposition~\ref{prop:u_scaling_cs}).
\item \textbf{(Low-vol dominance of the severity channel)} There exists $v_0>0$ such that $F_v(\theta^*(v_0),v_0)>0$.
\end{enumerate}
Then:
\begin{enumerate}
\item (\emph{Rising region}) In a neighborhood of $v_0$, $\theta^*(v)$ is increasing in $v$:
\[
\frac{d\theta^*}{dv}(v_0)= -\frac{F_v}{F_\theta}\Big|_{(\theta^*(v_0),v_0)} \;>\;0.
\]
\item (\emph{Falling tail}) For sufficiently large $v$, $\theta^*(v)$ is decreasing in $v$, and $\theta^*(v)\to\theta_{\min}$ as $v\to\infty$.
\item (\emph{Hump shape}) Consequently, there exists at least one $v^*\in(v_0,\infty)$ at which $\theta^*(v)$ attains a maximum
(i.e.\ $\theta^*(v)$ is hump-shaped).
\end{enumerate}
\end{theorem}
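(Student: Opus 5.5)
The plan is to apply the implicit function theorem to the reduced first-order condition $F(\theta,v)=0$ in \eqref{eq:FOC_reduced_cs} and read off the sign of $d\theta^*/dv$ from the sign of $F_v$. The rising region and the falling tail are treated separately; the hump then follows from continuity of $\theta^*(\cdot)$.

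\emph{Rising region.} By assumption~1, $F$ is $C^1$ near $(\theta^*(v_0),v_0)$ with $F_\theta<0$. The implicit function theorem therefore gives a $C^1$ branch $\theta^*(v)$ on a neighborhood of $v_0$, with
\[
\frac{d\theta^*}{dv}=-\frac{F_v}{F_\theta}.
\]
Since $F_\theta<0$, the sign of $d\theta^*/dv$ equals the sign of $F_v$. Assumption~4 gives $F_v>0$ at $v_0$, and continuity of $F_v$ keeps it positive on a small neighborhood, so $\theta^*$ is strictly increasing there. Interpretively, this is the severity channel: by Proposition~\ref{prop:u_scaling_cs} with $\beta>\tfrac12$, the overrun fraction $u$ grows like $\sqrt v\,\theta^{2\beta-1}$. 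Lemma~\ref{lem:Jext} shows $\partial_u J^\pm_{\mathrm{ext}}>0$, which raises $H^\pm$. At low $v$, gas is small, so $\lambda^\pm$ and $\psi^N$ are barely affected.

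\emph{Falling tail.} I will show that every $v$-dependent positive term in $F$ is eventually dominated by the $-r^*$ term.
\begin{itemize}
\item By Lemma~\ref{lem:noise_trader_fee}, $\psi^N$ and $\theta\partial_\theta\psi^N$ carry the factor $\varphi(z(v,K))$. Here $z\propto\sqrt{g(v)/K^\beta}\to\infty$, so these terms vanish uniformly for $\theta$ in a compact subset of $[\theta_{\min},\theta_{\max}]$.
\item By \eqref{eq:lambda_gas_monotone}, $\lambda^\pm=\bar\lambda^\pm\,\mathbb P\big(|Z|\ge\sqrt{2g(v)/(vK^\beta)}\big)$ with $Z$ standard normal. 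Since $g(v)/v$ is increasing, this probability is nonincreasing in $v$.
\end{itemize}
To send $\lambda^\pm\to0$ I need $g(v)/v\to\infty$. I would try to extract this from convexity plus monotonicity of $g(v)/v$; if that fails, I would add it as a mild strengthening of assumption~2. A Gaussian tail bound then gives $\lambda^\pm\lesssim e^{-g(v)/(vK^\beta)}$. The same bound controls the $\partial_\theta\lambda^\pm$ term, whose derivative produces another Gaussian density factor.

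\emph{Main obstacle.} The jump term $\lambda^\pm\mathbb E[H^\pm]$ involves $u\propto\sqrt v$ growing while $\lambda^\pm$ decays. Two things are needed:
\begin{itemize}
\item $\mathbb E[H^\pm]$ must grow at most polynomially in $\sqrt v$. I would cap this using the no-push-outside-band assumption, which keeps $u$ bounded away from $1$ and hence keeps $u^2/(1-u)$ bounded. I would also use admissibility, which bounds $(1+\theta J)^{-\eta}$.
\item With $H^\pm$ polynomially bounded, the exponential decay of $\lambda^\pm$ wins, so $F(\theta,v)\to -r^*<0$ uniformly on $[\theta_{\min},\theta_{\max}]$.
\end{itemize}
Hence for large $v$ the FOC has no interior root, and $\theta^*(v)=\theta_{\min}$, in line with the corner solution of Theorem~\ref{thm:crra_hjb_cfmm}. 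Before reaching the corner, any interior branch must descend, since $F_v<0$ there by the same dominance estimates. This gives the falling tail and $\theta^*(v)\to\theta_{\min}$.

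\emph{Hump.} $\theta^*$ is continuous on $[v_0,\infty)$ (by the implicit function theorem on interior pieces, and by Berge's maximum theorem at corners). It rises strictly just past $v_0$, so $\theta^*(v_0+\epsilon)>\theta^*(v_0)\ge\theta_{\min}$, and it tends to $\theta_{\min}$ at infinity. Hence $\sup_{v>v_0}\theta^*(v)$ exceeds both endpoint values and is attained at some finite $v^*\in(v_0,\infty)$.
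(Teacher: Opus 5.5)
Your proposal follows the paper's proof step for step: the implicit-function sign argument at $v_0$; then $\psi^N\to 0$ and $\lambda^\pm\to 0$ forcing $F\to -r^*<0$ and hence the corner $\theta_{\min}$; then rise-and-return for the hump. It is more careful than the paper, which never invokes continuity of $\theta^*(\cdot)$ and dismisses the jump block merely by noting that $u$ is multiplied by $\lambda^\pm\to 0$. Your claim that $F_v<0$ along interior branches before the corner is unsupported, since uniform convergence of $F$ says nothing about $F_v$, but it is also unnecessary, because $\theta^*\equiv\theta_{\min}$ for all large $v$.

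Two refinements. First, the strengthening you anticipate is genuinely needed. Assumption~2 does not force $g(v)/v\to\infty$: for example, $g(v)=\sqrt{1+v^2}-1$ is convex and increasing with $g(v)/v\uparrow 1$, which leaves $\lambda^\pm$ bounded away from $0$. This divergence is exactly what the paper's proof asserts even though it is not among the hypotheses. Second, the decay of $\lambda^\pm$ is controlled by a multiple of $e^{-g(v)/(vK^\beta)}$, which is exponential in $g(v)/v$ rather than in $v$. It therefore need not beat polynomial growth in $\sqrt v$, and the step ``exponential decay wins'' fails as written. The fix is to use the band cap ($U^-\le 1-e^{-\gamma}$, $U^+\le e^{\gamma}-1$). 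Together with $\theta\,\partial_\theta U=(2\beta-1)U$ and a uniform admissibility margin such as $\theta_{\max}<1$, this makes $H^\pm$ bounded, after which any decay of $\lambda^\pm$ suffices. The margin is needed because mere positivity of $1+\theta J$ does not bound $(1+\theta J)^{-\eta}$.
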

At low volatility, price movements are small and gas costs are low, so noise traders remain active and generate steady fee income. As volatility rises, arbitrage overruns become larger and more valuable, and—when $\beta>\tfrac12$—this overrun channel initially dominates. 
Even though impermanent loss increases, the LP finds it optimal to provide more liquidity to capture the higher overrun arbitrage flow. 
At high volatility, this logic reverses. Convex gas costs sharply reduce trading activity, while price jumps become larger and less frequent, creating substantial impermanent loss when they occur. As a result, the marginal return to supplying liquidity falls, and the LP optimally reduces liquidity provision. 
Together, these forces imply that optimal liquidity provision yields a hump-shaped optimal solution $\theta^*(v)$. 
See Appendix~\ref{app:proof_for_hump} for full proof. 
We also validated this equilibrium liquidity provision behavior in ETH/USDC pool where liquidity providers collectively show a hump-shaped liquidity provision behavior in Figure~\ref{fig:hump_shape}.

\section{Conclusion}
 
This paper studies liquidity provision in CFMM-based AMMs through a unified theoretical and empirical framework that highlights the interaction between informed arbitrage, trading frictions, and endogenous liquidity supply. 
In the baseline model, we show that informed arbitrageurs face an intrinsic buy–sell asymmetry when trading against a CFMM-AMM. This asymmetry implies that, in a market populated exclusively by informed arbitrageurs, liquidity provision is strictly dominated for LPs, even in the presence of trading fees. 

Guided by these theoretical insights, we then turn to transaction-level data to empirically examine the trading behavior observed in real CFMM-AMM markets. The data strongly corroborate the model’s prediction of buy–sell asymmetry, revealing persistent differences in volume, frequency, and profitability across trade directions. 
Moreover, by classifying trades into profitable and unprofitable transactions, we uncover substantial heterogeneity in trading behavior that closely mirrors patterns documented in traditional financial markets: profitable trades tend to be larger but less frequent, while unprofitable trades occur more frequently and contribute disproportionately to overall transaction counts. 
Importantly, we also show that gas fees, even though depends on the congestion of the entire Ethereum network, are not all exogenous, but instead are partially linked to CEX price volatility, providing empirical support for modeling gas costs as an endogenous, volatility-dependent friction.

To reconcile these empirical patterns with equilibrium liquidity provision, we develop an extended model that incorporates time-varying price volatility, volatility-dependent gas fees, the presence of genuine noise traders, and a class of overrun arbitrageurs whose trades are ex-post unprofitable despite being ex ante profit-maximizing. 
Within this richer environment. LPs optimally respond to these forces by balancing fee revenue against impermanent loss and adverse selection, leading to an endogenous liquidity provision decision that varies non-monotonically with volatility. In equilibrium, liquidity provision exhibits a hump-shaped relationship with price volatility: moderate volatility attracts liquidity by increasing fee-generating opportunities, while high volatility induces liquidity withdrawal as risk and arbitrage pressure dominate.

Taken together, our results highlight that AMM liquidity provision cannot be understood in isolation from arbitrage competition, trading frictions, and volatility dynamics. Rather than viewing unprofitable trades as purely exogenous noise, our analysis shows that they are an endogenous outcome of competitive arbitrage under realistic frictions, and that they play a central role in sustaining liquidity provision in practice. In the future, this framework can be extended to alternative AMM designs--including concentrated liquidity or dynamic fee mechanisms.

\newpage
\bibliographystyle{ACM-Reference-Format}
\bibliography{references}

\newpage
\appendix

\section{Supplementary Materials}

\subsection{Supplementary Figures}
\FloatBarrier

\begin{figure}[htbp]
    \centering
    \includegraphics[width=0.85\linewidth]{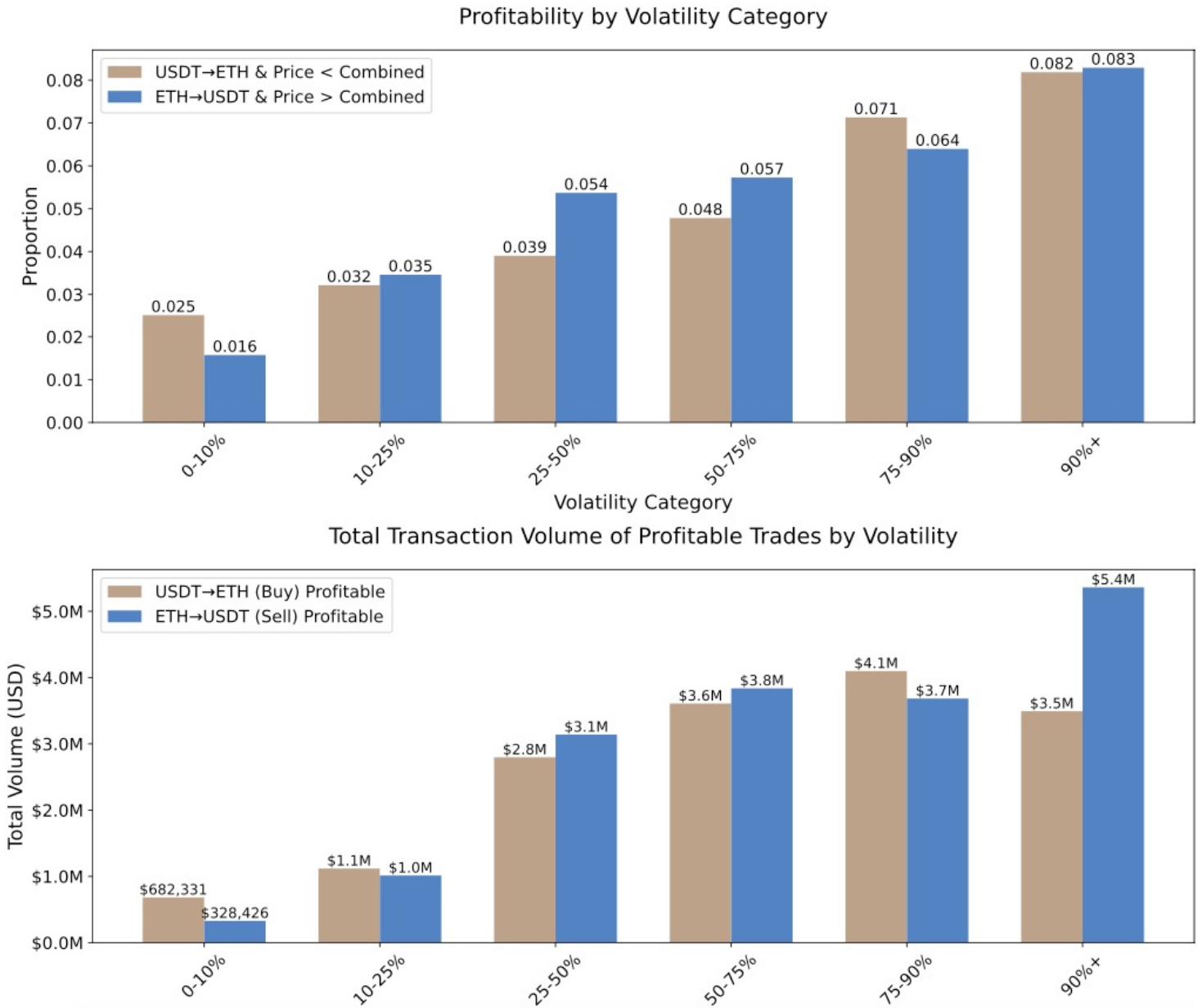}
    \caption{Buy-Sell Asymmetry in Profitability Rate of ETH/USDT pool by Volatility}
    \label{fig:ETH_asymmetry}
\end{figure}

\begin{figure}[htbp]
    \centering
    \includegraphics[width=.8\linewidth]{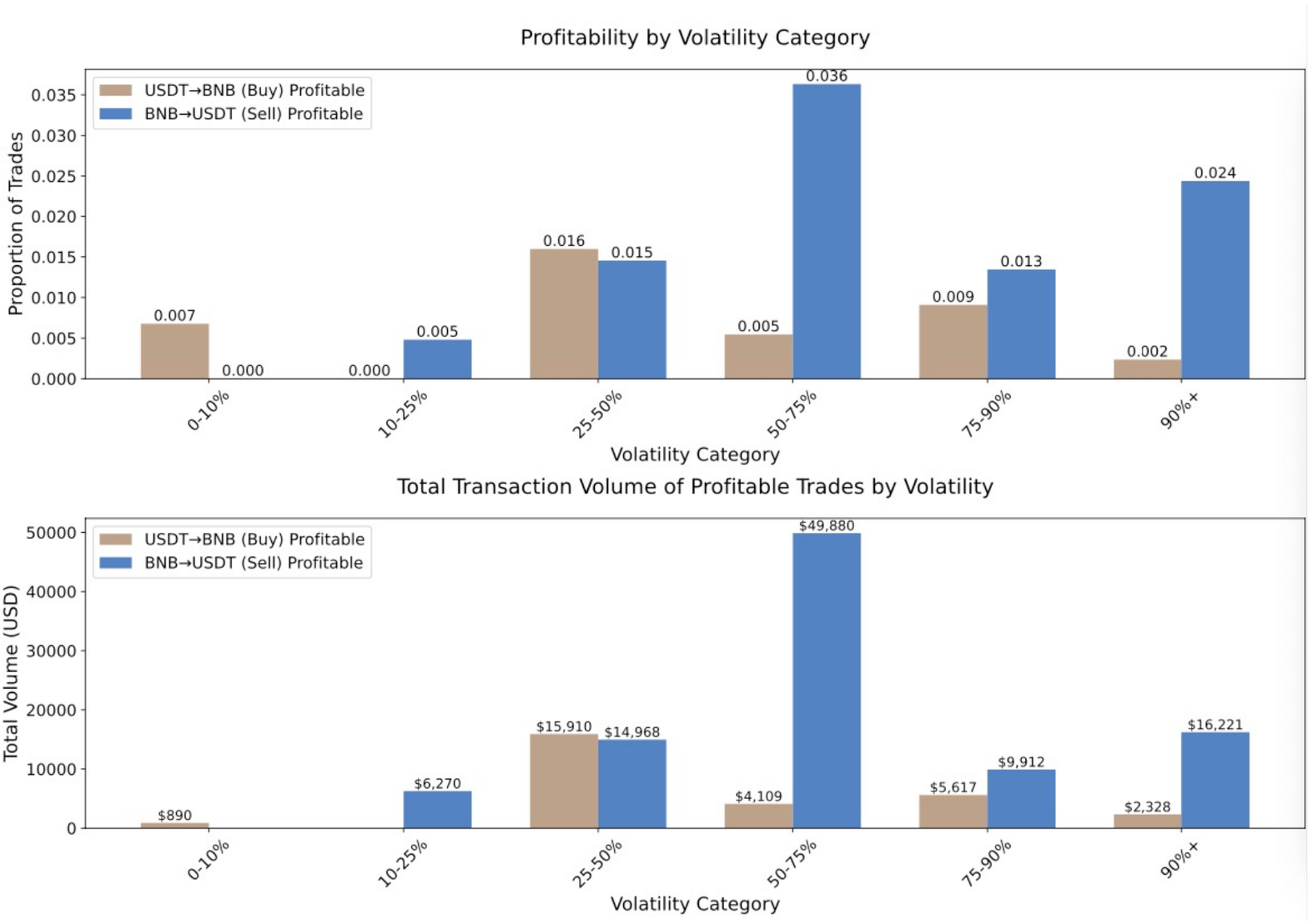}
    \caption{Buy-Sell Asymmetry in Profitable Transactions in BNB/USDT Pool}
    \label{fig:BNB_asymmetry}
\end{figure}

\begin{figure}[htbp]
    \centering
    \includegraphics[width=1\linewidth]{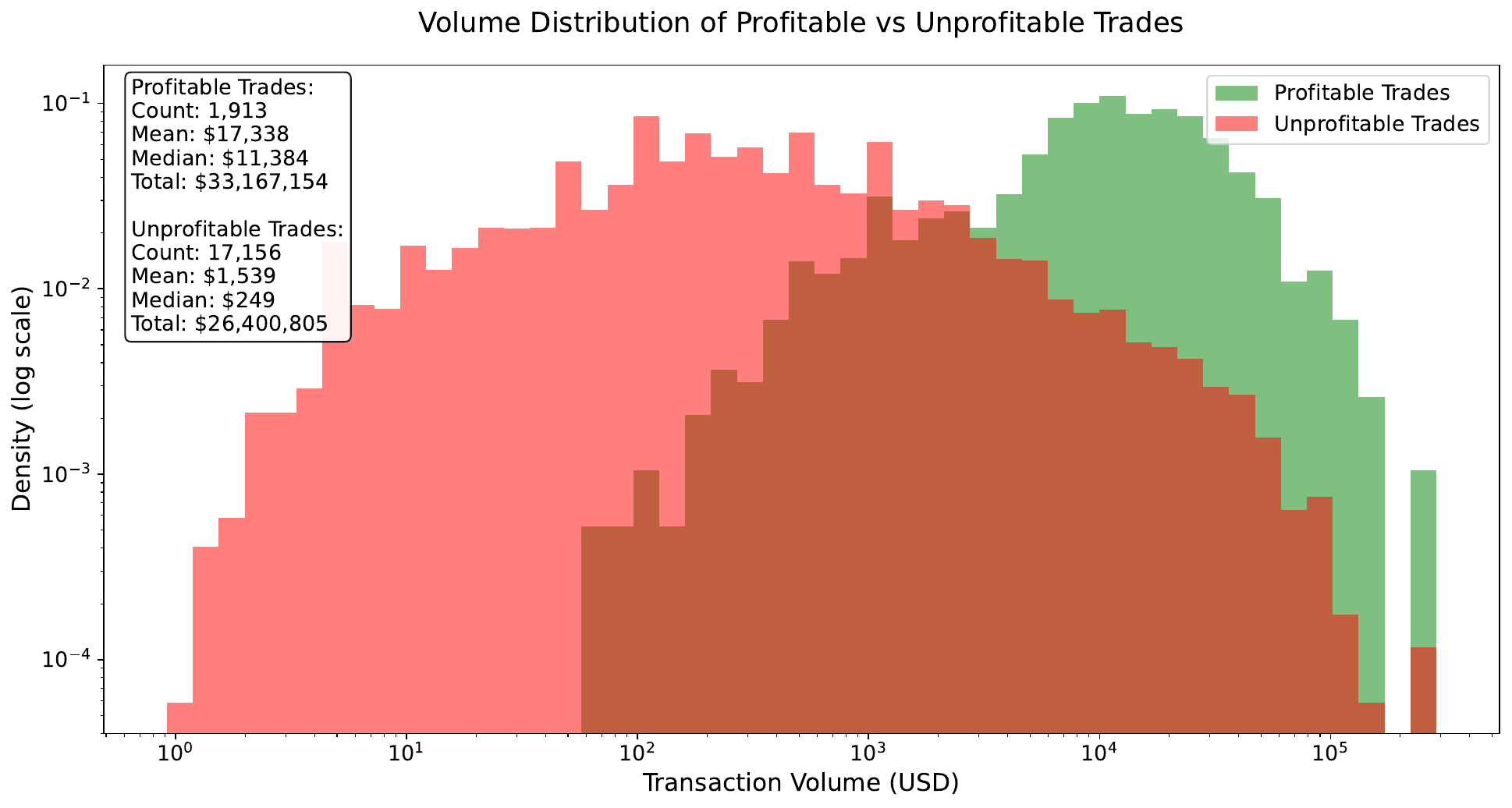}
    \caption{Volume Distribution of Profitable (Green) v.s. Unprofitable (Red) Transactions ETH/USDT pool}
    \label{fig:ETH_Noise}
\end{figure}

\begin{figure}[htbp]
    \centering
    \includegraphics[width=1\linewidth]{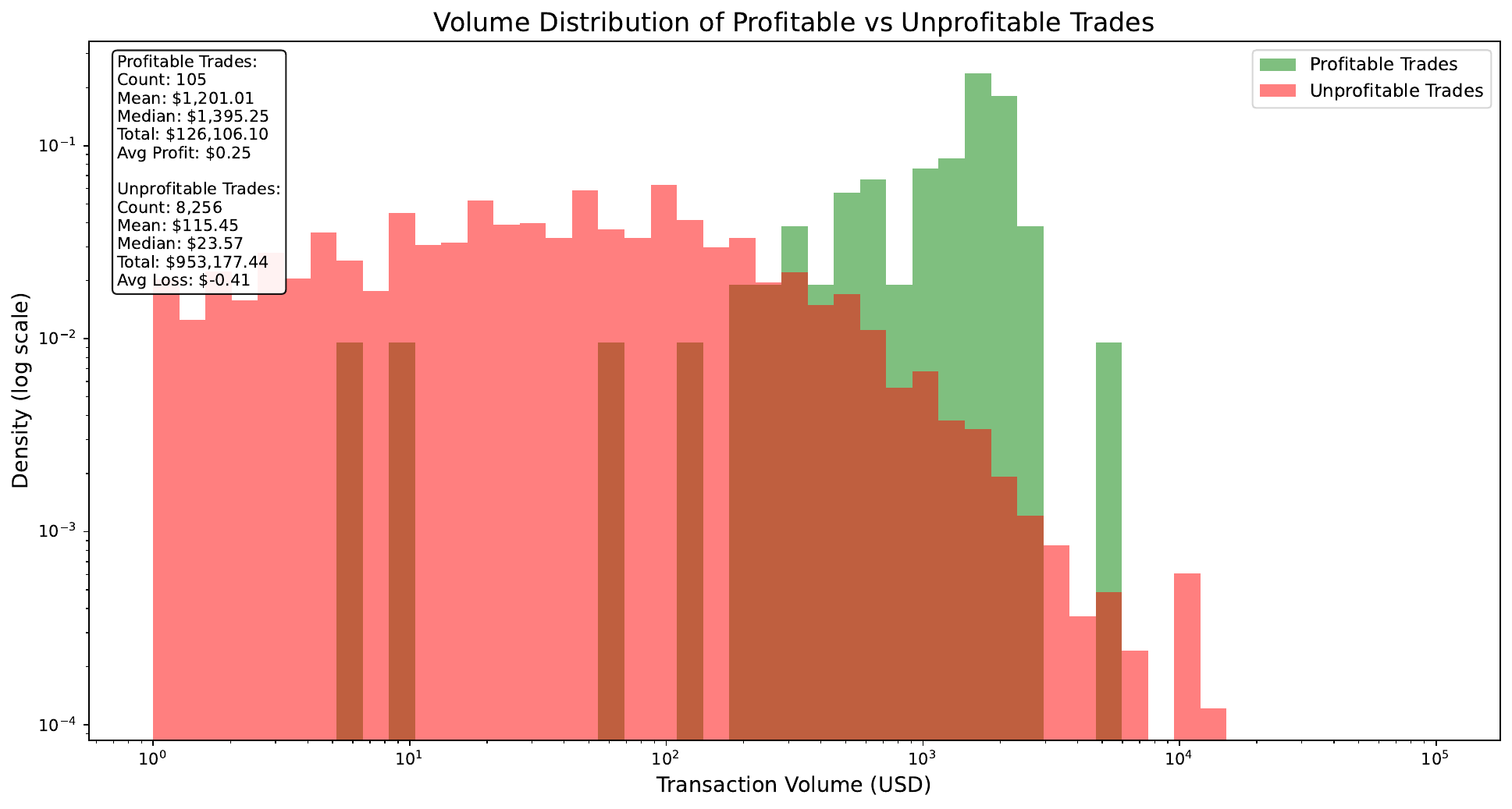}
    \caption{Volume Distribution of Profitable (Green) v.s. Unprofitable (Red) Transactions BNB/USDT pool}
    \label{fig:BNB_Noise}
\end{figure}

\begin{figure}[htbp]
    \centering
    \includegraphics[width=1\linewidth]{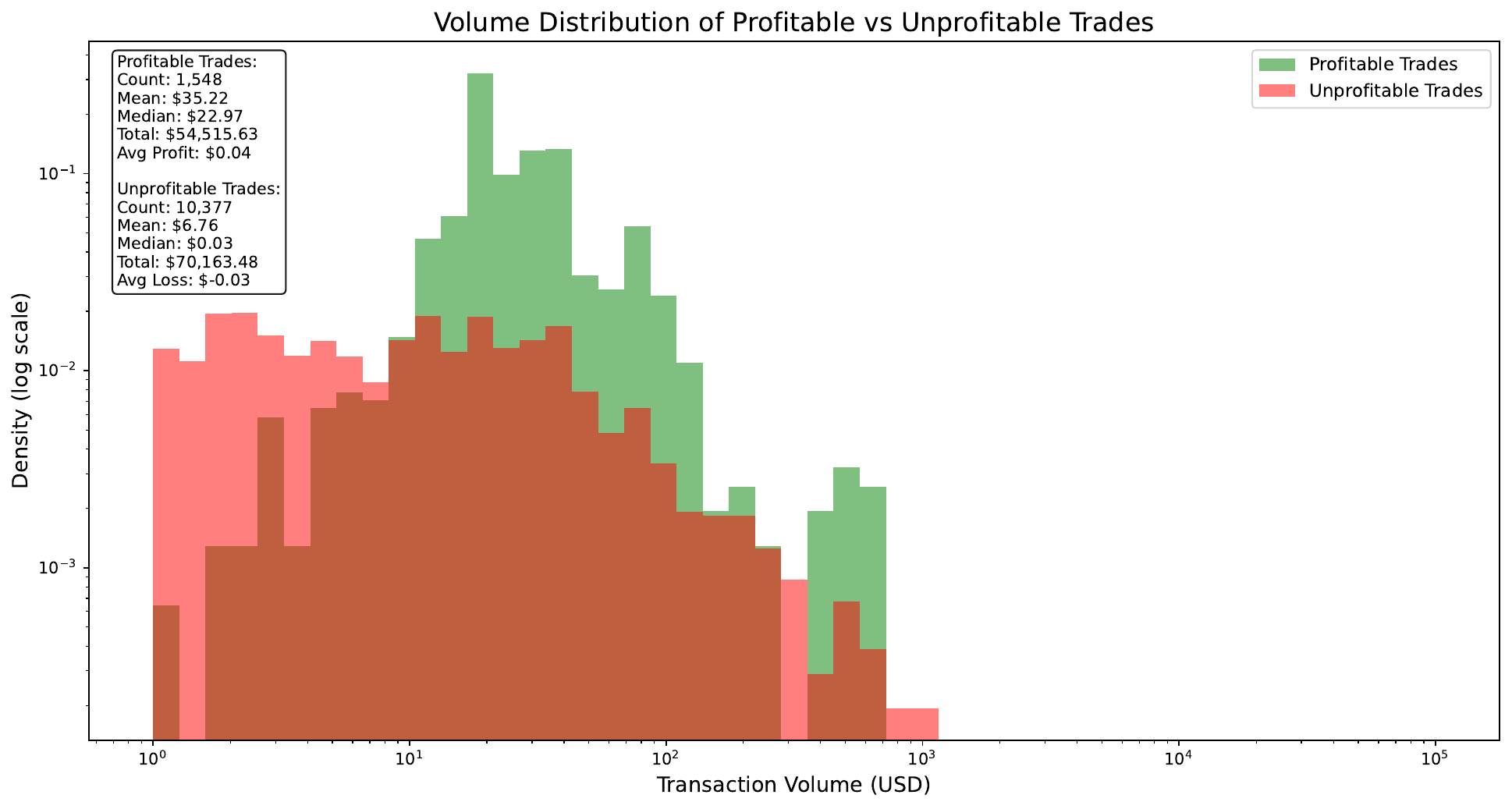}
    \caption{Volume Distribution of Profitable (Green) v.s. Unprofitable (Red) Transactions POL/USDT pool}
    \label{fig:POL_Noise}
\end{figure}

\begin{figure}[htbp]
    \centering
    \includegraphics[width=0.8\linewidth]{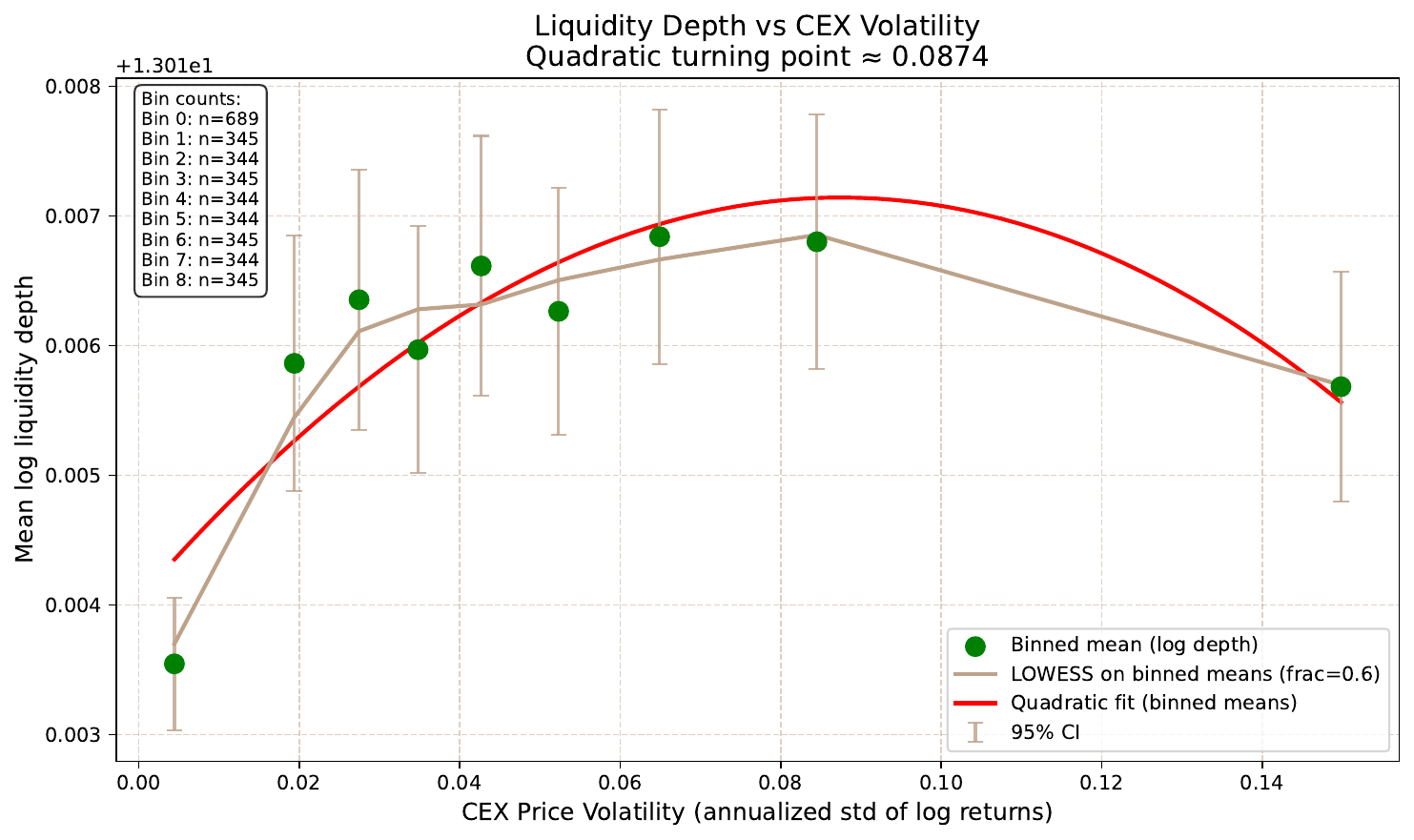}
    \caption{Hump-shape Liquidity Provision in USDC/ETH}
    \label{fig:hump_shape}
\end{figure}

\FloatBarrier


\subsection{Supplementary Tables}
\FloatBarrier

\begin{table}[htbp]
\centering
\caption{Analysis of Gas Price Determinants: Granger Causality and Regression Results}
\label{tab:regression_and_causality}

\begin{tabular}{l}
\textbf{Panel A: Granger Causality Tests} \\
\end{tabular}

\vspace{0.2cm}

\begin{tabular}{c|cc|cc}
\hline
& \multicolumn{2}{c|}{Price Gap → Gas} & \multicolumn{2}{c}{Volatility → Gas} \\
Lag & F-stat & p-value & F-stat & p-value \\
\hline
1 & 0.2583 & 0.6113 & 565.9885 & $<0.0001^{***}$ \\
2 & 0.2316 & 0.7933 & 131.1265 & $<0.0001^{***}$ \\
3 & 0.3485 & 0.7902 & 50.8777 & $<0.0001^{***}$ \\
4 & 0.2679 & 0.8987 & 27.0696 & $<0.0001^{***}$ \\
5 & 0.2888 & 0.9194 & 16.4579 & $<0.0001^{***}$ \\
6 & 0.2891 & 0.9424 & 11.1980 & $<0.0001^{***}$ \\
\hline
\end{tabular}

\vspace{0.5cm}

\begin{tabular}{l}
\textbf{Panel B: Regression Analysis of Volatility Effects} \\
\end{tabular}

\vspace{0.2cm}

\begin{tabular}{l|c|c|c}
\hline
Volatility Window & 100 & 150 & 200 \\
\hline
Coefficient & 857.41*** & 801.45*** & 771.82*** \\
            & (18.30) & (16.10) & (14.60) \\
R² & 0.1032 & 0.1152 & 0.1284 \\
Adjusted R² & 0.1032 & 0.1152 & 0.1284 \\
P-value & $<0.0001^{***}$ &$<0.0001^{***}$ &$<0.0001^{***}$ \\
\hline
\end{tabular}

\vspace{0.3cm}

\begin{minipage}{0.95\linewidth}
\footnotesize
\textit{Note:} Panel A shows Granger causality tests comparing predictive power of price ratios versus volatility.
Panel B presents regression results of gas prices on volatility calculated over different time windows.
Coefficients in Panel B are in billions ($\times 10^9$) wei with standard errors in parentheses.
*** indicates significance at 0.1\% level.
\end{minipage}
\end{table}

\begin{table}[htbp]
\centering
\caption{Granger Causality Tests: Volatility $\rightarrow$ Unprofitable Trading Volume (p-values)}
\label{tab:unprofitable_vol_comove}
\begin{tabular}{lccc}
\hline
Lag 
& ETH/USDT (30 mins) 
& POL/USDT (3 mins) 
& BNB/USDT (3 mins) \\
\hline
1 & $<0.0001^{***}$ & $0.0060^{***}$ & $0.4853$ \\
2 & $<0.0001^{***}$ & $0.0112^{**}$  & $0.0810^{*}$ \\
3 & $<0.0001^{***}$ & $0.0191^{**}$  & $0.0180^{**}$ \\
4 & $<0.0001^{***}$ & $0.0428^{**}$  & $0.0116^{**}$ \\
5 & $<0.0001^{***}$ & $0.0379^{**}$  & $0.0216^{**}$ \\
\hline
\multicolumn{4}{l}{\footnotesize Significance: $^{***}p<0.01$, $^{**}p<0.05$, $^{*}p<0.10$.}
\end{tabular}

\vspace{0.3cm}

\begin{minipage}{0.95\linewidth}
\footnotesize
\textit{Note:} This table shows Granger causality of time-aggregated price volatility on unprofitable transaction volume across 3 different pool. ETH/USDT pool transactions are aggregated at 30mins level; POL/USDT and BNB/USDT transactions are aggregated at 3mins level.
\end{minipage}
\end{table}

\FloatBarrier

\clearpage
\subsection{Remarks, Proofs, and Supplementary Theorems}
\FloatBarrier
\subsubsection{Remark: Compounding Fee v.s. Paid-out Fee}
\label{remark:why_paid_out}
In the baseline analysis of \S4, we adopt a "fee-extracted" formulation in which transaction fees are paid out to LPs and do not enter the pool reserves. Under this normalization, the constant-product invariant $K$ is unchanged by swaps, which allows us to isolate the effect of informed arbitrage on LP wealth and derive clean expressions for jump returns in Lemma~\ref{lem:Jpm} and the dominance result in Theorem~\ref{thm:crra_hjb_cfmm}.

In Uniswap V2 and similar protocols, transaction fees are instead retained within the pool, mechanically increasing reserves and causing the invariant $K$ to grow over time. Importantly, this institutional difference does not alter the economic content of our results. 
From the LP's perspective, fee compounding and fee payout are equivalent up to a timing and accounting transformation: total LP wealth after a trade can always be decomposed into (i) the mark-to-market value of reserves at the external price and (ii) cumulative fee revenue, regardless of whether fees are held inside or outside the pool.

In particular, under purely informed trading, winning arbitrageurs generate adverse selection losses for LPs through price impact. 
Fee compounding partially offsets these losses by increasing reserves, but does not reverse them. 
If one extracts the accumulated fees from the pool immediately after each trade-or equivalently tracks fees in a separate account-the resulting wealth dynamics coincide with the fee extracted model analyzed here. Consequently, the strictly negative jump returns in Lemma~\ref{lem:Jpm} and the no-liquidity benchmark in Theorem~\ref{thm:crra_hjb_cfmm} describe the same economic mechanism that operates in fee-compounding AMMs, abstracting from mechanical reinvestment of fees.

\subsubsection{Proof of Proposition \ref{prop:optimal_swap}}
\label{app:optimal_swap_baseline}
\begin{proof} 
Firstly, we will calculate a general expression of $\Delta B$ in terms of $\Delta A$. 
By the constant product market maker rule, we have 
$$
\left(R_A+\Delta A\right)\left(R_B+\Delta B\right)-R_A R_B=0
$$
Expanding and simplifying the expression, we obtain 
$$
\begin{gathered}
R_A R_B+R_A \Delta B+\Delta A R_B+\Delta A \Delta B-R_A R_B=0 \\ 
R_A \Delta B+\Delta A R_B+\Delta A \Delta B=0 \\
\Delta B=-\frac{\Delta A R_B}{R_A+\Delta A}
\end{gathered}
$$
\textbf{Case 2.} when $P_t<Q(t^{-}) e^{-\gamma}$, more specifically, $P_t < \frac{R^A_{t-}}{R^B_{t-}} e^{-\gamma}$, the profit function $\pi$ is, 
\begin{equation}
\begin{aligned}
\pi(\Delta A) &= -\Delta A \cdot P_t-e^{-\gamma} \Delta B \\
&= -\Delta A \cdot P_t + e^{-\gamma}(\frac{\Delta A R_B}{R_A+\Delta A})
\end{aligned}
\end{equation}
Then we take the first order derivative of the profit function w.r.t. $\Delta A$, giving that
$$
\begin{aligned}
\frac{d \pi(\Delta A)}{d \Delta A} &= -P_t + e^{-\gamma} \left(\frac{ R^B_{t-}(R^A_{t-} + \Delta A) + \Delta A R^B_{t-}}{(R^A_{t-}+\Delta A)^2}\right) \\
&= -P_t + \frac{e^{-\gamma}R^B_{t-}R^A_{t-}}{(R^A_{t-} + \Delta A)^2} 
\end{aligned}
$$
Setting the RHS of the equation to 0, we obtain
$$
\begin{aligned}
P_t = \frac{e^{-\gamma}R^B_{t-}R^A_{t-}}{(R^A_{t-} + \Delta A)^2} \\
R^A_{t-} + \Delta A = \sqrt{\frac{e^{-\gamma}R^B_{t-}R^A_{t-}}{P_t}} \\
\Delta A = \sqrt{\frac{e^{-\gamma}k}{P_t}} - R^A_{t-} 
\end{aligned}
$$
Therefore, we obtain the expression for optimal $\Delta A^* $ when selling. To examine the validity of the first order condition, we calculate the second order derivative of the profit function(we omit $t$ for simplicity here),  which gives 
$$
\frac{d\pi^2(\Delta A)}{d(\Delta A)^2}=-2 R_B e^{-\gamma} \frac{\Delta A}{\left(R_A+\Delta A\right)^3}
$$
Since under any condition, $(R_A + \Delta A) > 0$, we have that when the arbitrageur buys asset $A$ from the AMM($\Delta A <0$), the profit function is concave up, thus $\Delta A^{*}$ is only an optimal solution when the arbitrageur is selling asset $A$ to the AMM, thus we write 
$$
\Delta A^{*}_{s} = \sqrt{\frac{e^{-\gamma}k}{P_t}} - R^A_{t-}
$$
And since the arbitrageur cannot sell asset $A$ more than what she owns, we have that 
$$
\Delta A_{s}^*= \min\{\sqrt{\frac{k e^{-\gamma}}{P_t}}-R^A_{t-}, A_t\}
$$ 
\textbf{Case 1.} When the arbitrageur buys from the AMM, his optimal decision is determined by three conditions: 1. \textbf{Price Matching Condition}, 2. \textbf{Availability of asset $B$}, and 3. \textbf{AMM Reserves}, \\
Consider the first condition, the marginal price of buying asset A is given by
$$
Q'=\frac{R_B e^{+\gamma}}{\left(R_A+\Delta A\right)}
$$
We set the expression equal to the market price $P$, obtain
$$
\Delta A=\frac{R_B e^{\gamma}}{P}-R_A<0 \Rightarrow \frac{R_B e^{\gamma}}{P} < R_A
$$
For condition 2, from the previous calculation, we have that 
$$
-\frac{R_B \Delta A}{R_A+\Delta A} e^{+\gamma} \leq B
$$
And for condition 3, we have
$$
-\Delta A < R_A
$$
With those three conditions in hand, we can now compute the optimal buying amount for the arbitrageur. First, we compute the maximum $\Delta A$ satisfying the price matching condition:
$$
\Delta A_{\max }=\frac{R_B e^{+\gamma}}{P_t} - R_A
$$
If $\Delta A_{\max } \ge 0$, no profitable trade exists, and if $\Delta A_{\max }<0$, we now turn to condition 2 to compute the maximum $\Delta A$ the arbitrageur can afford $\Delta A_{\text {aff}}$, which is the solution of $\Delta A$ for the equality below 
$$
-\frac{R_B \Delta A}{R_A-\Delta A} e^{+\gamma}=B_t
$$
Solve for $\Delta A_{\text{aff}}$ :
$$
\Delta A_{\text{aff}}=-\frac{B_t\left(R_A\right)}{R_B e^{+\gamma}+B_t}
$$
Thus we can determine the optimal $\Delta A$
$$
\Delta A^*=\max \left\{\Delta A_{\max }, \Delta A_{\text{aff}}\right\}
$$
the reason we put a maximum is because both values are negative, and therefore we conclude our proof.
\end{proof}

\subsubsection{Proof of Proposition \ref{prop:b_s_asymmetry}}
\label{app:proof_for_b_s_asymmetry}
Based on the constant product market-making rule, for consistency, we call the risky asset asset A and the numeraire asset B. The AMM's price of asset A is determined by $R_A/RB$ given that $f(x, y) = x \cdot y = k$. \\
Now suppose that there are symmetric price changes in the CEX such that at $P_{t-} = Q_{t-}$, and at time $t$, we have $P_{t+} - Q_{t-} = Q_{t-} - P_{t-}$, then for the AMM price to converge to a higher price, given that $k$ does not change, we can calculate the rate of convergence per unit change in the pool by calculating the derivative of $R_A$'s AMM price $\frac{R_B}{R_A}$ relative to $R_A$, which is 
$$
\frac{d\frac{R_B}{d R_A}}{dR_A} = -\frac{2k}{R^3_A} = - \frac{2R_B}{R^2_A}
$$
From the simple derivation above, we can see that the absolute change of price for asset $A$ is determined by 
$$
\frac{2R_B}{R_A^2}
$$
Therefore, the more expensive the asset(mathematically the larger the fraction $R_B/R_A$, the larger price impact there is for addition of one unit of risk asset addition to the pool. 

\begin{figure}
    \centering
    \includegraphics[width=0.9\linewidth]{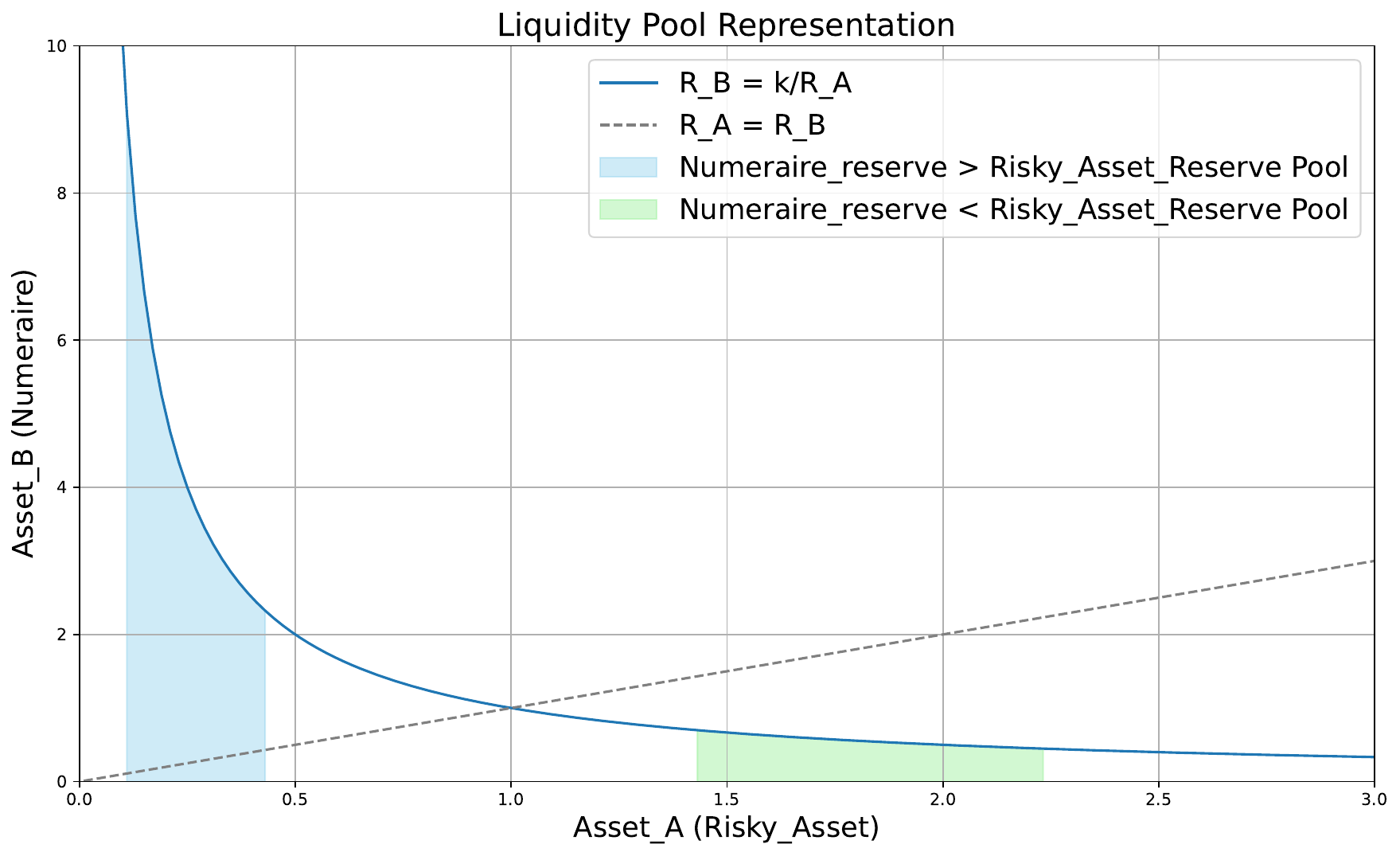}
    \caption{Illustration of Buy/Sell Asymmetry}
    \label{fig:liquidity_compare}
\end{figure} 

\subsubsection{Proof of Lemma~\ref{lem:Jpm}}
\label{app:proof_for_lemma_Jpm}
\begin{proof}
\textbf{Buy-side correction.}
Let $K=R_AR_B=R_A^2Q_-$ and $m=P/Q_-$. A buy-side correction sets $Q'=Pe^{-\gamma}$ while keeping $K$ fixed, hence
\[
R^{A'}=\sqrt{\frac{K}{Q'}}=R_A\sqrt{\frac{Q_-}{Pe^{-\gamma}}}=R_A\sqrt{\frac{e^{\gamma}}{m}},
\qquad
R^{B'}=\sqrt{KQ'}=\sqrt{R_A^2Q_-Pe^{-\gamma}}=R_B\sqrt{me^{-\gamma}}.
\]
The net amount of $B$ added to reserves is $y_{\mathrm{net}}:=R^{B'}-R_B$.
Under a fee-extracted rule, the trader pays total $y_{\mathrm{tot}}=e^{\gamma}y_{\mathrm{net}}$, so the paid-out fee is
$y_{\mathrm{fee}}:=y_{\mathrm{tot}}-y_{\mathrm{net}}=(e^{\gamma}-1)(R^{B'}-R_B)$.
Thus the LP's post-correction total value at price $P$ is
\[
V_{\mathrm{post}}
:=PR^{A'}+ R^{B'} + y_{\mathrm{fee}}
=PR^{A'} + R^{B'} + (e^{\gamma}-1)(R^{B'}-R_B).
\]
Using $R_AP=R_B m$ and $R^{B'}-R_B=R_B(\sqrt{me^{-\gamma}}-1)$, we obtain
\[
PR^{A'} + R^{B'}
= R_A P\frac{e^{\gamma/2}}{\sqrt{m}} + R_A P\frac{e^{-\gamma/2}}{\sqrt{m}}
= R_B\sqrt{m}\big(e^{\gamma/2}+e^{-\gamma/2}\big),
\]
and
\[
y_{\mathrm{fee}}
=(e^{\gamma}-1)R_B\big(\sqrt{me^{-\gamma}}-1\big)
=R_B\Big(\big(e^{\gamma/2}-e^{-\gamma/2}\big)\sqrt{m}-(e^{\gamma}-1)\Big).
\]
Therefore,
\[
V_{\mathrm{post}}
=R_B\Big(2e^{\gamma/2}\sqrt{m}-e^{\gamma}+1\Big).
\]
The pre-correction AMM value is $V_{\mathrm{pre}}:=PR_A+R_B=R_B(m+1)$, hence the proportional jump return is
\[
J^{-}(m;\gamma)=\frac{V_{\mathrm{post}}}{V_{\mathrm{pre}}}-1
=\frac{2e^{\gamma/2}\sqrt{m}-m-e^{\gamma}}{m+1}
=-\frac{(\sqrt{m}-e^{\gamma/2})^2}{m+1}.
\]

\textbf{Sell-side correction.}
A sell-side correction sets $Q'=Pe^{+\gamma}$ with $K$ fixed, hence
\[
R^{A'}=\sqrt{\frac{K}{Q'}}=R_A\sqrt{\frac{Q_-}{Pe^{\gamma}}}=R_A\frac{e^{-\gamma/2}}{\sqrt{m}},
\qquad
R^{B'}=\sqrt{KQ'}=R_B\sqrt{me^{\gamma}}.
\]
Now $R^{B'}-R_B<0$ and the total $B$ paid out by the pool is $y_{\mathrm{tot}}:=R_B-R^{B'}$. Under the fee rule in log units,
the trader receives $e^{-\gamma}y_{\mathrm{tot}}$ and the paid-out fee to the LP is $y_{\mathrm{fee}}:=(1-e^{-\gamma})y_{\mathrm{tot}}$.
Thus
\[
V_{\mathrm{post}}:=PR^{A'} + R^{B'} + y_{\mathrm{fee}}
=PR^{A'} + R^{B'} + (1-e^{-\gamma})(R_B-R^{B'}).
\]
Using $R_AP=R_B m$ and $R_B-R^{B'}=R_B(1-\sqrt{me^{\gamma}})$ yields
\[
V_{\mathrm{post}}
=R_B\Big(2e^{-\gamma/2}\sqrt{m}+1-e^{-\gamma}\Big),
\]
so with $V_{\mathrm{pre}}=R_B(m+1)$,
\[
J^{+}(m;\gamma)
=\frac{V_{\mathrm{post}}}{V_{\mathrm{pre}}}-1
=\frac{2e^{-\gamma/2}\sqrt{m}-m-e^{-\gamma}}{m+1}
=-\frac{(\sqrt{m}-e^{-\gamma/2})^2}{m+1}.
\]
\end{proof}

\subsubsection{Full Proof of Theorem~\ref{thm:crra_hjb_cfmm}}
\label{app:proof_thm_crra_baseline}

\begin{proof}[Proof of Theorem~\ref{thm:crra_hjb_cfmm}]

Work on a filtered probability space supporting a Brownian motion $B$ and two independent marked Poisson processes
$\{(N_t^{-},M_k^{-})\}$ and $\{(N_t^{+},M_k^{+})\}$ with intensities $\lambda^{-},\lambda^{+}$ and mark laws equal to the
conditional laws of $M^{-},M^{+}$. At a $-$ (resp. $+$) jump, wealth is multiplied by $1+\theta J^{-}(M^{-};\gamma)$
(resp. $1+\theta J^{+}(M^{+};\gamma)$). The admissibility condition in the theorem,
\[
1+\theta J^{-}(M^{-};\gamma)>0\ \text{a.s.},\qquad 1+\theta J^{+}(M^{+};\gamma)>0\ \text{a.s.},
\]
ensures the jump multipliers are strictly positive, so $W_t>0$ a.s. whenever $W_0>0$.

Fix a constant control $\theta\in\Theta_{\mathrm{adm}}$ and let $W$ follow~\ref{eq:wealth_dynamics}. For any $f\in C^2((0,\infty))$
with suitable growth, the infinitesimal generator of $W$ is
\begin{align}
\label{eq:baseline_lagr}
(\mathcal{L}^{\theta} f)(w)
&=
w(\mu-r^{*}\theta) f'(w)
+\frac12 w^2\sigma^2 f''(w)
+\lambda^{-}\Big(\mathbb{E}\big[f(w(1+\theta J^{-}(M^{-};\gamma)))\big]-f(w)\Big)
\nonumber\\
&\quad
+\lambda^{+}\Big(\mathbb{E}\big[f(w(1+\theta J^{+}(M^{+};\gamma)))\big]-f(w)\Big).
\end{align}

The objective is
\[
V(w)=\sup_{\{\theta_t\}}\ \mathbb{E}\Big[\int_0^\infty e^{-\rho t}\frac{W_t^{1-\eta}}{1-\eta}\,dt\ \Big|\ W_0=w\Big].
\]
Standard dynamic programming for discounted infinite-horizon control of jump-diffusions yields the HJB equation:
\begin{equation}
\label{eq:HJB}
0=\sup_{\theta\in\Theta_{\mathrm{adm}}}\left\{
\frac{w^{1-\eta}}{1-\eta}
+(\mathcal{L}^{\theta}V)(w)
-\rho V(w)
\right\}.
\end{equation}

Consider the candidate value function
\[
V(w)=\frac{C}{1-\eta} w^{1-\eta},
\qquad C>0.
\]
Then
\[
V'(w)=Cw^{-\eta},\qquad V''(w)=-\eta C w^{-\eta-1}.
\]
Substituting $V$ into \ref{eq:baseline_lagr} gives
\begin{align*}
(\mathcal{L}^{\theta}V)(w)
&=
w(\mu-r^{*}\theta)\,Cw^{-\eta}
+\frac12 w^2\sigma^2(-\eta C w^{-\eta-1})\\
&\quad
+\lambda^{-}\left(\frac{C}{1-\eta}w^{1-\eta}\mathbb{E}\Big[(1+\theta J^{-}(M^{-};\gamma))^{1-\eta}\Big]-\frac{C}{1-\eta}w^{1-\eta}\right)\\
&\quad
+\lambda^{+}\left(\frac{C}{1-\eta}w^{1-\eta}\mathbb{E}\Big[(1+\theta J^{+}(M^{+};\gamma))^{1-\eta}\Big]-\frac{C}{1-\eta}w^{1-\eta}\right)\\
&=
C w^{1-\eta}\left(\mu-r^{*}\theta-\frac{\eta}{2}\sigma^2\right)
+\frac{C}{1-\eta} w^{1-\eta}\lambda^{-}\Big(\mathbb{E}[(1+\theta J^{-})^{1-\eta}]-1\Big)\\
&\quad
+\frac{C}{1-\eta} w^{1-\eta}\lambda^{+}\Big(\mathbb{E}[(1+\theta J^{+})^{1-\eta}]-1\Big),
\end{align*}
where $J^\pm$ abbreviates $J^\pm(M^\pm;\gamma)$.
Insert this into (\ref{eq:HJB}) and divide by $\frac{C}{1-\eta}w^{1-\eta}$ to obtain
\[
0=\sup_{\theta\in\Theta_{\mathrm{adm}}}\left\{
\frac{1}{C}
-\rho
+(1-\eta)\left(\mu-r^{*}\theta-\frac{\eta}{2}\sigma^2\right)
+\lambda^{-}\Big(\mathbb{E}[(1+\theta J^{-})^{1-\eta}]-1\Big)
+\lambda^{+}\Big(\mathbb{E}[(1+\theta J^{+})^{1-\eta}]-1\Big)
\right\}.
\]
Equivalently,
\[
\frac{1}{C}=\rho-(1-\eta)\sup_{\theta\in\Theta_{\mathrm{adm}}}\Phi(\theta),
\]
where $\Phi(\theta)$ is exactly the function defined in the theorem. This proves the claimed formula for $C$.
The finiteness condition $\rho>(1-\eta)\Phi(\theta^{*})$ implies $C>0$.

We show $\Phi$ is strictly concave on $\Theta_{\mathrm{adm}}$.
Take first and second differentiation of $\Phi$ we get:
\begin{align*}
\Phi'(\theta)
&=
-r^{*}
+\lambda^{-}\mathbb{E}\Big[J^{-}(M^{-};\gamma)(1+\theta J^{-}(M^{-};\gamma))^{-\eta}\Big]
+\lambda^{+}\mathbb{E}\Big[J^{+}(M^{+};\gamma)(1+\theta J^{+}(M^{+};\gamma))^{-\eta}\Big],
\end{align*}

\begin{align*}
\Phi''(\theta)
&=
-\eta\lambda^{-}\mathbb{E}\Big[(J^{-}(M^{-};\gamma))^{2}(1+\theta J^{-}(M^{-};\gamma))^{-\eta-1}\Big]\\
&\quad
-\eta\lambda^{+}\mathbb{E}\Big[(J^{+}(M^{+};\gamma))^{2}(1+\theta J^{+}(M^{+};\gamma))^{-\eta-1}\Big].
\end{align*}
Each expectation is nonnegative and, unless $J^{-}\equiv 0$ a.s.\ and $J^{+}\equiv 0$ a.s., at least one term is strictly
positive; with $\eta>0$ it follows that $\Phi''(\theta)<0$ on $\Theta_{\mathrm{adm}}$, so $\Phi$ is strictly concave.
Because $\Theta_{\mathrm{adm}}$ is compact and $\Phi$ is continuous on it, an optimizer exists; strict concavity yields
uniqueness. If the unique optimizer is interior, it satisfies $\Phi'(\theta^{*})=0$, which is the FOC stated in the theorem.

Then we verify that the candidate value function is in fact optimal, let $\theta^{*}$ denote the unique maximizer of $\Phi$ on $\Theta_{\mathrm{adm}}$ and define
\[
V(w)=\frac{C}{1-\eta}w^{1-\eta}
\quad\text{with}\quad
\frac{1}{C}=\rho-(1-\eta)\Phi(\theta^{*}).
\]
For any admissible control $\{\theta_t\}$, define the process
\[
\mathcal{M}_t
:=
e^{-\rho t}V(W_t)+\int_0^t e^{-\rho s}\frac{W_s^{1-\eta}}{1-\eta}\,ds.
\]
Apply It\^o's formula for jump-diffusions to $e^{-\rho t}V(W_t)$ (using the generator~\ref{eq:baseline_lagr} and the compensator of
the marked Poisson processes). One obtains the decomposition
\[
d\mathcal{M}_t
=
e^{-\rho t}\Big(\frac{W_t^{1-\eta}}{1-\eta}+(\mathcal{L}^{\theta_t}V)(W_t)-\rho V(W_t)\Big)\,dt
+ d(\text{local martingale}).
\]
By the HJB inequality~\ref{eq:HJB}, for every $w$ and admissible $\theta$,
\[
\frac{w^{1-\eta}}{1-\eta}+(\mathcal{L}^{\theta}V)(w)-\rho V(w)\le 0,
\]
hence the drift term of $\mathcal{M}_t$ is nonpositive under any admissible control. Therefore $\mathcal{M}_t$ is a
supermartingale and
\[
V(w)=\mathcal{M}_0\ge \mathbb{E}[\mathcal{M}_t]
=\mathbb{E}\left[\int_0^t e^{-\rho s}\frac{W_s^{1-\eta}}{1-\eta}\,ds+e^{-\rho t}V(W_t)\right].
\]
To pass to $t\to\infty$, we verify transversality. Under any constant control $\theta\in\Theta_{\mathrm{adm}}$, applying
It\^o's formula to $W_t^{1-\eta}$ yields
\[
\frac{d(W_t^{1-\eta})}{W_{t-}^{1-\eta}}
=(1-\eta)\Big(\mu-r^{*}\theta-\frac{\eta}{2}\sigma^2\Big)\,dt
+\big((1+\theta J^{-})^{1-\eta}-1\big)\,dN_t^{-}
+\big((1+\theta J^{+})^{1-\eta}-1\big)\,dN_t^{+}
+(1-\eta)\sigma\,dB_t,
\]
and taking expectations gives
\[
\mathbb{E}[W_t^{1-\eta}]
=w^{1-\eta}\exp\big((1-\eta)\Phi(\theta)\,t\big).
\]
Hence for the candidate $V$,
\[
\mathbb{E}\big[e^{-\rho t}V(W_t)\big]
=\frac{C}{1-\eta}\,w^{1-\eta}\exp\big(((1-\eta)\Phi(\theta)-\rho)t\big)
\to 0
\quad\text{whenever}\quad
\rho>(1-\eta)\Phi(\theta).
\]
In particular, the condition $\rho>(1-\eta)\Phi(\theta^{*})$ guarantees
$\lim_{t\to\infty}\mathbb{E}[e^{-\rho t}V(W_t^{\theta^{*}})]=0$ under $\theta^{*}$.
Therefore letting $t\to\infty$ yields
\[
V(w)\ge \sup_{\{\theta_t\}}\ \mathbb{E}\left[\int_0^\infty e^{-\rho s}\frac{W_s^{1-\eta}}{1-\eta}\,ds\right].
\]

Finally, under the constant feedback control $\theta_t\equiv \theta^{*}$, the HJB is attained with equality, so the drift
of $\mathcal{M}_t$ is identically zero and $\mathcal{M}_t$ is a martingale. Taking expectations and letting $t\to\infty$
using the transversality just shown gives
\[
V(w)=\mathbb{E}\left[\int_0^\infty e^{-\rho s}\frac{W_s^{1-\eta}}{1-\eta}\,ds\right],
\]
so $V$ is the value function and $\theta^{*}$ is optimal. 

If $r^{*} > 0$ and $\theta_{\min}\ge 0$. Then, since Lemma~\ref{lem:Jpm} implies
$J^{-}(M^{-};\gamma)<0$ and $J^{+}(M^{+};\gamma)<0$ almost surely, we have for all $\theta\in\Theta_{\mathrm{adm}}$:
\[
\Phi'(\theta)
=
-r^{*}
+\lambda^{-}\mathbb{E}\!\left[J^{-}(M^{-};\gamma)\,(1+\theta J^{-}(M^{-};\gamma))^{-\eta}\right]
+\lambda^{+}\mathbb{E}\!\left[J^{+}(M^{+};\gamma)\,(1+\theta J^{+}(M^{+};\gamma))^{-\eta}\right]
\;<\; -r^{*} \;<\;0,
\]
hence the unique optimizer is $\theta^{*}=\theta_{\min}$ (in particular, if $\theta_{\min}=0$, then $\theta^{*}=0$).
\end{proof}

\subsubsection{Proof for Lemma~\ref{lem:noise_trader_fee}}
\label{app:proof_for_noise_trader}
\begin{proof}
The unique optimizer for noise trader's optimization problem is given by
\begin{equation}
\label{eq:q_noise_star_sym}
q_t^{N, *}=K_t^\beta \xi_{N, t}
\end{equation}
In particular, conditional on $K_t^{\beta}$, the entry condition is
\begin{equation}
\label{eq:noise_entry}
\frac{1}{2} K_t^\beta \xi_{N, t}^2 \geq g\left(v_t\right) \Longleftrightarrow\left|\xi_{N, t}\right| \geq \sqrt{\frac{2 g\left(v_t\right)}{K_t^\beta}} .
\end{equation}
Hence the realized noise trade is
$$
q_t^N=K_t^\beta \xi_{N, t} \cdot \mathbf{1}\left\{\left|\xi_{N, t}\right| \geq \sqrt{\frac{2 g\left(v_t\right)}{K_t^\beta}}\right\} .
$$
Moreover, conditional on $\mathcal{F}_{t-}$, the buy and sell volumes are symmetric:
$$
\mathbb{E}\left[\left(q_t^N\right)^{+} \mid \mathcal{F}_{t-}\right]=\mathbb{E}\left[\left(q_t^N\right)^{-} \mid \mathcal{F}_{t-}\right]=\frac{1}{2} \mathbb{E}\left[\left|q_t^N\right| \mid \mathcal{F}_{t-}\right] .
$$
Then the instantaneous paid-out noise fee rate (in units of $B$ ):
$$
\dot{F}_t^N=Q_t\left[\left(e^\gamma-1\right)\left(q_t^N\right)^{+}+\left(1-e^{-\gamma}\right)\left(q_t^N\right)^{-}\right] .
$$
Then conditional on $\mathcal{F}_{t-}$, symmetry implies
$$
\mathbb{E}\left[\dot{F}_t^N \mid \mathcal{F}_{t-}\right]=Q_t \cdot \frac{\left(e^\gamma-e^{-\gamma}\right)}{2} \cdot \mathbb{E}\left[\left|q_t^N\right| \mid \mathcal{F}_{t-}\right] .
$$
Now compute $\mathbb{E}\left|q_t^N\right|$ under the threshold rule. Since $q_t^N=K_t^\beta \xi 1\left\{|\xi| \geq a_t\right\}$ with $a_t:=\sqrt{\frac{2 g\left(v_t\right)}{K_t^\beta}},$ and $\xi \sim \mathcal{N}\left(0, \sigma_N^2\right)$, we have $\mathbb{E}\left[\left|q_t^N\right| \mid \mathcal{F}_{t-}\right]=K_t^\beta \mathbb{E}\left[|\xi| \mathbf{1}\left\{|\xi| \geq a_t\right\}\right] .$
A standard truncated-normal identity gives $\mathbb{E}\left[|\xi| \mathbf{1}\left\{|\xi| \geq a_t\right\}\right]=2 \sigma_N \varphi\left(\frac{a_t}{\sigma_N}\right),
$
where $\varphi$ is the standard normal pdf. Therefore
$$
\mathbb{E}\left[\dot{F}_t^N \mid \mathcal{F}_{t-}\right]=Q_t \cdot\left(e^\gamma-e^{-\gamma}\right) \cdot K_t^\beta \sigma_N \varphi\left(\frac{1}{\sigma_N} \sqrt{\frac{2 g\left(v_t\right)}{K_t^\beta}}\right) .
$$
Moreover, for any fixed $K>0$, if $g'(v)>0$ then
\begin{equation}
\label{eq:noise_fee_v_derivative}
\frac{\partial}{\partial v}\,
\mathbb E[\dot F_t^{N}\mid K_t=K, Q_t]
=
-\,Q_t\cdot (e^\gamma-e^{-\gamma})\cdot \frac{g'(v)}{\sigma_N}\,\varphi\!\big(z(v,K)\big)
\;<\;0.
\end{equation}
\end{proof}

\subsubsection{Proof for Proposition~\ref{prop:race_overrun}}
\label{app:proof_for_race_overrun}
\begin{proof}

Fix $\Delta_t>0$ and pool depth $K_t^{\beta}$. For a given belief $\hat N_i$, substitute $\pi_i\approx 1/\hat N_i$ into the entrant's
objective:
\[
\max_{q_i\ge 0}\ \frac{1}{\hat N_i}q_i\Delta_t-\frac{q_i^2}{2K_t^{\beta}}-g_t.
\]
This is a concave quadratic in $q_i$. The first-order condition yields
\[
\frac{1}{\hat N_i}\Delta_t-\frac{q_i}{K_t^{\beta}}=0
\quad\Rightarrow\quad
q_i^*=K_t^{\beta}\frac{\Delta_t}{\hat N_i},
\]

The corresponding maximal expected net payoff equals
\[
\mathbb E[\Pi_i(q_i^*)\mid \Delta_t,\hat N_i]=\frac{K_t^{\beta}\,\Delta_t^2}{2\hat N_i^2}-g_t,
\]
so agent $i$ enters iff
\begin{equation}
\label{eq:entry_condition}
|\Delta_t|\ \ge\ \sqrt{\frac{2g_t}{K_t^{\beta}}}\ \hat N_i.
\end{equation}

Given a realized set of entrants and winner $w$, the total overrun quantity is the sum of overrun arbitrageurs' submitted quantities:
\[
L_A=\sum_{i\neq w} q_i^*
=K_t^{\beta}\Delta_t \sum_{i\neq w}\frac{1}{\hat N_i},
\]
In the unbiased benchmark $\hat N_i=N$, this becomes
$L_A\approx K_t^{\beta}\Delta_t(1-1/N)$, implying the displayed conditional expectation.

Finally, since $\Delta_t\mid v_t\sim\mathcal N(0,v_t)$, we have
$\mathbb E[|\Delta_t|\mid v_t]=\sqrt{\frac{2}{\pi}}\sqrt{v_t}$.
Averaging $K_t^{\beta}(1-1/N)\Delta_t$ over $|\Delta_t|$ and over the induced entry/crowding produces scaling factor summarized by $\kappa_{\mathrm{bel}}\in[\frac12,1]$ and larger under
overconfidence due to convexity of $\frac{1}{x}$, for simplicity we will take $\kappa_{bel} = 1$, and in the future work we can explore  $\kappa_{bel}$ that depends on exogenous variables like $v_t$.
\end{proof}

\subsubsection{Proof for Lemma~\ref{lem:Jext}}
\label{app:proof_for_Jext}
\begin{proof}
The derivation is the same up to the winner of the race correcting the price, write $U^- = u$, after the overrun flow, the new reserves are 
\[
\begin{aligned}
& R_A^{+}=(1-u) R_A^b \\
& R_B^{+}=R_B^b /\left(1- u\right)
\end{aligned}
\]
The inventory value becomes 
$$
V_{+}^{\mathrm{inv}}=P R_A^{+}+R_B^{+}=P(1-u) R_A^b+\frac{R_B^b}{1-u} .
$$
Now express $P R_A^b$ and $R_B^b$ in a common unit. From boundary condition $R_B^b=Q^b R_A^b=P e^{-\gamma} R_A^b$, we have
$$
P R_A^b=e^\gamma R_B^b .
$$
So
$$
V_{+}^{\mathrm{inv}}=(1-u) e^\gamma R_B^b+\frac{R_B^b}{1-u}=R_B^b\left((1-u) e^\gamma+\frac{1}{1-u}\right) .
$$
Use $R_B^b=R_B \sqrt{m} e^{-\gamma / 2}$ :
$$
V_{+}^{\mathrm{inv}}=R_B \sqrt{m} e^{-\gamma / 2}\left((1-u) e^\gamma+\frac{1}{1-u}\right)=R_B \sqrt{m}\left((1-u) e^{\gamma / 2}+\frac{e^{-\gamma / 2}}{1-u}\right) .
$$
Adding the fee term $\left(e^\gamma-1\right)\left(\frac{R_B \sqrt{m} e^{-\gamma / 2}}{1-u}-R_B\right)$, we have
\[
V_{+}^{\mathrm{tot}}=R_B\left(e^{\gamma / 2} \sqrt{m}\left((1-u)+\frac{1}{1-u}\right)+1-e^\gamma\right)
\]
and 
\[
J_{\mathrm{ext}}^{-}(m, u ; \gamma)=\frac{V_+^{\mathrm{tot}}}{V^-} = \frac{e^{\gamma / 2} \sqrt{m}\left((1-u)+\frac{1}{1-u}\right)-m-e^\gamma}{m+1} .
\]
Since $1-u + \frac{1}{1-u} = \frac{u^2-2u+2}{1-u} = 1+\frac{u^2}{1-u}$, we have 
\[
J_{\mathrm{ext}}^{-}(m, u ; \gamma) = \frac{e^{-\gamma / 2} \sqrt{m} \cdot \frac{u^2}{1-u}-\left(e^{-\gamma / 2}-\sqrt{m}\right)^2}{m+1}
\]
Differentiating in $u$ yields
\[
\partial_u J_{\mathrm{ext}}^{-}(m,u;\gamma)
=
\frac{e^{\gamma/2}\sqrt{m}}{m+1}\cdot\frac{u(2-u)}{(1-u)^2}>0,
\qquad
\partial_{uu}J_{\mathrm{ext}}^{-}(m,u;\gamma)
=
\frac{2e^{\gamma/2}\sqrt{m}}{m+1}\cdot\frac{1}{(1-u)^3}>0,
\]
\end{proof}

\subsubsection{Proof for Theorem~\ref{thm:crra_extended_endogK}}
\label{app:full_thm_and_proof_extended}

\begin{proposition}[Existence, uniqueness, and interiority of the reduced-form optimum]
\label{prop:existence_interior_clean}
Fix time $t$ and condition on $\mathcal F_{t-}$. Let $\theta\in[\theta_{\min},\theta_{\max}]$ and consider the reduced objective
\begin{align}
\label{eq:Phi_clean}
\Phi_t(\theta)
&=
\mu-r^*\theta-\frac{\eta}{2}v_t
+\theta\,\psi_t^N(\theta)
+\sum_{\pm}\frac{\lambda_t^\pm(\theta,v_t)}{1-\eta}\,
\mathbb E\!\left[(1+\theta J_t^\pm(\theta))^{1-\eta}-1\ \big|\ \mathcal F_{t-}\right],
\end{align}
where $J_t^\pm(\theta)=J_{\mathrm{ext}}^\pm(m_t,U_t^\pm(\theta);\gamma)$ and $\psi_t^N(\theta)$ is the proportional noise-fee yield.

Define the admissible set
\begin{equation}
\Theta_{\mathrm{adm}}
=
\Big\{\theta\in[\theta_{\min},\theta_{\max}]:
1+\theta J_t^-(\theta)>0\ \text{a.s.},\;
1+\theta J_t^+(\theta)>0\ \text{a.s.}
\Big\}.
\end{equation}

Assume:
\begin{enumerate}
\item[\textbf{(A1)}] (\emph{Compact admissible set}) $\Theta_{\mathrm{adm}}$ is nonempty and compact.
\item[\textbf{(A2)}] (\emph{Regularity}) $\psi_t^N(\theta)$, $J_t^\pm(\theta)$, and $\lambda_t^\pm(\theta,v_t)$ are twice continuously
differentiable on $\Theta_{\mathrm{adm}}$, and differentiation may be passed under the expectation.
\item[\textbf{(A3)}] (\emph{Local concavity}) $\Phi_t$ is strictly concave on $\Theta_{\mathrm{adm}}$.
\end{enumerate}

Then:
\begin{enumerate}
\item[\textbf{(i)}] (\emph{Existence and uniqueness})
There exists a unique maximizer
$\theta_t^*\in\arg\max_{\theta\in\Theta_{\mathrm{adm}}}\Phi_t(\theta)$.

\item[\textbf{(ii)}] (\emph{First-order condition})
$\Phi_t$ is differentiable on $\mathrm{int}(\Theta_{\mathrm{adm}})$, and any interior maximizer satisfies
$\Phi_t'(\theta_t^*)=0$, where
\begin{align}
\label{eq:Phi_prime_clean}
\Phi_t'(\theta)
&=
-r^*
+\psi_t^N(\theta)+\theta\,\psi_{\theta,t}^N(\theta)
+\sum_{\pm}\Bigg[
\lambda_t^\pm(\theta,v_t)\,H_t^\pm(\theta)
+\frac{\lambda_{\theta,t}^\pm(\theta,v_t)}{1-\eta}\,G_t^\pm(\theta)
\Bigg],
\end{align}
with
\begin{align*}
G_t^\pm(\theta)
&:=\mathbb E\!\left[(1+\theta J_t^\pm(\theta))^{1-\eta}-1\ \big|\ \mathcal F_{t-}\right],\\
H_t^\pm(\theta)
&:=\mathbb E\!\left[
\big(J_t^\pm(\theta)+\theta J_{\theta,t}^\pm(\theta)\big)
(1+\theta J_t^\pm(\theta))^{-\eta}
\ \big|\ \mathcal F_{t-}\right].
\end{align*}

\item[\textbf{(iii)}] (\emph{Interiority test})
The maximizer is interior if and only if the one-sided derivatives satisfy
\[
\Phi_t'(\theta_{\min})>0,
\qquad
\Phi_t'(\theta_{\max})<0.
\]
\end{enumerate}
\end{proposition}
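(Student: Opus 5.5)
The plan is to treat $\Phi_t$ as a deterministic function of the scalar $\theta$ once we condition on $\mathcal F_{t-}$, so that all three claims become statements of one-variable calculus on a compact set, exactly as in the concavity step of the proof of Theorem~\ref{thm:crra_hjb_cfmm}. First I will record that, under (A1) and (A3), $\Theta_{\mathrm{adm}}$ must be an interval $[a,b]\subseteq[\theta_{\min},\theta_{\max}]$. Strict concavity is only meaningful on a convex set, and compact convex subsets of $\mathbb R$ are closed intervals. Throughout, one-sided derivatives at $a,b$ play the role of $\Phi_t'(\theta_{\min}),\Phi_t'(\theta_{\max})$ in (iii); when $\Theta_{\mathrm{adm}}=[\theta_{\min},\theta_{\max}]$ these coincide.

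For (i), continuity of $\Phi_t$ on $\Theta_{\mathrm{adm}}$ follows from (A2): $\psi^N_t$ and $\lambda^\pm_t$ are $C^2$, and $\theta\mapsto \mathbb E[(1+\theta J^\pm_t(\theta))^{1-\eta}]$ is continuous because the integrand is continuous in $\theta$ and positive on the admissible set, with interchange of limit and expectation justified by the same domination assumed in (A2). Weierstrass then gives existence of a maximizer. Uniqueness follows from strict concavity: two distinct maximizers would make the midpoint strictly better.

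For (ii), I will differentiate term by term, passing the derivative under the expectation by (A2). The linear and noise-fee pieces give $-r^*+\psi^N_t+\theta\psi^N_{\theta,t}$. For each jump term $\frac{\lambda^\pm_t}{1-\eta}G^\pm_t$, the product rule gives $\frac{\lambda^\pm_{\theta,t}}{1-\eta}G^\pm_t+\frac{\lambda^\pm_t}{1-\eta}\partial_\theta G^\pm_t$. The chain rule gives
\[
\partial_\theta(1+\theta J)^{1-\eta}=(1-\eta)(J+\theta J_\theta)(1+\theta J)^{-\eta},
\]
so the factor $1-\eta$ cancels and produces $\lambda^\pm_t H^\pm_t$. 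This yields the displayed expression for $\Phi_t'$. At an interior maximizer, Fermat's theorem gives $\Phi_t'(\theta^*_t)=0$.

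For (iii), I will use that a strictly concave differentiable function has a strictly decreasing derivative on $[a,b]$, including the one-sided end values. If $\Phi_t'(a)>0$ and $\Phi_t'(b)<0$, then by continuity of $\Phi_t'$ (from $C^2$ in (A2)) there is a zero in $(a,b)$. That zero is the global maximizer by concavity, and it is unique by (i), so the maximizer is interior. Conversely, suppose the maximizer $\theta^*$ is interior, so $\Phi_t'(\theta^*)=0$. Strict monotonicity of $\Phi_t'$ then forces $\Phi_t'(a)>0>\Phi_t'(b)$.

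The main obstacle is not the calculus. It is making sure the hypotheses really deliver what is used: that $\Theta_{\mathrm{adm}}$ is an interval (I will either add convexity to (A1) or derive it from (A3)), and that the expectations defining $G^\pm_t,H^\pm_t$ are finite near the boundary where $1+\theta J^\pm_t\downarrow 0$. The second point matters because $(1+\theta J)^{-\eta}$ can blow up there. I will handle it by invoking (A2) literally as the domination hypothesis, and note that for $\eta>1$ the boundary behaviour is exactly what compactness of the admissible set rules out.
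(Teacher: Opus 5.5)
Your proposal is correct and follows the paper's own argument: Weierstrass plus strict concavity for (i), term-by-term differentiation with the $(1-\eta)$ cancellation for (ii), and a strictly decreasing $\Phi_t'$ for (iii), with the added care of working at the actual endpoints $a,b$ of $\Theta_{\mathrm{adm}}$ rather than at $\theta_{\min},\theta_{\max}$. One small correction to your closing remark: compactness alone does not keep $\mathbb E[(1+\theta J_t^\pm)^{-\eta}]$ finite at an endpoint, because a.s.\ positivity is not a uniform lower bound. Finiteness therefore rests on (A2), which your main argument already invokes.
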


\begin{proof}
\textbf{(i) Existence and uniqueness.}
Under \textbf{(A2)}, $\Phi_t$ is continuous on $\Theta_{\mathrm{adm}}$.
Under \textbf{(A1)}, the domain is compact, hence $\Phi_t$ attains a maximizer.
Under \textbf{(A3)}, strict concavity implies the maximizer is unique.

\medskip
\textbf{(ii) First-order condition.}
By \textbf{(A2)}, $\Phi_t$ is differentiable on $\mathrm{int}(\Theta_{\mathrm{adm}})$.
Differentiating \eqref{eq:Phi_clean} yields \eqref{eq:Phi_prime_clean}.
Any interior maximizer $\theta_t^*$ must satisfy $\Phi_t'(\theta_t^*)=0$.

\medskip
\textbf{(iii) Interiority.}
Under strict concavity, $\Phi_t'$ is strictly decreasing.
Therefore the maximizer lies in the interior if and only if the marginal net benefit is positive at
$\theta_{\min}$ and negative at $\theta_{\max}$, which gives the stated condition.

\medskip
\textbf{(iv) Structure of derivatives.}
The derivative \eqref{eq:Phi_prime_clean} decomposes naturally into:
(i) the financing cost $r^*$,
(ii) the marginal noise-fee benefit $\psi_t^N+\theta\psi_{\theta,t}^N$,
(iii) the jump-severity channel $\lambda^\pm H_t^\pm$,
and (iv) the frequency channel $\lambda_{\theta,t}^\pm G_t^\pm/(1-\eta)$.

When overrun is small, $J_t^\pm(\theta)=O(U_t^\pm(\theta)^2)$ and $J_{\theta,t}^\pm(\theta)=O(U_t^\pm(\theta))$,
so $H_t^\pm(\theta)$ is dominated by the curvature term.
This ensures local concavity provided that the sensitivity of $\lambda_t^\pm(\theta,v)$ and $\psi_t^N(\theta)$
to pool scaling is sufficiently weak, as formalized in \textbf{(A3)}.
\end{proof}

\begin{proposition}[Sufficient conditions for strict concavity of $\Phi_t$ under endogenous $\lambda^\pm(\theta,v)$]
\label{prop:concavity_clean}
Fix time $t$ and condition on $\mathcal F_{t-}$. Let $\Theta_{\mathrm{adm}}$ be defined as in
\eqref{eq:Theta_adm_oneprop}, and consider the reduced objective
\begin{align}
\label{eq:Phi_concave}
\Phi_t(\theta)
&=
\mu-r^*\theta-\frac{\eta}{2}v_t
+\theta\,\psi_t^N(\theta)
+\sum_{\pm}\frac{\lambda_t^\pm(\theta,v_t)}{1-\eta}\,
\mathbb E\!\left[(1+\theta J_t^\pm(\theta))^{1-\eta}-1\ \big|\ \mathcal F_{t-}\right],
\end{align}
where $J_t^\pm(\theta)=J_{\mathrm{ext}}^\pm(m_t,U_t^\pm(\theta);\gamma)$ and $\psi_t^N(\theta)$ is the proportional noise-fee yield.

Assume the regularity in \textbf{(A2)} of Proposition~\ref{prop:existence_interior_clean} and that $\eta>1$.
Define, for each sign $\pm$, the expectation functionals
\begin{align}
\label{eq:G_H_def}
G_t^\pm(\theta)
&:=\mathbb E\!\left[(1+\theta J_t^\pm(\theta))^{1-\eta}-1\ \big|\ \mathcal F_{t-}\right],\\
H_t^\pm(\theta)
&:=\mathbb E\!\left[
\big(J_t^\pm(\theta)+\theta J_{\theta,t}^\pm(\theta)\big)
(1+\theta J_t^\pm(\theta))^{-\eta}
\ \big|\ \mathcal F_{t-}\right],\nonumber\\
K_t^\pm(\theta)
&:=\mathbb E\!\left[
\big(2J_{\theta,t}^\pm(\theta)+\theta J_{\theta\theta,t}^\pm(\theta)\big)
(1+\theta J_t^\pm(\theta))^{-\eta}
-\eta\big(J_t^\pm(\theta)+\theta J_{\theta,t}^\pm(\theta)\big)^2
(1+\theta J_t^\pm(\theta))^{-\eta-1}
\ \big|\ \mathcal F_{t-}\right].\nonumber
\end{align}

Then $\Phi_t$ is twice differentiable on $\mathrm{int}(\Theta_{\mathrm{adm}})$ and satisfies
\begin{align}
\label{eq:Phi_second_full}
\Phi_t''(\theta)
&=
2\psi_{\theta,t}^N(\theta)+\theta\psi_{\theta\theta,t}^N(\theta)
+\sum_{\pm}\Bigg[
\lambda_t^\pm(\theta,v_t)\,K_t^\pm(\theta)
+2\lambda_{\theta,t}^\pm(\theta,v_t)\,H_t^\pm(\theta)
+\frac{\lambda_{\theta\theta,t}^\pm(\theta,v_t)}{1-\eta}\,G_t^\pm(\theta)
\Bigg].
\end{align}

Moreover, a sufficient condition for strict concavity on $\Theta_{\mathrm{adm}}$ is that, for all $\theta\in\Theta_{\mathrm{adm}}$,
\begin{equation}
\label{eq:concavity_suff}
\eta\sum_{\pm}\lambda_t^\pm(\theta,v_t)\,
\underbrace{\mathbb E\!\left[
\big(J_t^\pm(\theta)+\theta J_{\theta,t}^\pm(\theta)\big)^2
(1+\theta J_t^\pm(\theta))^{-\eta-1}
\ \big|\ \mathcal F_{t-}\right]}_{=:A_t^\pm(\theta)}
\;>\;
B_t(\theta)+N_t(\theta),
\end{equation}
where
\begin{align}
\label{eq:concavity_BN}
B_t(\theta)
&:=
\sum_{\pm}\lambda_t^\pm(\theta,v_t)\,
\underbrace{\mathbb E\!\left[
\big|2J_{\theta,t}^\pm(\theta)+\theta J_{\theta\theta,t}^\pm(\theta)\big|
(1+\theta J_t^\pm(\theta))^{-\eta}
\ \big|\ \mathcal F_{t-}\right]}_{=:B_t^\pm(\theta)},\\
N_t(\theta)
&:=
\Big|2\psi_{\theta,t}^N(\theta)+\theta\psi_{\theta\theta,t}^N(\theta)\Big|
+\sum_{\pm}2\big|\lambda_{\theta,t}^\pm(\theta,v_t)\big|\cdot
\underbrace{\mathbb E\!\left[
\big|J_t^\pm(\theta)+\theta J_{\theta,t}^\pm(\theta)\big|
(1+\theta J_t^\pm(\theta))^{-\eta}
\ \big|\ \mathcal F_{t-}\right]}_{=:C_t^\pm(\theta)} \nonumber\\
&\quad
+\sum_{\pm}\Big|\frac{\lambda_{\theta\theta,t}^\pm(\theta,v_t)}{1-\eta}\Big|\cdot
\big|G_t^\pm(\theta)\big|.\nonumber
\end{align}

In particular, under $\eta>1$ we have $G_t^\pm(\theta)\le 0$ for all admissible $\theta$, so the last term in $N_t(\theta)$ is controlled
by the magnitude of $\lambda_{\theta\theta,t}^\pm(\theta,v_t)$.

Finally, suppose $K(\theta)=\bar K\theta^2$ and $\lambda_t^\pm(\theta,v)$ is given by \eqref{eq:lambda_gas_monotone}, i.e.
\[
\lambda_t^\pm(\theta,v)=2\bar\lambda^\pm\Big(1-\Phi(z(\theta,v))\Big),
\qquad z(\theta,v):=\sqrt{\frac{2g(v)}{v\,(\bar K\theta^2)^\beta}}.
\]
Then
\begin{align}
\label{eq:lambda_derivs_closed}
\lambda_{\theta,t}^\pm(\theta,v)
&=
2\beta\,\bar\lambda^\pm\;\phi(z)\;\frac{z}{\theta}>0,\\
\lambda_{\theta\theta,t}^\pm(\theta,v)
&=
2\beta\,\bar\lambda^\pm\;\phi(z)\;\frac{z}{\theta^2}\Big(\beta z^2-(\beta+1)\Big),
\end{align}
so a convenient sufficient condition to keep the $\lambda_{\theta\theta}$-term from harming concavity is the ``active-region'' bound
\begin{equation}
\label{eq:active_region}
z(\theta,v)^2 \le 1+\frac{1}{\beta}
\quad\Longrightarrow\quad
\lambda_{\theta\theta,t}^\pm(\theta,v)\le 0.
\end{equation}
\end{proposition}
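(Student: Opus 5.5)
The second-derivative formula follows from differentiating $\Phi_t$ twice, term by term. By (A2), $\psi^N$, $J_t^\pm$ and $\lambda_t^\pm$ are $C^2$ and derivatives can be passed under $\mathbb E[\cdot\mid\mathcal F_{t-}]$.

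\textbf{Step 1: the second-derivative formula.}
\begin{itemize}
\item The noise term $\theta\psi^N$ has first derivative $\psi^N+\theta\psi^N_\theta$ and second derivative $2\psi^N_\theta+\theta\psi^N_{\theta\theta}$.
\item For each jump term, write it as $\lambda^\pm G^\pm/(1-\eta)$. A direct calculation gives $\partial_\theta G^\pm=(1-\eta)H^\pm$: differentiate $(1+\theta J)^{1-\eta}$ and use $\partial_\theta(\theta J)=J+\theta J_\theta$.
\item Differentiating $H^\pm$ once more by the product rule gives $K^\pm$. The factor $(2J_\theta+\theta J_{\theta\theta})$ comes from $\partial_\theta(J+\theta J_\theta)$, and $-\eta(J+\theta J_\theta)^2(1+\theta J)^{-\eta-1}$ comes from differentiating $(1+\theta J)^{-\eta}$.
\item Leibniz's rule on $\lambda^\pm G^\pm/(1-\eta)$ then yields $\lambda^\pm K^\pm+2\lambda^\pm_\theta H^\pm+\lambda^\pm_{\theta\theta}G^\pm/(1-\eta)$.
\end{itemize}
Together these give \eqref{eq:Phi_second_full}. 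The first-derivative version is consistent with \eqref{eq:Phi_prime_clean} in Proposition~\ref{prop:existence_interior_clean}.

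\textbf{Step 2: the sufficient condition for concavity.} Split $K^\pm$ as $K^\pm=\mathbb E[(2J_\theta+\theta J_{\theta\theta})(1+\theta J)^{-\eta}]-\eta A^\pm$. On $\Theta_{\mathrm{adm}}$ we have $1+\theta J>0$, so $(1+\theta J)^{-\eta}>0$. Hence the first piece is at most $B^\pm$. Bound the remaining terms by absolute values:
\begin{itemize}
\item $2\lambda_\theta H\le 2|\lambda_\theta|C^\pm$;
\item the $\lambda_{\theta\theta}$-term is at most $|\lambda_{\theta\theta}/(1-\eta)||G|$;
\item the noise term is at most its absolute value.
\end{itemize}
Using $\lambda^\pm\ge0$, this gives $\Phi''\le -\eta\sum\lambda^\pm A^\pm+B_t+N_t$. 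So \eqref{eq:concavity_suff} implies $\Phi''<0$ pointwise, and hence strict concavity on the interval $\Theta_{\mathrm{adm}}$.

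For the sign of $G^\pm$: when $\eta>1$ we have $1-\eta<0$. If $\theta J\ge0$, then $1+\theta J\ge1$ and $(1+\theta J)^{1-\eta}\le1$, so $G\le0$. This is not automatic, since $J_{\mathrm{ext}}$ can have either sign. I would state it under the restriction that $\theta J\ge0$, or else note that only $|G|$ enters $N_t$, which is all the concavity bound needs.

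\textbf{Step 3: the closed-form derivatives of $\lambda^\pm$.}
\begin{itemize}
\item Write $\lambda=2\bar\lambda(1-\Phi(z))$ with $z\propto\theta^{-\beta}$. Then $z_\theta=-\beta z/\theta$ and $\lambda_\theta=-2\bar\lambda\phi(z)z_\theta=2\beta\bar\lambda\phi(z)z/\theta>0$.
\item Differentiate again using $\phi'(z)=-z\phi(z)$ and $\partial_\theta(z/\theta)=-(\beta+1)z/\theta^2$. This gives $\lambda_{\theta\theta}=2\beta\bar\lambda\phi(z)\tfrac{z}{\theta^2}(\beta z^2-(\beta+1))$.
\item Since $z>0$, the sign of $\lambda_{\theta\theta}$ is the sign of $\beta z^2-(\beta+1)$. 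This is $\le0$ exactly when $z^2\le1+1/\beta$, which is \eqref{eq:active_region}.
\end{itemize}
With $\lambda_{\theta\theta}\le0$ and $G\le0$, the product $\lambda_{\theta\theta}G/(1-\eta)$ is $\le0$, so it helps concavity and can even be dropped from $N_t$.

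\textbf{Main obstacle.} The main difficulty is the bookkeeping in Step 1 and the sign of $G$ in Step 2, which requires the restriction $\theta J\ge0$ noted above. The rest is routine.
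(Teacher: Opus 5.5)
Your proposal is correct and follows essentially the same route as the paper's proof. Both arguments differentiate term by term using $\partial_\theta G^\pm=(1-\eta)H^\pm$ and $\partial_\theta H^\pm=K^\pm$, split $K^\pm=-\eta A^\pm+\widetilde B^\pm$ and bound the remaining terms by the triangle inequality, and differentiate the Gaussian tail directly to get $\lambda^\pm_\theta,\lambda^\pm_{\theta\theta}$.

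Your caveat on the sign of $G^\pm$ is right; the paper asserts that sign without proof. When $\theta J_t^\pm<0$ (e.g.\ small overrun, where $J_{\mathrm{ext}}^\pm\approx J^\pm<0$ with $\theta>0$), you get $1+\theta J<1$ and hence $G^\pm>0$ for $\eta>1$. In that case the active-region bound $\lambda^\pm_{\theta\theta}\le 0$ makes the $\lambda^\pm_{\theta\theta}G^\pm/(1-\eta)$ term nonnegative, so conditioning your final remark on $G^\pm\le 0$ is exactly right. The concavity bound itself uses only $|G^\pm|$ and is unaffected.
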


\begin{proof}
Differentiate $\Phi_t(\theta)$ in \eqref{eq:Phi_concave}.
The fee component gives
$\frac{d}{d\theta}(\theta\psi_t^N)=\psi_t^N+\theta\psi_{\theta,t}^N$ and
$\frac{d^2}{d\theta^2}(\theta\psi_t^N)=2\psi_{\theta,t}^N+\theta\psi_{\theta\theta,t}^N$.

For each sign $\pm$, write the jump block as
\[
\Phi_t^{J,\pm}(\theta)=\frac{\lambda_t^\pm(\theta,v_t)}{1-\eta}G_t^\pm(\theta).
\]
By the chain rule and the admissibility $1+\theta J_t^\pm(\theta)>0$ a.s.,
\[
\frac{d}{d\theta}G_t^\pm(\theta)
=(1-\eta)H_t^\pm(\theta),
\qquad
\frac{d}{d\theta}H_t^\pm(\theta)=K_t^\pm(\theta).
\]
Therefore
\[
\frac{d}{d\theta}\Phi_t^{J,\pm}(\theta)
=
\lambda_t^\pm(\theta,v_t)H_t^\pm(\theta)
+
\frac{\lambda_{\theta,t}^\pm(\theta,v_t)}{1-\eta}G_t^\pm(\theta),
\]
and differentiating again yields the decomposition in \eqref{eq:Phi_second_full}:
\[
\frac{d^2}{d\theta^2}\Phi_t^{J,\pm}(\theta)
=
\lambda_t^\pm K_t^\pm
+2\lambda_{\theta,t}^\pm H_t^\pm
+\frac{\lambda_{\theta\theta,t}^\pm}{1-\eta}G_t^\pm.
\]

To obtain the sufficient concavity condition, decompose $K_t^\pm(\theta)$ in \eqref{eq:G_H_def} into its negative CRRA-curvature term
and its sensitivity term:
\[
K_t^\pm(\theta)
=
-\eta A_t^\pm(\theta)
+\widetilde B_t^\pm(\theta),
\qquad
|\widetilde B_t^\pm(\theta)|\le B_t^\pm(\theta),
\]
where $A_t^\pm,B_t^\pm$ are defined in \eqref{eq:concavity_suff}--\eqref{eq:concavity_BN}.
Substituting this bound into \eqref{eq:Phi_second_full} and applying triangle inequalities to the remaining terms
yields the sufficient condition \eqref{eq:concavity_suff}.

Finally, under $K(\theta)=\bar K\theta^2$ and \eqref{eq:lambda_gas_monotone}, the closed forms in
\eqref{eq:lambda_derivs_closed} follow by differentiating the tail probability
$\mathbb P(|\Delta|\ge \sqrt{2g(v)/K(\theta)^\beta}\mid v)=2(1-\Phi(z(\theta,v)))$ and using $\Phi'(z)=\phi(z)$.
The active-region bound \eqref{eq:active_region} is the condition under which $\beta z^2-(\beta+1)\le 0$, implying
$\lambda_{\theta\theta,t}^\pm(\theta,v)\le 0$.
\end{proof}

\subsubsection{Proof for Theorem~\ref{thm:hump_theta_v}}
\label{app:proof_for_hump}
\begin{proof}
By the implicit function theorem, whenever $F_\theta\neq 0$,
\[
\frac{d\theta^*}{dv} = -\frac{F_v}{F_\theta}.
\]
Under Assumption 1, $F_\theta(\theta^*(v),v)<0$, so the sign of $d\theta^*/dv$ equals the sign of $F_v$.

\emph{(Rising region).} At $v_0$, Assumption 4 gives $F_v(\theta^*(v_0),v_0)>0$, hence $d\theta^*/dv(v_0)>0$.

\emph{(Falling tail).} By Lemma~\ref{lem:noise_trader_fee} and $g(v)\to\infty$, we have
$\mathbb E[\dot F^N\mid \mathcal F_{t-}]\to 0$ and hence $\psi^N(\theta,v)\to 0$ for any fixed $\theta$.
Next, using \eqref{eq:lambda_gas_monotone} with $\Delta\mid v\sim\mathcal N(0,v)$,
\[
\mathbb P\!\left(|\Delta|\ge \sqrt{\frac{2g(v)}{K(\theta)^\beta}}\ \Big|\ v\right)
=
2\Big(1-\Phi\Big(\sqrt{\frac{2g(v)}{K(\theta)^\beta v}}\Big)\Big).
\]
Since $g(v)/v$ is increasing and diverges to $\infty$, the argument of $\Phi(\cdot)$ diverges, so the tail probability and hence
$\lambda^\pm(\theta,v)$ converge to $0$ for any fixed $\theta$.
Therefore the entire jump block in \eqref{eq:FOC_reduced_cs} vanishes as $v\to\infty$ (even though $u(\theta,v)$ may grow, it is multiplied
by $\lambda^\pm(\theta,v)\to 0$).
In the limit, $F(\theta,v)\to -r^*<0$ uniformly over interior $\theta$, so the interior FOC cannot be satisfied at large $v$;
the optimizer is pushed to the lower boundary $\theta_{\min}$.
Hence $\theta^*(v)\to\theta_{\min}$ as $v\to\infty$, implying $\theta^*(v)$ is eventually decreasing.

\emph{(Hump).} Since $\theta^*$ increases near $v_0$ but converges back toward $\theta_{\min}$ as $v\to\infty$, it must attain a maximum at some
finite $v^*$.
\end{proof}

\end{document}